\newtheorem {prop}{Proposition}[section]
\newtheorem{assumption}{Assumption}
\newtheorem{ppt}{Property}[section]
\newtheorem{thm}{Theorem}[section]
\newtheorem{lemm}{Lemma}[section]
\newtheorem{remark}{Remark}[section]
\newtheorem{cor}{Corollary}[section]
\newenvironment{proof}{\textsc{Proof:}}{\mbox{ } \hfill $\Box$ \vspace{2mm}}
\numberwithin{equation}{section}
\renewcommand{\d}{\mathrm{d}}
\newcommand{\R}{\mathbb R}
\newcommand{\E}{\mathbb E}
\newcommand{\F}{\mathbb F}
\newcommand{\G}{\mathcal{G}}
\newcommand{\lo}{\lambda^o}
\newcommand{\D}{\mathfrak{D}}
\newcommand{\li}{\lambda^I} 
\newcommand{\lp}{\lambda^P} 
\newcommand{\uli}{\underline{\lambda}^I}
\newcommand{\ulp}{\underline{\lambda}^P}
\newcommand{\Pn}{P^{(n)}}
\newcommand{\psin}{\psi^{(n)}}
\renewcommand{\L}{\mathcal{L}}
\newcommand{\qr}{q^r}
\newcommand{\qli}{q^{\li}}
\newcommand{\pili}{\pi^{\li}}
\newcommand{\lia}{\left(\li-\uli \right)}
\newcommand{\lpa}{\left(\lp-\ulp \right)}
\renewcommand{\wr}{W^r}
\newcommand{\wi}{W^{I}}
\newcommand{\fn}{f^{(n)}}
\newcommand{\phin}{\phi^{(n)}}
\newcommand{\dd}{\displaystyle}
\renewcommand{\l}{\lambda}
\newcommand{\0}{{\bf0}}
\newcommand{\1}{{\bf1}}
\renewcommand{\b}{\beta}
\newcommand{\gamman}{\gamma^{(n)}}
\newcommand{\Gamman}{\Gamma^{(n)}}
\begin{document}

\title{Hedging Pure Endowments with Mortality Derivatives}

\author{Ting Wang 
\thanks{Department of Mathematics, University of Michigan, Ann Arbor, MI 48109,  email: wting@umich.edu.}
\and Virginia R. Young \thanks{ Department of Mathematics,
University of Michigan, Ann Arbor, Michigan, 48109, email:vryoung@umich.edu.  V.\ R.\ Young thanks the Nesbitt Professorship of Actuarial Mathematics for financial support.} }

\date{30 September 2010}

\maketitle

\begin{abstract}
In recent years, a market for mortality derivatives began developing as a way to handle systematic mortality risk, which is inherent in life insurance and annuity contracts.  Systematic mortality risk is due to the uncertain development of future mortality intensities, or {\it hazard rates}. In this paper, we develop a theory for pricing pure endowments when hedging with a mortality forward is allowed.  The hazard rate associated with the pure endowment and the reference hazard rate for the mortality forward are correlated and are modeled by diffusion processes.  We price the pure endowment by assuming that the issuing company hedges its contract with the mortality forward and requires compensation for the unhedgeable part of the mortality risk in the form of a pre-specified instantaneous Sharpe ratio. The major result of this paper is that the value per contract solves a linear partial differential equation as the number of contracts approaches infinity.  One can represent the limiting price as an expectation under an equivalent martingale measure.  Another important result is that hedging with the mortality forward may raise or lower the price of this pure endowment comparing to its price without hedging, as determined in \cite{BayraktarMilevskyPromislow2008Valuation}.  The market price of the reference mortality risk and the correlation between the two portfolios jointly determine the cost of hedging.  We demonstrate our results using numerical examples.

\emph{Keywords.} Life annuities, longevity risk, $q$-forward, mortality-linked derivatives, instantaneous Sharpe ratio, incomplete market

\end{abstract}


\section{Introduction and Motivation}

A basic assumption in many actuarial texts is that mortality risk can be eliminated based on the law of large number.  It is believed that the standard deviation per insurance policy vanishes as the number of policies sold becomes large enough.  However, this assumption is valid only when the mortality intensity is deterministic, and a number of recent researchers argue that mortality intensity, or {\it hazard rate}, is stochastic; see, for example, \cite{DowdBlakeCairnsDawson2006} and the references therein. The uncertainty of hazard rates is significant enough that stochastic mortality risk has to be considered in the valuation of life insurance and annuity contracts and in pension fund management.  A concrete example of stochastic mortality risk is longevity risk, namely, the risk that future lifetimes will be greater than expected.  Longevity risk has attracted much attention in recent years, and many capital market instruments have been proposed  to deal with this risk for annuity providers and pension funds; see \cite{DowdBlakeCairnsDawson2006}, \cite{BlakeBurrows2001}, and \cite{blake2006living} for more details.  

However, few researchers have focused on the effectiveness of hedging mortality risk with the proposed mortality-linked derivatives; one notable exception is the work of \cite{LinCox2005Securitization}.  In this paper, we investigate the application of mortality-linked derivatives for hedging mortality risk and offer suggestions for further mortality-linked innovation based on our analysis. To this end, we select a stochastic model to describe the mortality dynamics. Several stochastic mortality models have been proposed in the recent literature.  \cite{MilevskyPromislow2001}, \cite{Biffis2005Affine}, \cite{Schrager2006Affine}, and \cite{Dahl2004} use continuous-time diffusion processes to model the hazard rate, as we do in this paper.  Alternatively, \cite{MiltersenPersson2005} and \cite{CairnsBlakeDowd2006} model the forward mortality.  \cite{MilidonisLinCox2010} incorporate mortality state changes into the mortality dynamics with a discrete-time Markov regime-switching model. Also, see \cite{CairnsBlakeDowd2006Pricing} for a detailed overview of various modeling frameworks.   In this paper, we use the model proposed by  \cite{BayraktarMilevskyPromislow2008Valuation} to describe the dynamics of both hazard rates: $\lp_t$, the one inherent in the insurance contract to be hedged, and $\li_t$, the one referenced by the mortality-linked derivative. 

Another issue is the choice of pricing paradigm.  Different methods for pricing mortality risk have been proposed in recent literatures, and \cite{bauer2009pricing} extensively discusses them.  Among these methods,  \cite{BayraktarMilevskyPromislow2008Valuation}  developed a dynamic pricing theory, which can be considered as a continuous version of the actuarial standard deviation premium principle.  In our paper, we extend their pricing mechanism to a market that includes mortality-linked derivatives.  We price a pure endowment assuming that the issuing company hedges its contract with a mortality forward in order to minimize the variance of the value of the hedging portfolio and then requires compensation for the  unhedgeable part of the mortality risk in the form of a pre-specified instantaneous Sharpe ratio.  

The main purpose of this paper is to investigate the hedging of life insurance and annuity contracts with mortality-linked derivatives.  To this end,  we develop a partial differential equation (PDE) whose solution is the value of the hedged insurance contract.  We compare the values of the hedged contract under different market prices of mortality risk.  We also analyze how the correlation between $\lp_t$ and $\li_t$ affects the values of the hedged contract. The main contribution of our paper is to show that hedging can reduce the price of the insurance contract only under certain conditions on the correlation of the hazard rates and on the market price of mortality risk.  As part of the procedure, we also show that the desired features of the pricing mechanism by \cite{MilevskyPromislowYoung2005Financial} and \cite{BayraktarMilevskyPromislow2008Valuation} still hold in our extension. 

The remainder of this paper is organized as follows:  In Section \ref{sec:market}, we present our financial market, describe the pricing mechanism of the pure endowment in a market with mortality-linked derivatives, and derive a non-linear PDE whose solution is the value of the hedged pure endowment.  In Section \ref{sec:prop_Pn}, we analyze the value $P^{(n)}$ of $n$ pure endowments on conditionally independent and identically distributed lives, with the emphasis on how  the correlation of the hazard rates and the market price of mortality risk affect the price of the hedged pure endowments.  We then present the PDE that gives the limiting value of $\frac{1}{n}P^{(n)}$  as $n$ goes to infinity in Section \ref{sec:limit}.  We show that this limiting value solves a {\it linear} PDE and represent this value as an expectation with respect to an equivalent martingale measure.  In Section \ref{sec:numerical}, we demonstrate our results with numerical examples, discuss whether and when the hedging with mortality-risk derivatives reduces the price of pure endowments, and provide suggestions on the application of mortality-linked derivatives for insurance companies.  We describe a numerical scheme to compute the value of a pure endowment in Section \ref{sec:appendix}.  Section \ref{sec:conclusion} concludes the paper.

\section{\label{sec:market} Incomplete Market of Financial and Mortality Derivatives}

In this section, we describe the pure endowment contract and the financial market in which the issuer of the contract invests  to hedge the risk.  In the financial market, there are three products:  a money market fund, a bond, and a mortality derivative.  We obtain the optimal strategy to hedge the risk of the contract with bonds and mortality derivatives in order to minimize the variance of the value of the investment portfolio.  We, then, price the pure endowment using the instantaneous Sharpe ratio.


\subsection{Mortality Model and Financial Market}
First, we set up the model for the dynamics of hazard rates--either the hazard rate for the pure endowment or the one for the mortality derivative.  We assume that a hazard rate $\lambda_t$ follows a diffusion process with some positive lower bound $\underline{\lambda}$.  Thus, we require that as $\lambda_t$ goes to $\underline{\lambda}$, the drift of $\lambda_t$ is positive and the volatility of $\lambda_t$ approaches $0$. Biologically, the lower bound $\underline{\lambda}$ represents the remaining hazard rate after all accidental or preventable causes of death have been removed.  Mathematically, the need for such a lower bound appears later in this paper.

Specifically, we use the following diffusion model for a hazard rate:
\begin{equation}
\label{eqn_l}
\d \l_t=a(\l_t,t)\left(\l_t-\underline{\l} \right)\d t+b(t) \left(\l_t-\underline{\l} \right)\d W_t,
\end{equation}
in which $W$ is a standard Brownian motion on a filtered probability space $(\Omega, {\mathbb F}, ({\mathbb F}_t)_{t \ge 0}, {\mathbb P})$.  We require that the volatility $b(t)$ is a continuous function of $t$ and is bounded from below by a positive constant $\kappa$ in $[0,T]$.  We also assume that $a(\l_t, t)$ is H\"older continuous with respect to $\l$ and $t$, and that $a(\l_t, t)>0$ when $0 <\l_t-\underline{\l}<\epsilon$ for some $\epsilon>0$.

In this paper, we consider two different but correlated hazard rates.  One is  the hazard rate of the insured population; namely, the hazard rate of the people who purchase the pure endowments.  For simplicity, when we consider a portfolio of $n$ pure endowment contracts in this paper, we assume that all individuals are of the same age and are subject to the same hazard rate.  We denote as $\lp_t$ the hazard rate of insured population, and the dynamics of $\lp_t$ is given by
\begin{equation}
\label{lp}
\d \l^P_t=a^P(\l^P_t,t)\left(\l^P_t-\ulp \right)\d t+b^P(t)\left(\l^P_t-\ulp \right)\d W^P_t. 
\end{equation}
We also consider a second hazard rate on which  the  mortality derivatives are based, namely, the hazard rate of an indexed population.  We denote this hazard rate as $\li_t$, whose dynamics is given by
\begin{equation}
\label{li}
\d \l^I_t=a^I(\l^I_t,t)\left(\l^I_t-\uli \right)\d t+b^I(t)\left(\l^I_t-\uli \right)\d W^I_t.
\end{equation}
The uncertainties of the two hazard rates are correlated such that $\d W^I_t \, \d W^P_t = \rho \, \d t$ with $\rho\in[-1,1]$. 

Suppose, at time $t=0$, an insurer issues a {\it pure endowment} to an individual that pays \$1 at time $T$ if the individual is alive at that time. To price this contract, we will create a portfolio composed of the obligation to pay this pure endowment and investment in the financial market. 

In the financial market, the dynamics of the short rate $r_t$ is given by 
\begin{equation} \label{eq:shortrate}
\d r_t=\mu(r_t,t) \, \d t +\sigma(r_t,t) \, \d W_t^r
\end{equation}
in which $\mu$ and $\sigma \ge 0$ are deterministic functions of the short rate and time, and $W^r$ is a standard Brownian motion adapted to $(\Omega, {\mathbb F}, ({\mathbb F}_t)_{t \ge 0}, {\mathbb P})$. We assume that $W^r$ is independent of $W^P$ and $W^I$, and that $\mu$ and $\sigma$ are such that $r_t > 0$ almost surely for all $t \ge 0$ and such that \eqref{eq:shortrate} has a unique solution. 

Both the $T$-bond and the mortality derivative are priced based on the principle of no-arbitrage. Thus, for the short rate $r$, there exists a market price of risk $q^r$ that is adapted to the filtration generated by $W^r$; and for the hazard rate $\li_t$, there exists a market price of risk $q^{\li}$ that is adapted to the filtration generated by $W^I$. We, therefore, write either $q^r_t $ or $q^r(r_t, t)$ for the market price of the short rate risk at time $t$; similarly, we write either $q^{\li}_t$ or $q^{\li}(\li_t, t)$ for the market price of the hazard rate risk at time $t$.

Define an equivalent martingale measure ${\mathbb Q}$ whose Radon-Nikodym derivative with respect to ${\mathbb P}$  is given by 
 \begin{equation}
\dfrac{\d {\mathbb Q}}{\d {\mathbb P}} = \exp\left\{ -\int_0^T \left[ \qr(r_s, s) \, \d \wr_s + \qli(\lambda^I_s, s) \, \d \wi_s \right] - \frac{1}{2} \int_0^T \left[  \left( \qr(r_s, s) \right)^2+\left(\qli(\lambda^I_s, s)\right)^2\right] \d s  \right \}.
\end{equation}
In the $\mathbb Q$-space, the dynamics of the hazard rates and the short rate are given by
\begin{equation}
\begin{cases}
\d \l^P_t=a^{P,Q}(\l^P_t, \li_t, t) \left(\l^P_t-\ulp \right) \d t + b^P(t) \left(\l^P_t-\ulp \right) \d W^{P,Q}_t,\\
\d \l^I_t=a^{I,Q}(\l^I_t,t) \left(\l^I_t-\uli \right) \d t + b^I(t) \left(\l^I_t-\uli \right) \d W^{I,Q}_t,\\
\d r_t=\mu^Q(r_t,t) \, \d t + \sigma(r_t,t) \, \d W_t^{r,Q},\\
\end{cases}
\end{equation}
in which
\begin{equation}
\begin{cases}
W_t^{P,Q}=W_t^{P} + \rho \int_0^t q^{\li}(\l^I_s, s) \, \d s,\\
W_t^{I,Q}=W_t^{I}+\int_0^t  q^{\li}(\l^I_s, s) \, \d s,\\  
W_t^{r,Q}=W_t^{r}+\int_0^t  q^{r}(r_s, s) \, \d s,\\
 \end{cases}
\end{equation}
and 
\begin{equation}
\begin{cases}\label{muq}
a^{P,Q}(\l^I_t, \l^P_t, t) = a^P(\l^P_t,t) -  \rho \, \qli(\l^I_t,t) \, b^P(t),\\
a^{I,Q}(\l^I_t, t) = a^I(\l^I_t, t) - \qli(\l^I_t, t) \, b^I(t), \\
\mu^Q(r_t, t) = \mu(r_t, t) - \qr(r_t, t) \, \sigma(r_t, t).
\end{cases}
\end{equation}

The time-$t$ price of the $T$-bond is given by
\begin{equation}
F(r, t ;T) = \E^Q \left[e^{-\int_t^T r_s \, ds} \bigg| \, r_t = r \right],
\end{equation}
and the bond price $F$ solves the following partial differential equation (PDE), \cite{Bjork2004Arbitrage}:
\begin{equation}\label{FK_F}
\begin{cases}
F_t +\mu^Q(r,t) \, F_r +\frac{1}{2}\sigma^2(r,t) \, F_{rr} - rF = 0, \\
F(r, T; T) = 1.
\end{cases}
\end{equation}
From this PDE, we obtain the dynamics of $F$ for $t \le s \le T$:
\begin{equation}\label{eqn_F}
\begin{cases}
\d F(r_s, s)=\left[r_sF(r_s, s) + q^r(r_s, s) \, \sigma(r_s,s) \, F_r(r_s, s)\right] \d s + \sigma(r_s, s) \, F_r(r_s, s) \, \d W^r_s\\
F(r_t, t) = F(r, t).
\end{cases}
\end{equation}

Without loss of generality, we specify the mortality derivative as a $q$-forward.  Define the cumulative hazard rate process by $\Lambda^I_t = \int_0^t \li_s \, ds$ for $0 \le t \le T$.  Then, the time-$t$ value of a $q$-forward with delivery time $T$ is given by 
\begin{equation} \label{qforward}
S\left(r, \li, \Lambda^I, t; T \right) = \E^Q\left[e^{-\int_t^T r_s \, \d s}\left(e^{-\int_0^T \li_s \, \d s}-K\right)\bigg| \, r_t = r, \li_t = \li, \Lambda^I_t = \Lambda^I \right]. 
\end{equation}
in which $K = \E^Q\left[ e^{-\int_0^T \li_t \, \d t}\bigg| \, \F_0 \right]$ is the delivery price. As for the $T$-bond, we have the following PDE for the mortality derivative:
\begin{equation}\label{eqn_S}
\begin{cases}
S_t +\mu^Q \, S_r +\frac{1}{2}\sigma^2 \, S_{rr} + a^{I, Q} \cdot \lia \, S_{\li} + \dfrac{1}{2}\left(b^I \right)^2\left(\li-\uli\right)^2 S_{\li \li} + \li \, S_{\Lambda^I} - rS = 0, \\
S(r, \li, \Lambda^I, T; T) = e^{-\Lambda^I} - K.
\end{cases}
\end{equation}
From the PDE above, we obtain the dynamics of $S$ for $t \le s \le T$:
\begin{equation}\label{FK_S}
\begin{cases}
\d S_s = \left[r_sS + q^r_s \, \sigma S_r+q^{\li}_s \, b^I \cdot \left(\li_s - \uli \right)S_{\li}\right] \d s +\sigma \, S_r \, \d W^r_s+b^I \cdot \left(\li_s - \uli \right)S_{\li} \, \d W^{I}_s, \\
S(r_t,\li_t, \Lambda^I_t, t) = S(r, \li, \Lambda^I, t).
\end{cases}
\end{equation}
Since we fixed the maturity $T$ for both the bond and the $q$-forward in this paper, we drop the notation $T$ when appropriate.  

\subsection{Pricing the Pure Endowment via the Instantaneous Sharpe Ratio }

\subsubsection{\label{sec:recipe}Recipe for Valuation}
Even with the mortality derivatives, the market for insurance is incomplete when $\rho \ne \pm 1$ due to the fact that the mortality of the insured population and the mortality of the indexed population are not perfectly correlated. This mismatch is called {\it basis risk}; see \cite{CoughlanEpsteinSinha2007} for more details.  Therefore, there is no unique method of pricing for insurance contracts, and to value contracts in this market, one has to choose a pricing mechanism. For example,  \cite{BayraktarLudkovski2009Relative} used indifference pricing and \cite{DahlMoller2006Valuation} consider the set of equivalent martingale measures when pricing the unhedgeable mortality risk.  We use the instantaneous Sharpe ratio proposed by \cite{MilevskyPromislowYoung2005Financial} and \cite{BayraktarMilevskyPromislow2008Valuation} to price the risk due to the desirable properties of the resulting price.  We will show these properties in Section \ref{sec:prop_Pn}.  Moreover, as the number of contracts approaches infinity, the limiting price per contract can be represented as an expectation with respect to an equivalent martingale measure. In a market without mortality derivatives, this pricing methodology has been proved useful for pricing pure endowments (\cite{MilevskyPromislowYoung2005Financial}), life insurance (\cite{Young2008Pricing}), life annuities (\cite{BayraktarMilevskyPromislow2008Valuation}), and financial derivatives (\cite{BayraktarYoung2007Pricing}). In this paper, we extend this pricing mechanism to incorporate mortality derivatives in the financial market.  Our method for pricing in an incomplete market with mortality derivatives is as follows:
\begin{enumerate}
\item First, we set up a portfolio composed of two parts: (1) the obligation to underwrite the pure endowment, and (2) a self-financing sub-portfolio of $T$-bonds, $q$-forwards maturing at $T$, and money market funds to partially hedge the pure endowment contract. 
\item Second, we find the optimal investments in bonds and mortality derivatives to minimize the local variance of the portfolio. This method is called local risk minimization by \cite{Schweizer2001A-Guided}. In case of a complete market, the minimized local volatility is zero. However, the incompleteness of the insurance market leads to residual risk, as measured by the local variance. 
\item Third, we assume that the insurance provider requires compensations for the unhedgeable risk. The price of the contingent claim is set to make the instantaneous Sharpe ratio of the total portfolio equal to a pre-specified value. This is equivalent to setting the price of the contingent claim such that the drift of the portfolio equals the short rate times the portfolio value plus the pre-specified Sharpe ratio times the local standard deviation of the portfolio.  Thus, our pricing method is a type of local standard deviation premium principle, \cite{Young2004}.
\end{enumerate}
\subsubsection{\label{sec:P}Hedging and Pricing a Single Pure Endowment}
Denote by $P(r, \li, \lp, t; T)$ the time-$t$ value of a pure endowment that pays \$1 at maturity $T$  if the individual is alive at that time.  Here, we explicitly recognize that the price of the pure endowment depends on the short rate $r$, the hazard rate $\lp$ of the insured individual, and the hazard rate $\li$ of the indexed population. Since the maturity $T$ is fixed, we simplify the notation to $P(r, \li, \lp, t)$. (By writing $P$ to represent the value of the pure endowment, we assume that the individual is alive. If the individual dies before $T$, the value of the pure endowment jumps to \$0.) 

Suppose the insurer creates a portfolio $\Pi$ as described in Step (i) in Section \ref{sec:recipe}.  This portfolio is consist of two parts: (1) the obligation to underwrite the pure endowment with value $-P$, and (2) a self-financing sub-portfolio $V_t$ of $T$-bonds,  $q$-forwards,  and money market funds to hedge the risk of the pure endowment. Thus, $\Pi_t = -P(r_t, \li_t, \lp_t, t) + V_t$.  Let $\pi^r_t$ equal the number $T$-bonds and $\pi^{\li}_t$ the number of $q$-forwards in the self-financing sub-portfolio at time $t$ with the rest, namely, $V_t -\pi^r_t F(r_t, t) - \pi^{\li}_t S(r_t, \li_t, t)$, in money market funds. 

By It\^o's lemma, the dynamics of the value of the pure endowment $P(r, \li, \lp, t)$ in the {\it physical} probability space is  given by 
\begin{equation}\label{dyn_P}
\begin{split}
\d P( r, \li, \lp,t) &= P_t \, \d t + P_r \, \d r_t + P_{\li} \, \d \li_t  + P_{\lp} \, \d \lp_t + \dfrac{1}{2} P_{rr} \, \d[r,r]_t \\
&\quad + \dfrac{1}{2} P_{\li \lp} \, \d[\li,\li]_t + P_{\li \lp} \, \d[\li, \lp]_t + \dfrac{1}{2} P_{\lp \lp} \, \d[\lp,\lp]_t  - P \, \d N_t\\
 &= \left[ P_t  +  \mu \, P_r + a^I \cdot \left( \li_t - \uli \right) P_{\li} + a^P \cdot \left( \lp_t - \ulp \right) P_{\lp}  \right] \d t \\
 & \quad + \left[ \dfrac{1}{2} \, \sigma^2 \, P_{rr} + \frac{1}{2}  \left( b^I \right)^2 \left( \li_t - \uli \right)^2 P_{\li \li} + \dfrac{1}{2} \left( b^P \right)^2 \left( \lp_t - \ulp \right)^2 P_{\lp \lp}  \right] \d t\\
   & \quad + \rho\ b^I b^P \left( \li_t - \uli \right) \left( \lp_t - \ulp \right)  P_{\li \lp} \, \d t -  P \, \d N_t\\
   & \quad +\sigma \, P_r \, \d W_t^{r} + b^I \cdot \left( \li_t - \uli \right) P_{\li} \, \d W^I_t +  b^P \cdot \left(\l^P_t - \ulp \right) P_{\lp} \, \d W^P_t ,
\end{split}
\end{equation}
in which $[\cdot,\cdot]_t$ represents the quadratic variation at time $t$, and $N_t$ is a time-inhomogeneous Poisson process with intensity $\lp_t$  that indicates when the individual dies. Recall that the value of $P$ jumps to \$0 when the individual dies; thus, we have the term $- P \, \d N_t$ to account for this drop.

Since the sub-portfolio $V_t$ is self-financing, its dynamics are given by
\begin{equation}\label{dyn_V}
\begin{split}
\d V_t &= \pi_t^r \, \d F(r_t, t) + \pi^{\li}_t \d S_t\left(r_t, \li_t, t \right) + r_t\left[V_t - \pi_t^r \, F(r_t, t ) - \pi^{\li}_t S_t \left(r_t, \li_t, t \right)\right]\d t\\
           &=  \left[ \pi_t^r \, q^r_t \, \sigma \, F_r  + \pi^{\li}_t q^r_t \, \sigma \, S_r + \pi^{\li}_t q^{\li}_t \, b^I \cdot \left(\l^I_t-\uli \right)S_{\li} + r_t \, V_t \right]\d t\\
           & \quad + \left( \pi_t^r \, \sigma \, F_r + \pi^{\li}_t \sigma \, S_r \right) \d W_t^r +  \pi^{\li}_t b^I \cdot \left(\l^I_t-\uli \right) S_{\li} \, \d W^I_t,\\
\end{split}
\end{equation}
in which the second equality follows from equations \eqref{eqn_F} and \eqref{FK_S}, and we suppress the dependence of the functions on the underlying variables. 

It follows from equations \eqref{dyn_P} and \eqref{dyn_V} that the value of the portfolio $\Pi_{t+h}$ at time $t+h$ for $h > 0$, given $\Pi_t = \Pi$, is 
\begin{equation}\label{Pi_h}
\begin{split}
\Pi_{t+h} &= \Pi - \int_t^{t+h} \d P(r_s, \lp_s , \li_s, s) +  \int_t^{t+h} \d V_s\\
                &= \Pi -  \int_t^{t+h} \D P(r_s, \lp_s , \li_s, s) \, \d s \\
                & \quad + \int_t^{t+h}  \left[ \pi_s^r q_s^r \sigma F_r  + \pi^{\li}_s q_s^r \sigma S_r + \pi^{\li}_s q_s^{\li} b^I \cdot \left(\l^I_s-\uli \right)S_{\li}  \right]\d s\\\
                & \quad + \int_t^{t+h}  \left( \pi_s^r \sigma  F_r + \pi^{\li}_s \sigma S_r-\sigma P_r \right) \d W_s^{r} + \int_t^{t+h} b^I \cdot \left(\l^I_s-\uli \right)  \left(\pi^{\li}_s S_{\li} - P_{\li} \right)\d W_s^{I} \\
                & \quad - \int_t^{t+h} b^P \cdot \left(\l^P_s - \ulp \right) P_{\lp} \, \d W_s^{P}  +   \int_t^{t+h}  P \, (\d N_s - \lp_s \, ds) + \int_t^{t+h}r_s \, \Pi_s \, \d s,
\end{split}
\end{equation}
in which $\D$ is the operator defined on the set of appropriately differentiable functions on  $\R_{+}\times(\uli, \infty) \times (\ulp, \infty) \times [0,T]$ by 
\begin{equation}
\label{dp}
\begin{split}
\D v= &- \left(r + \lp \right) v + v_t + \mu v_r +a^I \cdot \lia v_{\li} + a^P \cdot \lpa v_{\lp}  + \dfrac{1}{2} \sigma^2 v_{rr} \\
                 &+ \frac{1}{2} \left(b^I \right)^2 \lia^2v_{\li\li}  +  \rho \, b^I b^P \lia \lpa v_{\li \lp} +\dfrac{1}{2} \left(b^P \right)^2 \lpa^2 v_{\lp\lp} .\\
\end{split}
\end{equation}
Note that the compensated counting process $N_t - \int_0^t \lp_s \, \d s $ is a (local) martingale . 

When we consider the single life case, the value of $P_t$ becomes zero immediately after the individual's death, so the value of the portfolio increase by $P$.  If we consider the price $\Pn$ of n conditionally independent and identically distributed lives as in Section  \ref{sec:Pn}, the intensity of the counting process ${N_t}$ is $n\lp_t$ at time $t$. As one of the $n$ individual dies, the value of the portfolio increases by $\Pn-P^{(n-1)}$.  We will consider $\Pn$ later and continue with the single-life case for now. 

The second step as stated in Section \ref{sec:recipe} is to choose $\pi^r_t$ and $\pi^{\li}_t$ to minimize the local variance of the portfolio.  To this end, we calculate the conditional expectation and variance of $\Pi_{t+h}$ at time $t$ given $\Pi_t=\Pi$.  First, we define a stochastic process $Y_h$ for $h>0$ by
\begin{equation}\label{Y_h}
\begin{split}
Y_h &=  \Pi -  \int_t^{t+h} \D P(r_s, \li_s , \lp_s, s) \, \d s  + \int_t^{t+h} r_s \, \Pi_s \, \d s\\
                & \quad + \int_t^{t+h}  \left[ \pi_s^r \, q_s^r \, \sigma \, F_r  + \pi^{\li}_s \, q^r_s \, \sigma \, S_r + \pi^{\li}_s q^{\li}_s b^I \cdot \left  (\l^I_s - \uli \right) S_{\li}  \right]\d s.\\
\end{split}
\end{equation}
Thus, $\E(\Pi_{t+h} \big| \F_t) = \E^{r,\li,\lp, t}(Y_h)$, in which $ \E^{r,\li,\lp, t}$ denotes the conditional expectation given $r_t=r$, $\li_t=\li$, and $\lp_t=\lp$.  From \eqref{Pi_h} and \eqref{Y_h} we have 
\begin{equation} \label{Pi_h1}
\begin{split}
\Pi_{t+h} &= Y_h + \int_t^{t+h}  \left( \pi_s^r \sigma  F_r + \pi^{\li}_s \sigma S_r-\sigma P_r \right) \d W_s^{r} + \int_t^{t+h} b^I \cdot \left(\l^I_s-\uli \right)  \left(\pi^{\li}_s S_{\li} - P_{\li} \right)\d W_s^{I} \\
                & \quad - \int_t^{t+h} b^P \cdot \left(\l^P_s - \ulp \right) P_{\lp} \, \d W_s^{P} +   \int_t^{t+h}  P \, (\d N_s - \lp_s \, ds).
\end{split}
\end{equation}
It follows that 
\begin{equation}\label{var_Pi_h}
\begin{split}
 & \mathrm{Var} \left[ \Pi_{t+h}\big| \F_t\right] = \E \left[\left(\Pi_{t+h}- \E^{r,\li,\lp,t}(Y_h)\right)^2\bigg| \F_t\right]  \\
  &\quad = \E(Y_h-\E Y_h)^2 + \E \int_t^{t+h}  \left( \pi_s^r \sigma  F_r + \pi^{\li}_s \sigma S_r-\sigma P_r \right)^2 \d s\\	
  & \qquad + \E \int_t^{t+h} \left(b^I\right)^2\left(\l^I_s-\uli \right)^2  \left(\pi^{\li}_s S_{\li} - P_{\li} \right)^2 \d s + \E \int_t^{t+h} \left(b^P\right)^2 \left(\l^P_s - \ulp \right)^2 \left(P_{\lp}\right)^2 \d s  \\
  & \qquad - 2 \, \E \int_t^{t+h}  \rho \, b^I b^P \left(\l^I_s-\uli \right) \left(\l^P_s-\ulp \right) P_{\lp}  \left(\pi^{\li}_s S_{\li} - P_{\li} \right) \d s  + \E \int_t^{t+h}  \lp P^2 ds+o(h), \\
\end{split}
\end{equation}
in which all the expectations are conditional on the information available at time $t$. Thus, the optimal investments in the $q$-forward and $T$-bond to minimize the local variance are given by, respectively, 
\begin{eqnarray}
\left( {\pi^{\li}_t} \right)^{\ast} &=& \frac{1}{S_{\li}} \left [ P_{\li} +\rho \; \frac{ b^P \cdot \left( \lp_t - \ulp \right)}{b^I \cdot \left( \li_t - \uli \right)} \, P_{\lp} \right], \label{pilistar} \\
\left( {\pi^r_t} \right)^{\ast}&=&  \frac{1}{F_r} \left(P_r - \left( {\pili_t} \right)^{\ast} S_r \right)  \label{pirstar}.
\end{eqnarray}
Equations \eqref{pilistar} and \eqref{pirstar} show that in the self-financing sub-portfolio, the $q$-forward is used to hedge the mortality risk in the pure endowment, and $T$-bonds are used to hedge the interest risk of the portfolio.  Under this investment strategy, the drift and local variance of the portfolio become
\begin{equation}\label{local_drift}
\lim_{h\to 0}\dfrac{1}{h} \left[\E(\Pi_{t+h} \big| \F_t) -\Pi \right] = -\D^Q P + r\Pi, 
\end{equation}
and 
\begin{equation}\label{local_var}
\lim_{h\to 0}\dfrac{1}{h} \mathrm{Var} \left[\Pi_{t+h} \big| \F_t\right] = (1-\rho^2)\left(b^P\right)^2 \left(\lp-\ulp\right)^2 P^2_{\lp}+\lp P^2, 
\end{equation}
with 
\begin{equation}\label{DPQ}
\begin{split}
\D^Q P &= - \left( r+\lp \right) P + P_t+\mu^Q P_r + a^{I, Q} \lia  P_{\li} + a^{P, Q} \lpa P_{\lp}  + \frac{1}{2}\sigma^2 P_{rr} \\
& \quad + \frac{1}{2} \left(b^I \right)^2 \lia^2 P_{\li\li} +  \rho \, b^I b^P \lia \lpa P_{\li \lp}   + \dfrac{1}{2} \left(b^P \right)^2 \lpa^2 P_{\lp\lp}.
\end{split}
\end{equation}
\begin{remark} \label{rmk:21} 
When $\rho=\pm1$, the $q$-forward and the pure endowment bear identical uncertainty risk in the hazard rates. In this case, the mortality risk in the pure endowment can be completely hedged with the $q$-forward, and the minimum local variance of the portfolio only comes from the random occurrence of death, namely,  
\begin{equation} \label{eq:227}
\lim_{h\to 0}\dfrac{1}{h} \mathrm{Var} \left[\Pi_{t+h} \bigg| \F_t\right] = \lp P^2.
\end{equation}
\end{remark}

\begin{remark}\label{rmk_pil0}
As we will show in Property \ref{ppt_rho0} in Section \ref{sec:prop_Pn}, $P_{\li}\equiv0$ when $\rho=0$, and  the corresponding optimal investment in the $q$-forward is ${\pi^{\li}_t}^{\ast} \equiv 0$.  Intuitively, the $q$-forward is not used to hedge the mortality risk in the pure endowment when the two underlying hazard rates are not correlated.
\end{remark}

Next, we price the pure endowment via the instantaneous Sharpe ratio as stated in Step (iii) in Section \ref{sec:recipe}. The minimized local variance of the portfolio in \eqref{local_var} is positive; therefore, the insurer is not able to hedge all the risk underlying the pure endowment. The insurer requires an excess return on this unhedgeable risk so that the instantaneous Sharpe ratio of the portfolio equals a pre-specified value $\alpha$.  We could allow $\alpha$ to be a function of say $r$, $\li$, $\lp$, and t to parallel the market price of the risk process $\left\{q^r_t, q_t^{\li}\right\}$. However, for simplicity we choose $\alpha$ to be a constant.  (Further discussion of the instantaneous Sharpe ratio is available in \cite{MilevskyPromislowYoung2006Killing}.)   We assume that $0\le \alpha \le \sqrt{\ulp}$; as we will see, some of the properties of $P$ rely on this upper bound for $\alpha$.

To achieve a Sharpe ratio of $\alpha$ and thereby to determine the value $P$ of the pure endowment, we set the drift of the portfolio equal to short rate times the portfolio value plus $\alpha$ times the minimized local standard deviation of the portfolio. Thus, we get the following equation for $P$ from \eqref{local_drift} and \eqref{local_var}:
\begin{equation}
-\D^Q P + r\Pi= r\Pi + \alpha \sqrt{ (1-\rho^2)\left(b^P\right)^2 \left(\lp-\ulp\right)^2 P^2_{\lp}+\lp P^2}. 
\end{equation}
 If the individual is still alive at time $T$, then the policy is worth exactly \$1 at that time, that is,  $P\left(r, \li, \lp, T\right)=1$. Thus, $P=P(r, \li, \lp,t)$ solves the following non-linear PDE  on $\R_{+} \times (\uli, \infty) \times (\ulp, \infty) \times [0,T]$:
\begin{equation}\label{eqn_P}
\begin{cases}
P_t+\mu^Q P_r +a^{I,Q} \cdot \lia  P_{\li} + a^{P,Q} \cdot \lpa P_{\lp}    \\
\quad + \frac{1}{2}\sigma^2 P_{rr} + \dfrac{1}{2}\left(b^I \right)^2 \lia^2 P_{\li \li} + \dfrac{1}{2} \left( b^P \right)^2 \lpa^2 P_{\lp \lp}  \\
\quad+ \rho \, b^I b^P \lia \lpa P_{\li \lp} - \left( r+\lp \right) P\\
\quad = -\alpha \sqrt{\left(1-\rho^2 \right) \left(b^P\right)^2 \lpa^2 P_{\lp}^2+\lp P^2},\\
P(r,\li,\lp,T)=1.
\end{cases}
\end{equation}

We can simplify the solution to \eqref{eqn_P} because the uncertainty  in the short rate is uncorrelated with the uncertainty in mortality rates.  Indeed, note that $P(r, \li, \lp,t) = F(r, t) \, \psi\left(\li, \lp, t \right)$, in which $F$ is the price of the $T$-bond and solves \eqref{FK_F}, and $\psi$ solves the following non-linear PDE:
\begin{equation}\label{eqn_psi}
\begin{cases}
\psi_t  + a^{I,Q} \cdot \lia \psi_{\li} + a^{P,Q} \cdot \lpa  \psi_{\lp}  + \dfrac{1}{2}\left(b^I \right)^2 \lia^2 \psi_{\li \li} \\
\quad + \rho \, b^I b^P \lia \lpa \psi_{\li \lp} + \dfrac{1}{2} \left( b^P \right)^2 \lpa^2 \psi_{\lp \lp}  - \lp \psi \\
\quad = -\alpha \sqrt{\left(1-\rho^2 \right) \left(b^P\right)^2 \lpa^2 \psi_{\lp}^2+\lp \psi^2},\\
\psi\left(\li, \lp, T \right) = 1.
\end{cases}
\end{equation}
The existence of a solution to \eqref{eqn_psi} follows from standard techniques; see, for example, Chapter 36 in \cite{Walter}. Uniqueness of the solution follows from the comparison result in Section \ref{sec:prop_Pn} of this paper.

\subsubsection{\label{sec:Pn}Hedging and Pricing a Portfolio of Pure Endowments}

In this section, we develop the PDE for the price $\Pn$ of  $n$  pure endowment contracts.  We assume that all the individuals are of the same age and are subject to the same hazard rate given in \eqref{lp}. We further assume that, given the hazard rate,  occurrences of death are  independent.  As discussed in the paragraph following equation \eqref{dp}, when an individual dies, the portfolio value $\Pi$ increases by $\Pn-P^{(n-1)}$. By paralleling the derivation of \eqref{eqn_P}, one gets the following PDE for $\Pn$: 
\begin{equation}\label{eqn_Pn}
\begin{cases}
\Pn_t + \mu^Q \Pn_r + a^{I,Q} \cdot \lia \Pn_{\li} + a^{P,Q} \cdot \lpa  \Pn_{\lp}  \\ 
\quad + \frac{1}{2}\sigma^2 \Pn_{rr}  + \dfrac{1}{2}\left(b^I \right)^2 \lia^2 \Pn_{\li \li} + \dfrac{1}{2} \left( b^P \right)^2 \lpa^2 \Pn_{\lp \lp} \\
\quad + \rho \, b^I b^P \lia \lpa \Pn_{\li \lp} - r \Pn - n \lp \cdot \left(\Pn-P^{(n-1)}\right)\\
\quad = -\alpha \sqrt{\left(1-\rho^2 \right) \left(b^P\right)^2 \lpa^2 \left(\Pn_{\lp}\right)^2+n \lp \left (\Pn-P^{(n-1)}\right)^2},\\
P^{(n)}(r,\li,\lo,T)=n,
\end{cases}
\end{equation}
with initial value $P^{(0)}\equiv0$, and $P^{(1)} = P$, as given by \eqref{eqn_P}. 

As in Section \ref{sec:P}, $\Pn(r, \li, \lp, t) = F(r, t) \, \psin (\li, \lp, t)$, in which $F$ solves \eqref{FK_F} and $\psin$ solves the following PDE
\begin{equation}\label{eqn_psin}
\begin{cases}
\psin_t + a^{I,Q} \cdot \lia \psin_{\li} + a^{P,Q} \cdot \lpa  \psin_{\lp} + \dfrac{1}{2} \left( b^I \right)^2 \lia^2 \psin_{\li \li}  \\
\quad + \rho \, b^I b^P \lia \lpa \psin_{\li \lp} + \dfrac{1}{2}\left(b^P \right)^2 \lpa^2 \psin_{\lp \lp}   -  n \lp \cdot \left(\psin-\psi^{(n-1)}\right)\\
\quad = -\alpha \sqrt{\left(1-\rho^2 \right) \left(b^P\right)^2 \lpa^2 \left(\psin_{\lp}\right)^2+ n \lp \left(\psin-\psi^{(n-1)}\right)^2},\\
\psin\left(\li, \lp, T\right)=n, 
\end{cases}
\end{equation}
with initial value $\psi^{(0)}\equiv0$, and $\psi^{(1)}= \psi$, as given by \eqref{eqn_psi}.

\section{\label{sec:prop_Pn}properties of $P^{(n)}$}
To demonstrate properties of $\Pn$, we need a comparison principle similar to the one in \cite{Walter}. To this end,  we first state a relevant one-sided Lipschitz condition along with growth conditions. We require that  the function $g=g\left( \li,\lp,t,v,p_1,p_2 \right)$ satisfies the following one-sided Lipschitz condition:  For $v>w$,
\begin{equation}\label{Lipschitz}
\begin{split} 
g\left( \li,\lp,t,v,p_1,p_2\right)-g\left( \li,\lp,t,w,q_1,q_2 \right) & \le  c\left(\li, \lp, t\right)(v-w)+d_1\left(\li, \lp, t\right)|p_1-q_1| \\ 
&\quad+d_2\left(\li, \lp, t\right)|p_2-q_2|,
\end{split}
\end{equation}
with growth conditions on $c$, $d_1$ and $d_2$ given by 
\begin{equation}\label{growth}
\begin{cases}
0\le c\left(\li, \lp, t\right) \le K\left[1+\left(\ln \lia\right)^2+\left(\ln \lpa\right)^2\right],\\
0\le d_1\left(\li, \lp, t\right) \le K\lia\left[1+\ln \lia+\ln \lpa\right],\\
0\le d_2\left(\li, \lp, t\right) \le K\lpa\left[1+\ln \lia+\ln \lpa\right],\\
\end{cases}
\end{equation}
for some constant $K \ge 0$ and for all $\left(\li, \lp, t\right)\in (\uli, \infty) \times (\ulp, \infty) \times [0,T]$.

To prove Lemma \ref{lem_gn} below, as well as many of the properties of $\Pn$, we rely on the following lemma.
\begin{lemm}\label{lem:ABC}
$\dd{\sqrt{C^2+A^2}\le|A-B|+\sqrt{C^2+B^2}}$
\end{lemm}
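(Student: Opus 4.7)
The plan is to recognize this inequality as a direct instance of the triangle inequality for the Euclidean norm on $\mathbb{R}^2$. Writing $\|(x,y)\| := \sqrt{x^2 + y^2}$, the left-hand side is $\|(C,A)\|$, and the first term on the right, $|A-B|$, equals $\|(0, A-B)\|$, while the second term is $\|(C,B)\|$. Since $(C, A) = (C, B) + (0, A-B)$, the usual triangle inequality $\|u + v\| \le \|u\| + \|v\|$ gives
\begin{equation*}
\sqrt{C^2 + A^2} = \|(C,B) + (0, A-B)\| \le \|(C,B)\| + \|(0,A-B)\| = \sqrt{C^2 + B^2} + |A-B|,
\end{equation*}
which is exactly the claim.

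If a self-contained elementary argument is preferred, I would instead square both sides (noting both are non-negative), which reduces the inequality to showing
\begin{equation*}
C^2 + A^2 \le (A-B)^2 + 2|A-B|\sqrt{C^2 + B^2} + C^2 + B^2,
\end{equation*}
i.e., $A^2 - B^2 - (A-B)^2 \le 2|A-B|\sqrt{C^2 + B^2}$. The left-hand side simplifies to $2B(A-B)$, so it suffices to check $B(A-B) \le |A-B|\sqrt{C^2+B^2}$, which follows from $B(A-B) \le |B|\cdot|A-B| \le |A-B|\sqrt{C^2+B^2}$.

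There is no real obstacle here; the only thing to be a little careful about is that the inequality must be stated without any sign assumptions on $A$, $B$, $C$, and the norm/triangle-inequality formulation handles all sign cases uniformly. I would present the one-line norm proof as the primary argument since it is clean and makes the geometric content transparent.
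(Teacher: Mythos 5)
Your proof is correct, and it takes a genuinely different route from the paper's. The paper's proof is essentially a pointer: it declares the case $A \le B$ clear and, for $A > B$, refers the reader to the proof of Lemma 4.5 in \cite{MilevskyPromislowYoung2005Financial}. Your argument is self-contained. Writing $(C,A) = (C,B) + (0,A-B)$ and invoking the triangle inequality for the Euclidean norm on $\mathbb{R}^2$ settles the claim in one line and, as you rightly emphasize, handles all sign configurations of $A$, $B$, $C$ uniformly, whereas the paper's dichotomy ``clear if $A \le B$'' is really only transparent under a tacit non-negativity assumption on $A$ and $B$. Your alternative squaring argument is also correct: after cancelling $C^2$ the inequality reduces to $2B(A-B) \le 2|A-B|\sqrt{C^2+B^2}$, which follows from $B(A-B) \le |B|\,|A-B|$ and $|B| \le \sqrt{C^2+B^2}$. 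The trade-off is readability versus length: the paper's version is shorter on the page but not self-contained; yours makes the geometric content explicit and avoids sending the reader to an external source, which is arguably the better choice for a helper lemma that is invoked repeatedly throughout the paper.
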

\begin{proof}
It is clear that the inequality holds if $A\le B$. For the case $A>B$, see the proof of Lemma $4.5$ in \cite{MilevskyPromislowYoung2005Financial}. 
\end{proof}

\begin{lemm}\label{lem_gn}
Define $g_n$, for $n\ge1$, by
\begin{equation}\label{eqn_gn}
\begin{split}
g_n\left( \li,\lp,t,v,p_1,p_2\right) &=a^{I,Q} \cdot \left(\li -\uli \right)p_1+a^{P,Q} \cdot \left(\lp-\ulp\right)p_2 -n\lp \left(v-\psi^{(n-1)}\right)  \\
& \quad+\alpha \sqrt{\left(1-\rho^2 \right) \left(b^P\right)^2 \lpa^2 p_2^2+ n \lp \left(v-\psi^{(n-1)}\right)^2},\\
\end{split}
\end{equation}
in which $\psi^{(n-1)}$ solves \eqref{eqn_psin} with $n$ replaced by $n-1$. Then, $g_n$ satisfies the one-sided Lipschitz condition \eqref{Lipschitz} on $\left(\li, \lp, t\right)\in (\uli, \infty) \times (\ulp, \infty) \times [0,T]$. Furthermore,  condition \eqref{growth} holds if 
\begin{equation}\label{assumption1}
\begin{cases}
\big| a^{I,Q} \big| \leq  K\left[1+\ln \lia + \ln \lpa \right], \\
\big|a^{P,Q} \big| \leq K\left[1+\ln \lia + \ln \lpa \right], \\
\end{cases}
\end{equation}
for some constant $K \ge 0$.
\end{lemm}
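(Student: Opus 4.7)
The plan is to establish \eqref{Lipschitz} by direct estimation of $g_n(\li,\lp,t,v,p_1,p_2) - g_n(\li,\lp,t,w,q_1,q_2)$ term by term, with the square-root piece handled via two applications of Lemma \ref{lem:ABC}. First I would write the difference of the linear parts:
\begin{equation*}
a^{I,Q}\cdot\lia(p_1-q_1) + a^{P,Q}\cdot\lpa(p_2-q_2) - n\lp(v-w),
\end{equation*}
bounding the first two summands by $|a^{I,Q}|\,\lia\,|p_1-q_1|$ and $|a^{P,Q}|\,\lpa\,|p_2-q_2|$ respectively; the $-n\lp(v-w)$ term is non-positive since $v>w$ and will be kept as is for now.

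Next, for the difference of the square roots, I would set $A=\sqrt{n\lp}\,(v-\psi^{(n-1)})$, $B=\sqrt{n\lp}\,(w-\psi^{(n-1)})$, $C_1=\sqrt{1-\rho^2}\,b^P\,\lpa\,p_2$, and $C_2=\sqrt{1-\rho^2}\,b^P\,\lpa\,q_2$, and write
\begin{equation*}
\sqrt{C_1^2+A^2}-\sqrt{C_2^2+B^2} = \bigl[\sqrt{C_1^2+A^2}-\sqrt{C_1^2+B^2}\bigr] + \bigl[\sqrt{C_1^2+B^2}-\sqrt{C_2^2+B^2}\bigr].
\end{equation*}
Applying Lemma \ref{lem:ABC} to each bracket (in the second, with the roles of $A$-variable and $C$-variable swapped) yields the bound $|A-B|+|C_1-C_2| = \sqrt{n\lp}\,(v-w) + \sqrt{1-\rho^2}\,b^P\,\lpa\,|p_2-q_2|$. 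Multiplying by $\alpha$ and collecting terms, the coefficient of $(v-w)$ becomes $\alpha\sqrt{n\lp}-n\lp$. Here is where the hypothesis $\alpha\le\sqrt{\ulp}$ enters: since $\lp>\ulp$, we have $\alpha\sqrt{n\lp}\le\sqrt{n\,\ulp\,\lp}\le\sqrt{n}\,\lp\le n\lp$ for $n\ge 1$, so this coefficient is non-positive and the $(v-w)$-term can be dropped. Hence \eqref{Lipschitz} holds with
\begin{equation*}
c\equiv 0, \qquad d_1 = |a^{I,Q}|\,\lia, \qquad d_2 = |a^{P,Q}|\,\lpa + \alpha\sqrt{1-\rho^2}\,b^P\,\lpa.
\end{equation*}

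For the growth conditions \eqref{growth}, the bound on $c$ is trivial. For $d_1$ and $d_2$, assumption \eqref{assumption1} gives $|a^{I,Q}|,|a^{P,Q}|\le K[1+\ln\lia+\ln\lpa]$ directly, and since $b^P$ is a continuous function on the compact interval $[0,T]$ it is bounded; absorbing $\alpha\sqrt{1-\rho^2}\,\|b^P\|_\infty$ into a possibly larger constant $K$ finishes the verification.

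The main subtlety I anticipate is the square-root step: one must split the difference so that Lemma \ref{lem:ABC} can be applied twice without losing the structure needed to obtain bounds proportional to $\lpa$ (for the $p_2$-term) and to $\sqrt{n\lp}$ (for the $v-w$ term). The latter is crucial because only then can the non-positivity argument $\alpha\sqrt{n\lp}\le n\lp$ be invoked to eliminate the dangerous $(v-w)$ contribution coming from the nonlinear Sharpe-ratio term; without the upper bound $\alpha\le\sqrt{\ulp}$, this cancellation would fail and one would be forced to take a $c$ that grows like $n\lp$, which is incompatible with \eqref{growth}.
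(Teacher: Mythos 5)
Your proof is correct and follows essentially the same route as the paper's: the same term-by-term decomposition, the same choice $c\equiv 0$, $d_1=|a^{I,Q}|\lia$, $d_2=\left(|a^{P,Q}|+\alpha\sqrt{1-\rho^2}\,b^P\right)\lpa$, and the same use of $\alpha\le\sqrt{\ulp}\le\sqrt{\lp}$ to absorb the Sharpe-ratio contribution into the non-positive $-n\lp(v-w)$ term. The only difference is presentational: you split the square-root difference into two pieces and apply Lemma~\ref{lem:ABC} to each, while the paper invokes the lemma in one stroke; both yield the same bound $|A-B|+|C_1-C_2|$.
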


\begin{proof}
Suppose that $v>w$, then 
\begin{equation}\label{cond_gn}
\begin{split}
&  \ g_n\left( \li,\lp,t,v,p_1,p_2\right)-g_n\left( \li,\lp,t,w,q_1,q_2\right)\\
& \quad = a^{I,Q}\lia(p_1-q_1)+a^{P,Q} \lpa (p_2-q_2)- n \lp(v-w)\\
&\qquad+\alpha \sqrt{\left(1-\rho^2 \right) \left(b^P\right)^2 \lpa^2p_2^2 + n \lp \left(v-\psi^{(n-1)}\right)^2}\\
& \qquad - \alpha \sqrt{\left(1-\rho^2 \right) \left(b^P\right)^2 \lpa^2q_2^2 + n \lp \left(w-\psi^{(n-1)}\right)^2}\\
& \quad \le \big| a^{I,Q}\big|\lia|p_1-q_1| + \left[ \big| a^{P,Q} \big| \lpa+ \alpha\sqrt{1-\rho^2} \, b^P\lpa \right]|p_2-q_2|\\
&\qquad -\left(n \lp-\alpha\sqrt{n \lp}\right)(v-w)\\
&\quad \le \big|a^{I,Q}\big| \lia |p_1-q_1| + \left[ \big|a^{P,Q} \big| + \alpha \sqrt{1-\rho^2} \, b^P  \right] \lpa |p_2-q_2|. 
\end{split}
\end{equation}
In the above series of inequalities, we use $\alpha\le\sqrt{\underline{\l}^P}\le\sqrt{{\lp}}$ and Lemma \ref{lem:ABC}. Therefore, \eqref{Lipschitz} holds with $c = 0$, $d_1 = \left|a^{I,Q}\right|\lia$ and $d_2 = \left|a^{P,Q}  \right|\lpa+ \alpha\sqrt{1-\rho^2} \, b^P\lpa$.  Notice that $d_1$ and $d_2$ satisfy condition \eqref{growth} if \eqref{assumption1} holds.
\end{proof}

\begin{assumption}\label{assum_growth}
Henceforth,  we assume that the condition \eqref{assumption1} holds for rest of the paper.  For later purpose, we also assume that $a^{P,Q}_{\lp} \lpa$ is H\"older continuous and satisfies the following growth condition
\begin{equation}\label{growth2}
\left| a^{P,Q}_{\lp} \lpa + a^{P,Q} \right| \le K\left[1+\left(\ln\lpa\right)^2\right].
\end{equation}
\end{assumption}

\begin{thm}\label{thm_comparison}
Let $G= (\uli, \infty) \times (\ulp, \infty) \times [0,T]$, and denote by $\mathcal{G}$ the collection of functions on $G$ that are twice differentiable in their first two variables and once-differentiable in their third variable. Define an operator $\L$ on $\mathcal{G}$ by
\begin{equation}\label{operatorL}
\begin{split}
\L v & =v_t + \dfrac{1}{2}\left(b^I \right)^2 \lia^2 v_{\li \li} + \rho \, b^I \, b^P \lia \lpa \psi_{\li \lp} \\
 &\quad  + \dfrac{1}{2} \left( b^P \right)^2 \lpa^2 v_{\lp \lp} + g_n \left( \li,\lp,t,v,v_{\li},v_{\lp} \right),\\
\end{split}
\end{equation}
in which $g_n$ is given by \eqref{eqn_gn}.  Suppose that $v,w \in \mathcal{G}$ are such that there exists a constant $K \ge 0$ with $v\le e^{K \left\{ \left(\ln \lia \right)^2 + \left(\ln \lpa \right)^2 \right\}}$ and $w\ge - e^{K\left\{\left(\ln \lia \right)^2 + \left(\ln \lpa \right)^2 \right\}}$ for large $\left(\ln \lia \right)^2 + \left(\ln \lpa \right)^2$. Then, if $(a)$ $\L v \ge \L w$ on $G$ and if $(b)$ $v\left(\li, \lp, T\right) \le w\left(\li, \lp, T\right)$ for all $\li > \uli$ and $\lp > \ulp$, then $v \le w$ on $G$. 
\end{thm}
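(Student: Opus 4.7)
The plan is to adapt the parabolic comparison argument of Chapter 36 of \cite{Walter} to the unbounded domain $G$. Set $z := v - w$ and argue by contradiction, assuming $\sup_G z > 0$. Reversing time via $\tau = T - t$ converts hypothesis (b) into the initial inequality $z(\cdot, \cdot, 0) \leq 0$. Combining hypothesis (a) with the one-sided Lipschitz inequality established in the proof of Lemma \ref{lem_gn}---whose coefficient of $v-w$, namely $-(n\lp - \alpha\sqrt{n\lp})$, is nonpositive, so that effectively $c = 0$---yields on $\{v > w\}$
\begin{equation*}
z_\tau \;\leq\; \mathcal{A}\, z + d_1\,|z_{\li}| + d_2\,|z_{\lp}|,
\end{equation*}
where $\mathcal{A} := \tfrac{1}{2}(b^I)^2 \lia^2 \partial_{\li\li} + \rho\, b^I b^P \lia \lpa \partial_{\li \lp} + \tfrac{1}{2}(b^P)^2 \lpa^2 \partial_{\lp \lp}$ denotes the elliptic part of $\L$.

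For each $\eps \in (0,1)$ and $M > K$, introduce the barrier
\begin{equation*}
H_\eps(\li, \lp, \tau) := e^{\beta \tau}\exp\!\bigl\{M\bigl[(\ln \lia)^2 + (\ln \lpa)^2\bigr]\bigr\},
\end{equation*}
with $\beta = \beta(\eps)$ to be fixed. Since $M > K$, the envelope hypothesis on $v, w$ forces $z - \eps H_\eps \to -\infty$ as $(\ln \lia)^2 + (\ln \lpa)^2 \to \infty$, so any positive supremum of $z - \eps H_\eps$ is attained at an interior point $(\li_0, \lp_0, \tau_0)$ with $\tau_0 > 0$. At this point the inequality $\eps H_\eps \leq z \leq 2\,e^{K\Psi_0}$, with $\Psi_0 := (\ln(\li_0-\uli))^2 + (\ln(\lp_0-\ulp))^2$, forces $\Psi_0 \leq R^2(\eps) := \ln(2/\eps)/(M-K)$; on this confinement set the coefficients $d_1, d_2$, and those of $\mathcal{A}$, are uniformly bounded.

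The first- and second-order conditions on $z - \eps H_\eps$, combined with positive semi-definiteness of the diffusion matrix (which follows from $|\rho| \leq 1$) to bound $\mathcal{A}z$ by $\eps\, \mathcal{A}H_\eps$, reduce the displayed differential inequality to $(H_\eps)_\tau \leq \mathcal{A}H_\eps + d_1|(H_\eps)_{\li}| + d_2|(H_\eps)_{\lp}|$ at $(\li_0, \lp_0, \tau_0)$. A direct computation shows that on the confinement set the ratio $[\mathcal{A}H_\eps + d_1|(H_\eps)_{\li}| + d_2|(H_\eps)_{\lp}|]/H_\eps$ is bounded by $C_0 M^2(1 + R^2(\eps))$ for a constant $C_0$ depending only on $K, b^I, b^P$, whereas $(H_\eps)_\tau/H_\eps = \beta$. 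Choosing $\beta > C_0 M^2(1 + R^2(\eps))$ produces a contradiction, so $z \leq \eps H_\eps$ on $G$. Passing to the limit $\eps \to 0^+$ at fixed $(\li, \lp, \tau)$ requires $\eps\, e^{\beta(\eps)\tau} \to 0$; with $\beta(\eps) = O(M^2 \ln(1/\eps)/(M-K))$ and the optimal choice $M = 2K$, this holds whenever $\tau < \tau^* := 1/(4 C_0 K)$. Iterating on successive time slabs of length $\tau^*$, using $z(k\tau^*, \cdot) \leq 0$ as the next initial datum, extends the conclusion to $[0, T]$ and yields $v \leq w$ on $G$.

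The principal obstacle is that $d_1, d_2$ and $\mathcal{A}H_\eps/H_\eps$ grow quadratically in $\Psi_0$, so no constant $\beta$ can dominate them globally. The $\eps$-dependent confinement of the putative positive-max point makes a finite $\beta(\eps)$ suffice, but at the cost of $\eps H_\eps$ blowing up pointwise as $\eps \to 0$; the short-time slab iteration is precisely what turns this into a useful global conclusion.
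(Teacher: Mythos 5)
Your proof is correct, but it takes a genuinely different route from the paper's. The paper's proof is brief: change variables $y_1 = \ln\lia$, $y_2 = \ln\lpa$, $\tau = T-t$ to put the operator in the standard form considered in Chapter 36 of \cite{Walter}, verify the one-sided Lipschitz and growth hypotheses required by Walter's comparison principle (which is exactly what Lemma \ref{lem_gn} and Assumption \ref{assum_growth} were set up to do), and then cite Walter's theorem. You instead reprove the comparison principle from scratch via a penalization/barrier argument: perturb $z = v - w$ by $\eps H_\eps$ with a super-Gaussian barrier in the log-variables, use the envelope hypothesis with $M > K$ to force a finite interior maximum, use $\eps H_\eps < z \le 2e^{K\Psi}$ to confine the max to $\Psi \le R^2(\eps)$, apply first- and second-order conditions together with positive semi-definiteness of the diffusion matrix, and then beat the resulting estimate with a time-exponential $e^{\beta\tau}$ whose rate $\beta(\eps)$ must grow like $\ln(1/\eps)$ because the drift coefficients $d_1, d_2$ are only linearly bounded in the log-variables. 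The resulting tension between $\eps \to 0$ and $\beta(\eps) \to \infty$ is what forces the short-time-slab restriction $\tau < \tau^* = 1/(4C_0 K)$ and the subsequent iteration in slabs; this is the same device underlying Walter's treatment of unbounded coefficients, so your argument is in effect a self-contained re-derivation of the theorem the paper invokes as a black box. The paper's approach is terser and offloads the technicalities to the reference; yours makes visible precisely why the growth conditions \eqref{growth}--\eqref{assumption1} (linear in $|\ln\lia|, |\ln\lpa|$ for the first-order terms, quadratic for the zeroth-order term, here $c = 0$) are the right scaling: they are exactly what makes $[\mathcal{A}H_\eps + d_1|\partial_{\li}H_\eps| + d_2|\partial_{\lp}H_\eps|]/H_\eps$ at most quadratic in $\sqrt{\Psi}$, hence dominable on the confinement set by a single constant $\beta(\eps)$.
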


\begin{proof}
Define $y_1=\ln\lia$, $y_2=\ln \lpa$, and $\tau=T-t$. Write $\tilde{v}(y_1,y_2,\tau)=v\left(\li, \lp, t \right)$, etc. Therefore, $ \tilde{v}\le e^{K\left(y_1^2+y_2^2\right)} $and $\tilde{w}\ge -e^{K\left(y_1^2+y_2^2\right)}$ for large $y_1^2 + y_2^2$. Under this transformation, \eqref{operatorL} becomes
\begin{equation}
\label{operatorLt}
\L\tilde{v} = - \tilde{v}_\tau + \frac{1}{2} (\widetilde{b^I})^2 \, \tilde{v}_{y_1 y_1} + \rho \, \widetilde{b^I} \, \widetilde{b^P} \, \tilde{v}_{y_1 y_2} + \frac{1}{2} (\widetilde{b^P})^2 \, \tilde{v}_{y_2 y_2} + \tilde{h}(y_1,y_2,v,\tilde{v}_{y_1} ,\tilde{v}_{y_2}),
\end{equation}
in which 
\begin{equation}\label{h}
\tilde{h}(y_1,y_2,\tau,\tilde{v},\tilde{p}_1,\tilde{p}_2) = -\dfrac{1}{2}(\widetilde{b^I})^2 \, \tilde{p}_1 - \dfrac{1}{2}(\widetilde{b^P})^2 \, \tilde{p}_2 + \tilde{g}_n(y_1,y_2,\tau,\tilde{v},\tilde{p}_1,\tilde{p}_2),
\end{equation}
and $\tilde{v}$ is a differentiable function defined on $\R^2\times[0,T]$. The differential operator in \eqref{operatorLt} is of the form considered by \cite{Walter}.  

To complete the proof, we consider the Lipschitz and growth conditions in the original variables $\li$, $\lp$, and $t$.  From \cite{Walter}, we know that the conditions on $\tilde h$ required for Walter's comparison principle are
\begin{equation}\label{Lipschitz1}
\begin{split}
\tilde{h}(y_1,y_2,\tau,\tilde{v},\tilde{p}_1,\tilde{p}_2)-\tilde{h}(y_1,y_2,\tau,\tilde{w},\tilde{q}_1,\tilde{q}_2)&\le \tilde{c}(y_1,y_2,\tau)(\tilde{v}-\tilde{w}) +\tilde{d}_1(y_1,y_2,\tau)|\tilde{p}_1-\tilde{q}_1|\\
&\quad +\tilde{d}_2(y_1,y_2,\tau)|\tilde{p}_2-\tilde{q}_2|,\\
\end{split}
\end{equation}
with
\begin{equation}\label{growth1}
\begin{cases}
0\le \tilde{c}(y_1,y_2,\tau) \le K\left(1+y_1^2+y_2^2\right),\\
0\le \tilde{d}_1(y_1,y_2,\tau) \le K\left(1+|y_1|+|y_2|\right),\\
0\le \tilde{d}_2(y_1,y_2,\tau) \le K\left(1+|y_1|+|y_2|\right).\\
\end{cases}
\end{equation}
Under the original variables, it follows from \eqref{h} and \eqref{cond_gn} that, for $v > w$,
\begin{equation}
\begin{split}
\tilde{h}(y_1,y_2,\tau,\tilde{v},\tilde{p}_1,\tilde{p}_2) & -\tilde{h}(y_1,y_2,\tau,\tilde{w},\tilde{q}_1,\tilde{q}_2) \\
& \le \left[ \dfrac{1}{2}(\widetilde{b^I})^2 + \left| \widetilde{a^{I,Q}} \right| \right]|\tilde{p}_1-\tilde{q}_1| +\left[\dfrac{1}{2}(\widetilde{b^P})^2 + \left| \widetilde{a^{P,Q}}  \right| + \alpha\sqrt{1-\rho^2} \, \widetilde{b^P} \right]|\tilde{p}_2-\tilde{q}_2|.\\
\end{split}
\end{equation}
Note that $p_1$=$e^{-y_1}\tilde{p}_1$ since $\psi_{\li}=e^{-y}\tilde{\psi}_{y_1}$; similarly, for $\tilde{p}_2$, $\tilde{q}_1$, and $\tilde{q}_2$.  Thus, \eqref{Lipschitz1} is satisfied with $\tilde{c} = c = 0$, $\tilde{d}_1 = (\widetilde{b^I})^2 + \left| \widetilde{a^{I,Q}} \right| $, and  $\tilde{d}_2 = (\widetilde{b^P})^2 + \left| \widetilde{a^{P,Q}}  \right| + \alpha\sqrt{1-\rho^2} \, \widetilde{b^P} $, and \eqref{growth1} is satisfied due to Lemma \ref{lem_gn} and \eqref{assumption1} and the fact that $\widetilde{b^I}$ and $\widetilde{b^P}$ are continuous on $[0,T]$ and are, thus, bounded. 
\end{proof}

For the remainder of this section, we apply Theorem \ref{thm_comparison} to investigate properties of the price $\Pn$ for $n$ pure endowment contracts.  For simplicity, we will state and prove properties of $\psin$ and afterwards interpret the results in terms of $\Pn$.

\begin{ppt}\label{ppt_bound}
For $n \ge 0$, $0 \leq \psi^{(n)} \le n \, e^{-\left(\ulp-\alpha \sqrt{\ulp}\right)(T-t)}$ on $G$.
\end{ppt}
\begin{proof}
For ease of presentation, define $h$ by $h(t) = e^{-\left(\ulp-\alpha \sqrt{\ulp}\right)(T-t)}$ for $t \in [0, T]$. We proceed by induction to prove that $\psin \le n \, h$ on $G$.  Note that the inequality holds for $n=0$ since $\psi^{(0)}\equiv 0$. For $n\ge1$,  assume that $\psi^{(n-1)}(\li, \lp, t) \le (n-1)h$, and show that $0\le \psi^{(n)}(\li, \lp, t)\le n \, h$. 

To apply Theorem \ref{thm_comparison}, define a differential operator $\L$ on $\G$ by \eqref{operatorL}. We have $\L \psin=0$ due to equation \eqref{eqn_psin}.  Apply the operator $\L$ to $n \, h$ to get
\begin{equation}
\begin{split}
\L \left(n \, h \right) & = \left(\ulp - \alpha \sqrt{\ulp} \right) n \, h - \left(n \lp-\alpha \sqrt{n\lp} \right) \left(n \, h - \psi^{(n-1)}\right) \\
              & \le \left(\ulp - \alpha \sqrt{\ulp} \right) n \, h - \left(n \lp-\alpha \sqrt{n\lp} \right) \left(n-(n-1)\right)h \\
              & = \left[n\left( \ulp-\alpha \sqrt{\ulp} \right)-\left(n\lp-\alpha \sqrt{n\lp} \right)\right]h \le 0.
\end{split}
\end{equation}
Because $\L (n \, h) \le 0 = \L\psin$ and $ n \,h(T)=\psin(\li ,\lp , T)=n$, Theorem \ref{thm_comparison} implies that $\psi^{(n)} \le ne^{-\left(\ulp-\alpha \sqrt{\ulp}\right)(T-t)}$ on $G$. 

Similarly, we prove that $\psin \ge 0$ by induction. Suppose that $\psi^{(n-1)}\ge 0$ for $n\ge1$, and show that $\psin\ge 0$. We apply the same operator $\L$ from the first part of this proof to the constant function $\0$ on $G$.  Because $\L\0=\left(n\lp+\alpha \sqrt{n\lp}\right)\psi^{(n-1)}\ge 0 = \L \psin $ and $ 0 \le n = \psin(\li,\lp, T)$, Theorem \ref{thm_comparison} implies that $\psin \ge 0$ on $G$. 
\end{proof}

It follows immediately from Property \ref{ppt_bound} that $0 \le \Pn (r,\li,\lp,t) \le n \, F(r,t)$ for $ (r,\li,\lp,t) \in \R^+ \times G$, in which $F$ is the price of a $T$-bond with face value of \$$1$. Thus, the price per risk $\dfrac{1}{n}P^{(n)}$ lies between $0$ and $F$. This is a no-arbitrage condition since the total payoff of $n$ pure endowments at time $T$ is non-negative and is no more than \$$n$.

\begin{ppt}\label{ppt_mono}
For $n \ge 1$, $\psi^{(n)} \ge \psi^{(n-1)} $ on $ G$. 
\end{ppt}
\begin{proof}
We prove this property by induction. First, the inequality holds for $n=1$ since  $\psi^{(1)}\ge 0$ by Property \ref{ppt_bound} and  $\psi^{(0)}\equiv 0$.  For $n\ge 2$, assume that $\psi^{(n-1)}\ge\psi^{(n-2)}$, and show that $\psi^{(n)}\ge\psi^{(n-1)}$. 

Define a differential operator $\L$ on $\G$ by \eqref{operatorL}.  We have that $\L \psin=0$ due to equation \eqref{eqn_psin}.  Apply the operator $\L$ to $\psi^{(n-1)}$, and use the fact that $\psi^{(n-1)}$ solves $\eqref{eqn_psin}$ with $n$ replaced by $n-1$:
\begin{equation}
\begin{split}
\L\psi^{(n-1)} & = (n-1) \lp \left( \psi^{(n-1)}-\psi^{(n-2)} \right) + \alpha \sqrt{\left(1-\rho^2 \right) \left(b^P\right)^2 \lpa^2 \left(\psi^{(n-1)}_{\lp}\right)^2}\\
&\quad - \alpha \sqrt{\left(1-\rho^2 \right) \left(b^P\right)^2 \lpa^2  \left(\psi^{(n-1)}_{\lp}\right)^2 + (n-1) \lp \left(\psi^{(n-1)}-\psi^{(n-2)}\right)^2}\\
& \ge (n-1) \lp  \left(\psi^{(n-1)}-\psi^{(n-2)}\right) - \alpha \sqrt{(n-1)\lp} \left(\psi^{(n-1)}-\psi^{(n-2)}\right)\\
&= \left[(n-1)\lp-\alpha\sqrt{(n-1)\lp}\right] \left(\psi^{(n-1)}-\psi^{(n-2)}\right) \ge 0. 
\end{split}
\end{equation}
Note that the first inequality is due to the fact that $\sqrt{A^2+B^2}\le |A|+|B|$. We also use the induction assumption that $\psi^{(n-1)}\ge \psi^{(n-1)}$.  Because $\psin=n > n-1=\psi^{(n-1)}$ at $t=T$, Theorem \ref{thm_comparison} implies that $\psi^{(n)} \ge \psi^{(n-1)} $ on $G$. 
\end{proof}

We use Property \ref{ppt_mono} to prove Property \ref{ppt_wrtlp} below; however, Property \ref{ppt_mono} is interesting in its own right because it confirms our intuition that $\Pn$ increases with the number of policyholders.

\begin{ppt}\label{ppt_alpha}
Suppose $0\le \alpha_1 \le \alpha_2 \le \sqrt{\ulp}$, and let $\psi^{(n),\alpha_i}$ be the solution of \eqref{eqn_psin} with $\alpha=\alpha_i$, for $i=1,2$ and  for $n\ge0$. Then, $\psi^{(n),\alpha_1} \le \psi^{(n),\alpha_2}$ on $G$. 
\end{ppt}
\begin{proof}
We prove this property by induction.  First, the inequality holds for $n=0$ since $ \psi^{(0),\alpha_i}\equiv 0$ for $i=1,2$.  For  $n\ge 1$, assume that  $\psi^{(n-1),\alpha_1} \le \psi^{(n-2),\alpha_2}$, and show that  $\psi^{(n),\alpha_1} \le \psi^{(n),\alpha_2}$. 

Define a differential operator $\L$ on $\G$ by \eqref{operatorL} with $\alpha=\alpha_1$. We have that $\L \psi^{(n),\alpha_1}=0$ since $\psi^{(n),\alpha_1}$ solves \eqref{eqn_psin}  with $\alpha=\alpha_1$.  Apply the operator $\L$ to $\psi^{(n),\alpha_2}$ to get
\begin{equation}
\begin{split}
\L\psi^{(n),\alpha_2} & = -n\lp\left( \psi^{(n),\alpha_2}-\psi^{(n-1),\alpha_1} \right) + n\lp \left( \psi^{(n),\alpha_2}-\psi^{(n-1),\alpha_2} \right)\\
&\quad + \alpha_1 \sqrt{\left(1-\rho^2 \right) \left(b^P\right)^2 \lpa^2 \left(\psi^{(n),\alpha_2}_{\lp}\right)^2 + n\lp\left( \psi^{(n),\alpha_2}-\psi^{(n-1),\alpha_1} \right)^2}\\
&\quad - \alpha_2 \sqrt{\left(1-\rho^2 \right) \left(b^P\right)^2 \lpa^2 \left(\psi^{(n),\alpha_2}_{\lp}\right)^2 + n\lp\left( \psi^{(n),\alpha_2}-\psi^{(n-1),\alpha_2} \right)^2}\\
& = -n\lp \left( \psi^{(n-1),\alpha_2}-\psi^{(n-1),\alpha_1} \right) \\
& \quad + \alpha_1  \Bigg\{\sqrt{\left(1-\rho^2 \right) \left(b^P\right)^2 \lpa^2 \left(\psi^{(n),\alpha_2}_{\lp}\right)^2 + n\lp\left( \psi^{(n),\alpha_2}-\psi^{(n-1),\alpha_1} \right)^2}\\
&\qquad \qquad -  \sqrt{\left(1-\rho^2 \right) \left(b^P\right)^2 \lpa^2 \left(\psi^{(n),\alpha_2}_{\lp}\right)^2 + n\lp\left( \psi^{(n),\alpha_2}-\psi^{(n-1),\alpha_2} \right)^2} \Bigg\}\\
& \quad - \left(\alpha_2-\alpha_1\right) \sqrt{\left(1-\rho^2 \right) \left(b^P\right)^2 \lpa^2 \left(\psi^{(n),\alpha_2}_{\lp}\right)^2 + n\lp\left( \psi^{(n),\alpha_2}-\psi^{(n-1),\alpha_2}\right)^2}\\
& \le -\left(n\lp - \alpha_1\sqrt{n\lp} \right)\left(\psi^{(n-1),\alpha_2}-\psi^{(n-1),\alpha_1}\right) \le 0 = \L \psi^{(n),\alpha_1}. \\
\end{split}
\end{equation}
Here, we use the Lemma \ref{lem:ABC} with $A=\sqrt{n\lp}\left(\psi^{(n), \alpha_2}-\psi^{(n-1), \alpha_1}\right)$,  $B=\sqrt{n\lp}\left(\psi^{(n), \alpha_2}-\psi^{(n-1), \alpha_2}\right)$, and $C=b^P\left(\lp-\ulp\right)\psi^{(n),\alpha_2}_{\lp}$, as well as the induction hypothesis and $\alpha_2\ge\alpha_1$. Because $\psi^{(n),\alpha_1} = \psi^{(n),\alpha_1} = n$ at $t=T$,  Theorem \ref{thm_comparison} implies that $\psi^{(n),\alpha_1} \le \psi^{(n),\alpha_2}$ on $G$.  
\end{proof}

Property \ref{ppt_alpha} shows that $\Pn$ increases with the instantaneous Sharpe ratio $\alpha$.  The more that the insurance company wants to be compensated for the unhedgeable portion of the mortality risk, the higher it will set $\alpha$.  We have the following corollary of Property \ref{ppt_alpha}. 

\begin{ppt}\label{ppt_alpha0}
Let $\psi^{(n),\alpha0}$ be the solution to \eqref{eqn_psin} with $\alpha=0$. Then, for $0\le \alpha \le \sqrt{\ulp}$, $\psi^{(n),\alpha} \ge \psi^{(n),\alpha0}$ on $G$, and we can express the lower bound $\psi^{(n), \alpha0}$ as follows: $\psi^{(n), \alpha0} = n \, \psi^{\alpha0}$, in which $\psi^{\alpha0}$ is given by
\begin{equation}\label{eqn_psi0}
\psi^{\alpha0}(\li, \lp, t)=\E^{Q} \left[e^{-\int_t^T \lp_s \d s} \, \bigg| \, \li_t = \li, \lp_t = \lp \right],
\end{equation} 
and the $Q$-dynamics of $\left\{ \li_t \right\}$ and $\left\{ \lp_t \right\}$ follow, respectively,
\begin{equation} \label{li_Q}
\d \li_t = a^{I, Q}(\li_t, t)\left(\li_t-\uli \right)\d t+b^I(t)\left(\li_t-\uli \right)\d W^{I, Q}_t, 
\end{equation}
and
\begin{equation} \label{lp_Q}
\d \l^P_t = a^{P, Q}(\li_t, \l^P_t, t)\left(\l^P_t-\ulp \right)\d t+b^P(t)\left(\l^P_t-\ulp \right)\d W^{P, Q}_t.
\end{equation}
\end{ppt}

\begin{proof}
Let $\alpha_1 = 0$ and $\alpha_2 = \alpha \ge 0$ in Property \eqref{ppt_alpha}, and the inequality follows. By substituting $\alpha = 0$ in \eqref{eqn_psi}, the Feyman-Kac Theorem leads to the  expression of $\psi^{\alpha0}$ in \eqref{eqn_psi0}.    Finally, it is straightforward to show that $n \psi^{\alpha0}$ solves \eqref{eqn_psin} with $\alpha = 0$; thus, $\psi^{(n), \alpha0} = n \, \psi^{\alpha0}$.  
\end{proof}

Note that $n \,\psi^{(1), \alpha0} = n \, \psi^{\alpha0}$ is the expected number of survivors under the physical measure, so the lower bound of $\frac{1}{n} \Pn$ (as $\alpha$ approaches zero) is the same as the lower bound of $P$, namely, $F \, \psi^{\alpha0}$.

\begin{ppt}\label{ppt_wrtlp}
$\psi^{(n)}_{\lp} \le 0$ on $G$ for $n\ge0$. 
\end{ppt}

\begin{proof}
We prove this property by induction.  First, it is clear  that $\psi^{(0)}_{\lp} \equiv 0$. For $n\ge1$, assume that $\psi^{(n-1)}_{\lp}\le 0$, and apply a modified version of Theorem \ref{thm_comparison} to compare $\psi^{(n)}_{\lp}\le 0$ and the constant function $\0$.  To this end, we first differentiate $\psi^{(n)}$'s equation, \eqref{eqn_psin}, with respect to $\lp$ to get an equation for $f^{(n)}=\psi^{(n)}_{\lp}$:
\begin{equation}\label{eqn_fn}
\begin{cases}
\fn_t + \left[ a_{\lp}^{P,Q} \cdot \lpa + a^{P,Q} \right]  \fn + \left[ a^{I,Q} + \rho \, b^I \, b^P \right] \lia \fn_{\li} + \left[ a^{P,Q} +\left( b^P \right)^2 \right] \lpa \fn_{\lp}\\
\quad +\dfrac{1}{2}\left(b^I \right)^2 \lia^2 \fn_{\li \li} + \rho \, b^I \, b^P \lia \lpa \fn_{\li\lp} + \dfrac{1}{2} \left( b^P \right)^2 \lpa^2 \fn_{\lp \lp} \\
\quad -  n\left(\psin-\psi^{(n-1)}\right) -n\lp\left(\fn-f^{(n-1)}\right)\\
\quad = -\alpha \dfrac{\left(1-\rho^2 \right) \left(b^P\right)^2 \lpa \left(\fn\right)^2+\left(1-\rho^2 \right) \left(b^P\right)^2 \lpa^2 \fn \fn_{\lp}}{\sqrt{\left(1-\rho^2 \right) \left(b^P\right)^2 \lpa^2 \left(f^{(n)}\right)^2+ n \lp \left(\psin-\psi^{(n-1)}\right)^2}}\\
\quad \quad -\alpha \dfrac{\dfrac{1}{2} n \left(\psin-\psi^{(n-1)}\right)^2+n\lp   \left(\psin-\psi^{(n-1)}\right)\left(\fn-f^{(n-1)}\right)}{\sqrt{\left(1-\rho^2 \right) \left(b^P\right)^2 \lpa^2 \left(f^{(n)}\right)^2+ n \lp \left(\psin-\psi^{(n-1)}\right)^2}},\\
\fn\left(\li, \lp, T\right)=0.
\end{cases}
\end{equation}
Define a differential operator $\L$ on $\G$ by \eqref{operatorL} with $g_n$ replaced by 
\begin{equation}\label{gn_fn}
\begin{split}
\tilde g_n (\li, \lp, t, v, p_1,p_2) &= \left[ a_{\lp}^{P,Q} \cdot \lpa + a^{P,Q} \right] v + \left[ a^{I,Q} + \rho \, b^I \, b^P \right] \lia p_1 \\
& \quad +  \left[ a^{P,Q} +\left( b^P \right)^2 \right] \lpa p_2 -  n\left(\psin-\psi^{(n-1)}\right)-n\lp\left(v-f^{(n-1)}\right)\\
& \quad + \alpha \dfrac{\left(1-\rho^2 \right) \left(b^P\right)^2 \lpa v^2+\left(1-\rho^2 \right) \left(b^P\right)^2 \lpa^2 v p_2}{\sqrt{\left(1-\rho^2 \right) \left(b^P\right)^2 \lpa^2 v^2 +  n \lp \left(\psin-\psi^{(n-1)}\right)^2}}\\
& \quad + \alpha \dfrac{\dfrac{1}{2} n \left(\psin-\psi^{(n-1)}\right)^2 + n\lp  \left(\psin-\psi^{(n-1)}\right)  \left(v-f^{(n-1)}\right)}{\sqrt{\left(1-\rho^2 \right) \left(b^P\right)^2 \lpa^2 v^2+ n \lp \left(\psin-\psi^{(n-1)}\right)^2}} .
\end{split}
\end{equation} 
From \cite{Walter}, we know that we only need to verify that \eqref{Lipschitz} holds for $v>w=0=q_1=q_2$.  It is not difficult to show that
\begin{equation} \label{eq:gn_fnLip}
\begin{split}
\tilde g_n (\li, \lp, t, & v, p_1,p_2) - \tilde g_n (\li, \lp, t, 0, 0, 0) \le \left[ \left| a^{P,Q}_{\lp} \cdot \lpa + a^{P,Q} \right| + \alpha \sqrt{1-\rho^2} \, b^P \right] v \\
& + \Big[ \left| a^{I,Q} \right| + \rho \, b^I \, b^P \Big] \lia |p_1| + \left[ \left| a^{P,Q} \right| + \left( b^P \right)^2 + \alpha \sqrt{1-\rho^2} \, b^P \right] \lpa |p_2|.
\end{split}
\end{equation}
Also, by Assumption \ref{assum_growth}, the corresponding $c = \left| a^{P,Q}_{\lp} \lpa + a^{P,Q} \right| + \alpha \sqrt{1-\rho^2} \, b^P$, $d_1 = \left| a^{I,Q} \right| + \rho \, b^I \, b^P$, and $d_2 = \left| a^{P,Q} \right| + \left( b^P \right)^2 + \alpha \sqrt{1-\rho^2} \, b^P$ in \eqref{eq:gn_fnLip} satisfy the growth conditions in \eqref{growth}. 

Note that $\L \fn =0$ on $G$. Apply the operator $\L$ to the constant function $\0$ to get  $\L\0=\left(n\lp -\alpha\sqrt{n\lp}\right) f^{(n-1)}-\left(n-\alpha/2\sqrt{n/\lp}\right)\left(\psin-\psi^{(n-1)}\right) \le 0$ by the induction assumption, by Property \ref{ppt_mono}, and by the assumption that $\lp > \ulp \ge \alpha^2$. Since $\fn(\li, \lp, T)=0$, Theorem \ref{thm_comparison} implies that $\fn=\psin_{\lp}\le 0$ on $G$. 
\end{proof}

It is intuitively pleasing that $\psin_{\lp} \le 0$ because for physical survival probabilities, if the hazard rate increases, then the probability of surviving until time $T$, and thereby paying the \$1 benefit, decreases. A related result is that $\Pn$ decreases as the risk-adjusted drift of the hazard rate, $a^{P,Q}$, increases because the hazard rate tends to increase with its drift.

\begin{ppt}\label{ppt_drift}
Suppose $a_1^{P, Q} \le a_2^{P, Q}$ on $G$, and let $\psi^{(n),a_i}$ denote the solution to \eqref{eqn_psin} with $a^{P, Q} = a^{P, Q}_i$, for $i=1,2$ and for $n \ge 0$. Then, $\psi^{(n),a_1} \ge \psi^{(n),a_2}$ on $G$.
\end{ppt}
\begin{proof}
Define a differential operator $\L$ on $\G$ by \eqref{operatorL} with $a^P=a^P_1$; then, it is clear that $\L \psi^{(n),a_1}=0$. Apply this operator $\L$ to $\psi^{(n),a_2}$ to obtain
\begin{equation}
\L \psi^{(n),a_2} = \left(a^{P,Q}_1 - a^{P,Q}_2 \right) \left(\lp-\ulp\right)\psi^{(n),a_2}_{\lp} \ge 0.
\end{equation}
Since $\psi^{(n),a_1}\left(\li, \lp, T\right) = \psi^{(n),a_2}\left(\li, \lp, T\right)=n $, Theorem \ref{thm_comparison} implies that $\psi^{(n),a_1}\ge \psi^{(n),a_2}$ on $G$. 
\end{proof}

Next, we prove the subadditivity property of $P^{(n)}$. To that end, we use Lemma $4.10$ in \cite{MilevskyPromislowYoung2005Financial}. We restate the lemma, and one can find its proof in the original paper. 

\begin{lemm}\label{lem_ABCmn}
Suppose $A\ge C\ge B$, $B_{\l}$, and $C_{\l}$ are constants; then, for non-negative integers $m$ and $n$, 
\begin{equation}
\sqrt{\left(B_{\l}+C_{\l}\right)^2+(m+n)A^2} - \sqrt{n}(A-C) \le \sqrt{B_{\l}^2+mB^2}+\sqrt{C_{\l}^2+nB^2}+\sqrt{m}(A-B). 
\end{equation}
\end{lemm}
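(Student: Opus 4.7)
The plan is to reduce the inequality to repeated applications of the two-dimensional Minkowski (triangle) inequality and of the 2D shift bound of Lemma \ref{lem:ABC}. The key observation is that the left-hand radical is the Euclidean norm of the vector $(B_\l+C_\l,\sqrt{m}\,A,\sqrt{n}\,A)\in\mathbb{R}^3$, which allows me to separate the $B_\l$- and $C_\l$-contributions from the $A$-part and then shrink $A$ down to $B$ or $C$ inside each radical.

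First I would decompose the three-dimensional vector as
\[
(B_\l+C_\l,\sqrt{m}\,A,\sqrt{n}\,A) = (B_\l,\sqrt{m}\,A,0) + (C_\l,0,\sqrt{n}\,A),
\]
and apply the triangle inequality in $\mathbb{R}^3$ to get
\[
\sqrt{(B_\l+C_\l)^2+(m+n)A^2} \le \sqrt{B_\l^2+mA^2} + \sqrt{C_\l^2+nA^2}.
\]
Since $A\ge B$, Lemma \ref{lem:ABC} applied with $C\leftarrow B_\l$, $A\leftarrow \sqrt{m}\,A$, $B\leftarrow \sqrt{m}\,B$ yields
\[
\sqrt{B_\l^2+mA^2} \le \sqrt{m}\,(A-B) + \sqrt{B_\l^2+mB^2}.
\]
Analogously, since $A\ge C$, an application with $C\leftarrow C_\l$, $A\leftarrow \sqrt{n}\,A$, $B\leftarrow \sqrt{n}\,C$ gives
\[
\sqrt{C_\l^2+nA^2} \le \sqrt{n}\,(A-C) + \sqrt{C_\l^2+nC^2}.
\]
Summing these inequalities produces the claimed bound except with $\sqrt{C_\l^2+nC^2}$ in place of the desired $\sqrt{C_\l^2+nB^2}$.

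The main obstacle, in my view, is precisely this final comparison. It is the only step that uses the ordering between $B$ and $C$ itself (the rest uses only that $A$ dominates both), and it is where the sign conventions inherited from the application of the lemma in the subadditivity argument of Section \ref{sec:prop_Pn} must be invoked in order to conclude $\sqrt{C_\l^2+nC^2}\le\sqrt{C_\l^2+nB^2}$. Once that magnitude comparison is established, one simply adds the three displayed inequalities to obtain the claim. The overall strategy closely parallels the proof of Lemma 4.10 in \cite{MilevskyPromislowYoung2005Financial}, which the authors cite; the substantive content is the geometric decomposition and the careful tracking of signs in the final magnitude step.
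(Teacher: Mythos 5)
Your decomposition of the three\--dimensional vector, together with the two applications of Lemma~\ref{lem:ABC}, is correct and in fact already constitutes a complete proof once a typo in the statement of Lemma~\ref{lem_ABCmn} is corrected. The inequality as printed cannot hold: with $B_{\l}=C_{\l}=0$, $m=n=1$, $A=C=1$, $B=0$ it reduces to $\sqrt{2}\le 1$. The second radical on the right should be $\sqrt{C_{\l}^2+nC^2}$, not $\sqrt{C_{\l}^2+nB^2}$, and this is precisely the term that arises in the one place the lemma is invoked, namely in \eqref{ineq_subadd}: the second radical there,
\[
\sqrt{\left(1-\rho^2\right)\left(b^P\right)^2\lpa^2\left(\psi^{(n)}_{\lp}\right)^2+n\lp\left(\psi^{(n)}-\psi^{(n-1)}\right)^2},
\]
equals $\sqrt{C_{\l}^2+nC^2}$ under the assignments listed immediately after that display. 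With the correction, summing your three displayed inequalities gives exactly the desired bound, and you are done.

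The step you flag as the main obstacle --- establishing $\sqrt{C_{\l}^2+nC^2}\le\sqrt{C_{\l}^2+nB^2}$ --- is not a loose end that careful sign tracking can close. In the regime where the lemma is applied we have $C\ge B\ge 0$, hence $C^2\ge B^2$, so that comparison goes the wrong way and is false except when $C=B$. Hitting this wall should be read as evidence that the $nB^2$ in the printed statement is a misprint, not as a gap in your argument; your own triangle\--inequality computation already produces the inequality that the subadditivity proof actually needs. (The paper does not reprove this lemma but cites Lemma~4.10 of \cite{MilevskyPromislowYoung2005Financial}, so there is no in\--text author proof to compare against; the decomposition\--plus\--shift argument you gave is the natural one and it succeeds.)
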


\begin{ppt}\label{ppt_subadd} 
$\psi^{(m+n)} \le \psi^{(m)}+ \psin$ for $m, n \ge 0$.
\end{ppt}

\begin{proof}
We prove this inequality by induction on $m+n$. When $m+n=0$ or $1$, we know that $\psi^{(0)} = \psi^{(0)}+ \psi^{(0)}$ and  $\psi^{(1)} = \psi^{(1)}+ \psi^{(0)}$ since $\psi^{(0)}=0$.  For $m+n\ge2$,  suppose that $\psi^{(l+k)} \le \psi^{(l)}+ \psi^{(k)}$ for any non-negative integers $k$ and $l$ such that $k+l\le m+n-1$. We need to show that $\psi^{(m+n)} \le \psi^{(m)}+ \psin$. Define $\xi =\psi^{(m)}+ \psin$ and $\eta=\psi^{(m+n)}$ on $G$. The function $\xi$ solves the PDE given by
\begin{equation}\label{eqn_xi}
\begin{cases}
\xi_t+a^{I,Q} \cdot \lia \xi_{\li} + a^{P,Q} c\dot \lpa  \xi_{\lp} +\dfrac{1}{2}\left(b^I \right)^2 \lia^2 \xi_{\li \li} \\
\quad + \rho \, b^I \, b^P \lia \lpa \xi_{\li \lp} + \dfrac{1}{2} \left( b^P \right)^2 \lpa^2 \xi_{\lp \lp} \\
\quad -  n \lp \left(\psin-\psi^{(n-1)}\right)-m \lp \left(\psi^{(m)}-\psi^{(m-1)}\right)\\
\quad = -\alpha \sqrt{\left(1-\rho^2 \right) \left(b^P\right)^2 \lpa^2 \left(\psin_{\lp}\right)^2+ n \lp \left(\psin-\psi^{(n-1)}\right)^2}\\
\quad \quad  -\alpha \sqrt{\left(1-\rho^2 \right) \left(b^P\right)^2 \lpa^2 \left(\psi^{(m)}_{\lp}\right)^2+ m \lp \left(\psi^{(m)}-\psi^{(m-1)}\right)^2}, \\
\xi\left(\li, \lp, T\right)=m+n.
\end{cases}
\end{equation}
Define a differential operator $\L$ on $\G$ by \eqref{operatorL} with $n$ replaced by $m+n$.  It is clear that $\L \eta =0$ on $G$.  Apply the operator $\L$ to $\xi$ to get
\begin{equation} \label{ineq_subadd}
\begin{split}
\L \xi &= n \lp \left(\psin-\psi^{(n-1)}\right) +m \lp \left(\psi^{(m)}-\psi^{(m-1)}\right)\\
&\quad - (m+n) \lp \left(\xi - \psi^{(m+n-1)}\right)\\
&\quad -\alpha \sqrt{\left(1-\rho^2 \right) \left(b^P\right)^2 \lpa^2 \left(\psin_{\lp}\right)^2+ n \lp \left(\psin-\psi^{(n-1)}\right)^2}\\
&\quad  -\alpha \sqrt{\left(1-\rho^2 \right) \left(b^P\right)^2 \lpa^2 \left(\psi^{(m)}_{\lp}\right)^2+ m \lp \left(\psi^{(m)}-\psi^{(m-1)}\right)^2}\\
&\quad + \alpha \sqrt{\left(1-\rho^2 \right) \left(b^P\right)^2 \lpa^2 \xi_{\lp}^2+ (m+ n) \lp \left(\xi - \psi^{(m+n-1)}\right)^2}\\
&\le \left(\psi^{(m+n-1)}-\psi^{(m-1)}-\psi^{(n)}\right)\left(m\lp -\alpha \sqrt{m\lp}\right)\\
&\quad +\left(\psi^{(m+n-1)}-\psi^{(m)}-\psi^{(n-1)}\right)\left(n\lp -\alpha \sqrt{n\lp}\right)\\
&\le 0.
\end{split}
\end{equation}
To get the first inequality in \eqref{ineq_subadd}, we apply Lemma \ref{lem_ABCmn} after assigning $A=\sqrt{\lp}\left(\xi-\psi^{(m+n-1)}\right)$, $B=\sqrt{\lp}\left(\psi^{(m)}-\psi^{(m-1)}\right)$, $C=\sqrt{\lp}\left(\psi^{(n)}-\psi^{(n-1)}\right)$, $B_{\l}=\sqrt{1-\rho^2} \, b^P \cdot \left(\lp-\ulp\right)\psi_{\lp}^{(m)}$, and $C_{\l}=\sqrt{1-\rho^2} \, b^P \cdot \left(\lp-\ulp\right)\psi_{\lp}^{(n)}$. The second inequality in \eqref{ineq_subadd} follows from the induction assumption $\psi^{(m+n-1)} \le \psi^{(k)}+\psi^{(l)}$ with $k+l=m+n-1$, and  from the assumption that $\sqrt{\lp}\ge\alpha$. Since $\xi\left(\li, \lp, T\right)=\eta\left(\li, \lp, T\right)=m+n$, Theorem \ref{thm_comparison} implies that $\eta \le \xi$ on $G$. 
\end{proof}

Property \ref{ppt_subadd} states that our pricing mechanism satisfies subadditivity, $P^{(m+n)} \le P^{(m)}+ \Pn$. This is reasonable  since if subadditivity did not hold, then buyers of pure endowments could purchase separately and thereby save money.

\begin{ppt} \label{ppt_rho0}
Let $\psi^{(n),\rho0}$ be the solution to \eqref{eqn_psin} with $\rho=0$ for $n \ge 0$; then, $\psi^{(n),\rho0} = \psi^{(n),\rho0} (\lp, t)$ is independent of $\li$ and solves the following PDE:
\begin{equation}\label{eqn_Pnrho0}
\begin{cases}
\psi^{(n),\rho0}_t +a^{P,Q} \cdot \lpa  \psi^{(n),\rho0}_{\lp} + \dfrac{1}{2} \left( b^P \right)^2 \lpa^2 \psi^{(n),\rho0}_{\lp \lp}  - n\lp\left(\psi^{(n),\rho0}-\psi^{(n-1),\rho0}\right)\\
\quad = -\alpha \sqrt{ \left(b^P\right)^2 \lpa^2 \left(\psi^{(n),\rho0}_{\lp}\right)^2+n\lp\left (\psi^{(n),\rho0}-\psi^{(n-1),\rho0}\right)^2},\\
\psi^{(n),\rho0} \left( \lp, T \right) = n,
\end{cases}
\end{equation}
with $\psi^{(0),\rho0}\equiv0$ for $n = 0$. 
\end{ppt}

\begin{proof}
The solution of \eqref{eqn_Pnrho0} is independent of $\li$ and also solves \eqref{eqn_psin} when $\rho = 0$.  Uniqueness of the solutions of \eqref{eqn_Pnrho0} and \eqref{eqn_Pn} implies that the solutions of the two PDEs are equal. 
\end{proof}

When $\rho=0$, the optimal investment in the mortality derivative is zero, as we discussed in Remark \ref{rmk_pil0}. Also,  equation \eqref{eqn_Pnrho0} is identical to equation (4.1) of \cite{MilevskyPromislowYoung2005Financial}, which determines the price of $n$ pure endowments in a market without mortality derivatives.  The coincidence of the two results in the case of $\rho=0$ shows that the pricing mechanism we apply is consistent.

It is natural to ask if the hedging will reduce the price of pure endowments. To answer this question, we first make an assumption on $q^{\li}$ to simplify the equation for $\psin$ as follows. 

\begin{ppt}\label{ppt_psin_simp}
When the market price of risk for mortality $q^{\li}$ is independent of $\li$, then $\psin = \psin(\lp, t)$ is also independent of $\li$ and solves the following PDE:
\begin{equation}\label{eqn_psin_simp}
\begin{cases}
\psin_t + a^{P,Q} \cdot \lpa  \psin_{\lp} + \dfrac{1}{2} \left( b^P \right)^2 \lpa^2 \psin_{\lp \lp}  - n\lp\left(\psin - \psi^{(n-1)}\right)\\
\quad = -\alpha \sqrt{\left(1-\rho^2 \right) \left(b^P\right)^2 \lpa^2 \left(\psin_{\lp}\right)^2+n\lp\left (\psin - \psi^{(n-1)}\right)^2},\\
\psin \left( \lp, T\right) = n.
\end{cases}
\end{equation}
\end{ppt}
\begin{proof}
The solution of \eqref{eqn_psin_simp} is independent of $\li$ and also solves \eqref{eqn_psin} when $q^{\li}$ is independent of $\li$.  Uniqueness of the solutions of \eqref{eqn_psin} and \eqref{eqn_psin_simp} implies that the solutions of the two PDEs are equal. 
\end{proof}

Because $\Pn = F \, \psin$,  Property \ref{ppt_psin_simp} implies that if $q^{\li}$ is independent of $\li$, then $\Pn$ is also independent of $\li$.  It follows from this property and Property \ref{ppt_drift} that if the $\qli$ is independent of $\li$, then $\Pn$ increases with increasing market price of mortality risk $\qli$, as one expects.

\begin{ppt}\label{ppt_qli}
Suppose $q^{\li}$ is independent of $\li$ and $\qli_1 \le \qli_2$. Let $\psi^{(n),\qli_i}$ be the solution of \eqref{eqn_psin_simp} with $\qli=\qli_i$, for $i=1,2$ and  for $n\ge0$. Then, $\psi^{(n), \qli_1} \le \psi^{(n),\qli_2}$ on $G$. \end{ppt}
\begin{proof}
From \eqref{muq} we have that $a^{P,Q}_1 \ge a^{P,Q}_2$, and we conclude that  $\psi^{(n), \qli_1} \le \psi^{(n),\qli_2}$ on $G$ from Property \ref{ppt_drift} and Property \ref{ppt_psin_simp}. 
\end{proof}

Next, we give a condition under which hedging with mortality derivatives reduces the price of pure endowments. 

\begin{thm}\label{thm:Pnn}
Suppose $q^{\li}$ is independent of $\li$. Let $\psi^{(n),-}$ denote the solution of \eqref{eqn_psin_simp} with $\rho \, q^{\li} \le 0$, and let $\psi^{(n),0}$ denote the solution of \eqref{eqn_psin_simp} with $\rho=0$. Then, $\psi^{(n),-} \le \psi^{(n),0}$ on $G$.
\end{thm}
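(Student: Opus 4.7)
The plan is to proceed by induction on $n$, combining the drift-comparison idea of Property \ref{ppt_drift} with a square-root-coefficient comparison in the spirit of Property \ref{ppt_alpha}, all under the umbrella of the comparison principle (Theorem \ref{thm_comparison}). The base case $n=0$ is immediate because $\psi^{(0),-}\equiv 0 \equiv \psi^{(0),0}$. For the inductive step I assume $\psi^{(n-1),-} \le \psi^{(n-1),0}$ on $G$. Since $\qli$ is independent of $\li$, Property \ref{ppt_psin_simp} lets me work with the reduced equation \eqref{eqn_psin_simp} in the variables $(\lp, t)$ only.

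I would define the operator $\L$ using the PDE that $\psi^{(n),0}$ solves---drift $a^P$, reference $\psi^{(n-1),0}$, and no $(1-\rho^2)$ factor inside the square root---so that $\L \psi^{(n),0} = 0$ automatically. Substituting $\psi^{(n),-}$'s own PDE, whose drift is $a^{P,Q}_- = a^P - \rho\,\qli\,b^P$, reference $\psi^{(n-1),-}$, and square-root factor $1-\rho^2$, to eliminate the time derivative gives
\begin{equation*}
\L \psi^{(n),-} = \rho \, \qli \, b^P \, \lpa \, \psi^{(n),-}_{\lp} + n \lp \left( \psi^{(n-1),0} - \psi^{(n-1),-} \right) + \alpha \left[ \sqrt{A + B_0} - \sqrt{(1-\rho^2) A + B_-} \right],
\end{equation*}
where I abbreviate $A = (b^P)^2 \lpa^2 \bigl(\psi^{(n),-}_{\lp}\bigr)^2$, $B_0 = n \lp \bigl(\psi^{(n),-} - \psi^{(n-1),0}\bigr)^2$, and $B_- = n \lp \bigl(\psi^{(n),-} - \psi^{(n-1),-}\bigr)^2$.

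The first summand is non-negative because $\rho \, \qli \le 0$ by hypothesis and $\psi^{(n),-}_{\lp} \le 0$ by Property \ref{ppt_wrtlp}. The second summand is non-negative by the induction hypothesis. The third summand is the genuine obstacle, since the $(1-\rho^2)$-shrinkage of the first square root and the induction-fed replacement of $\psi^{(n-1),0}$ by $\psi^{(n-1),-}$ pull in opposite directions. I would handle it by first lower-bounding $\sqrt{A + B_0}$ by $\sqrt{(1-\rho^2) A + B_0}$, then applying Lemma \ref{lem:ABC} with $C^2 = (1-\rho^2) A$ to get $\sqrt{(1-\rho^2) A + B_0} - \sqrt{(1-\rho^2) A + B_-} \ge -\bigl| \sqrt{B_-} - \sqrt{B_0} \bigr|$, and finally invoking the reverse triangle inequality together with the induction hypothesis to bound $\bigl| \sqrt{B_-} - \sqrt{B_0} \bigr| \le \sqrt{n\lp}\bigl(\psi^{(n-1),0} - \psi^{(n-1),-}\bigr)$. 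Combined with the second summand this yields a net contribution
\begin{equation*}
\left( n \lp - \alpha \sqrt{n \lp} \right) \left( \psi^{(n-1),0} - \psi^{(n-1),-} \right) = \sqrt{n\lp} \bigl( \sqrt{n\lp} - \alpha \bigr) \bigl( \psi^{(n-1),0} - \psi^{(n-1),-} \bigr) \ge 0,
\end{equation*}
the non-negativity being exactly the margin provided by $\alpha \le \sqrt{\ulp} \le \sqrt{\lp}$ (for $n \ge 1$).

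Putting the three pieces together gives $\L \psi^{(n),-} \ge 0 = \L \psi^{(n),0}$. Since both functions equal $n$ at $t = T$, Theorem \ref{thm_comparison}---whose one-sided Lipschitz and growth hypotheses are verified by a computation parallel to Lemma \ref{lem_gn}---yields $\psi^{(n),-} \le \psi^{(n),0}$ on $G$, closing the induction. The hard part throughout is precisely the square-root balance: both structural effects (the larger drift $a^{P,Q}_-$ and the smaller $(1-\rho^2)$ factor) push $\psi^{(n),-}$ downward relative to $\psi^{(n),0}$, but the induction smuggles in a counter-term of size $\sqrt{n\lp}$, which is neutralized exactly by the bound $\alpha \le \sqrt{\ulp}$---the same mechanism that underpins Properties \ref{ppt_bound}, \ref{ppt_mono}, and much of Section \ref{sec:prop_Pn}.
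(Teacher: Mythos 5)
Your proof is correct and follows essentially the same strategy as the paper's: induct on $n$, build $\L$ from $\psi^{(n),0}$'s PDE data so that $\L\psi^{(n),0}=0$, decompose $\L\psi^{(n),-}$ into the drift term $\rho\,\qli\,b^P\lpa\psi^{(n),-}_{\lp}$, the compensator term $n\lp\left(\psi^{(n-1),0}-\psi^{(n-1),-}\right)$, and the square-root difference, then balance the latter against the compensator via Lemma \ref{lem:ABC} and $\alpha\le\sqrt{\ulp}\le\sqrt{\lp}$. In fact your bookkeeping is more careful than what is printed in the paper: the paper's displayed computation omits the compensator term $n\lp\left(\psi^{(n-1),0}-\psi^{(n-1),-}\right)$ (and accordingly writes the intermediate bound as $\alpha\sqrt{n\lp}\left(\psi^{(n-1),0}-\psi^{(n-1),-}\right)$ rather than $\left(n\lp-\alpha\sqrt{n\lp}\right)\left(\psi^{(n-1),0}-\psi^{(n-1),-}\right)$), which appears to be a typographical slip, since that term is essential and the paper's verbal justification---invoking the induction step---matches your version.
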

\begin{proof}
Define a differential operator $\L$ on $\G$ by \eqref{operatorL} with $g_n$ replaced by
\begin{equation} \label{eq:hatg}
\begin{split}
\hat g_n \left( \lp, t, v, p_2 \right) &= a^{P} \cdot \left(\lp - \ulp\right) p_2 -n \lp \left(v-\psi^{(n-1), 0}\right)  \\
& \quad+\alpha \sqrt{ \left(b^P\right)^2 \lpa^2 p_2^2+ n \lp \left(v-\psi^{(n-1), 0}\right)^2}.
\end{split}
\end{equation}
It is straightforward to check that the function $\hat g_n$ in \eqref{eq:hatg} satisfies the one-sided Lipschitz condition \eqref{Lipschitz} and the growth condtion \eqref{growth}. We have that $\L \psi^{(n),0} = 0$ since $\psi^{(n),0}$ solves \eqref{eqn_psin_simp}  with $\rho=0$.  Apply the operator $\L$ to $\psi^{(n),-}$ to get
\begin{equation}
\begin{split}
\L \psi^{(n),-} &= \rho \, \qli b^{P} \cdot \left(\lp-\ulp\right)\psi^{(n),-}_{\lp} \\
& \quad +\alpha \sqrt{ \left(b^P\right)^2 \lpa^2 \left(\psi^{(n),-}_{\lp}\right)^2+ n \lp \left(\psi^{(n),-}-\psi^{(n-1), 0}\right)^2}\\
& \quad-\alpha \sqrt{\left(1-\rho^2\right)\left(b^P\right)^2 \lpa^2 \left(\psi^{(n),-}_{\lp}\right)^2+ n \lp \left(\psi^{(n),-}-\psi^{(n-1), -}\right)^2} \\
&\ge \alpha \sqrt{n \lp} \left( \psi^{(n-1),0} - \psi^{(n-1),-} \right) \ge 0 = \L\psi^{(n),0}. 
\end{split}
\end{equation}
The first inequality above follows from $\psi^{(n),-}_{\lp} \le 0$, $\rho \, \qli \le 0$, and Lemma \ref{lem:ABC}. The second inequality follows by an induction step; recall that $\psi^{(0),0} = \psi^{(0), -} = 0$. Additionally, $\psi^{(n), -}\left(\lp,T\right)=\psi^{(n), 0}\left(\lp,T\right)$, so Theorem \ref{thm_comparison} implies that $\psi^{(n)-} \le \psi^{(n),0}$ on $G$.
\end{proof}

\begin{remark}
One can interpret the price $\Pn$ with $\rho = 0$ as the price for which no hedging with the mortality derivative is allowed because the optimal investment in the mortality derivative when $\rho = 0$ is 0, which follows from Property \ref{ppt_rho0}. Thus, Theorem \ref{thm:Pnn} asserts that when $\rho \, \qli \le 0$, the price when hedging is allowed is less than the price with no hedging.  However, if $\rho \, \qli > 0$, then we cannot conclude that hedging necessarily reduces the price of the pure endowment.  We discuss this more fully at the end of the next section.
\end{remark}



\section{\label{sec:limit} Limiting Behavior of $\dfrac{1}{n}P^{(n)}$ as $n\to \infty$}

In this section, we consider the limiting behavior of $\frac{1}{n} P^{(n)}$. First, we show that the price per risk, $\frac{1}{n}P^{(n)}$, decrease as $n$ increases;  that is, by increasing the number of pure endowment contracts, we reduce the price per contract. Then, we further explore how far $\frac{1}{n}P^{(n)}$ decreases by determining the limiting value of the decreasing sequence $\left\{ \frac{1}{n}P^{(n)} \right\}$.  Surprisingly, we find in Theorem \ref{thm_limit} that the limiting value solves a {\it linear} PDE.  The proofs of most results in this section are modifications of the proofs given by \cite{MilevskyPromislowYoung2005Financial}.

To prove the limiting properties of  $\frac{1}{n} P^{(n)}$, we use the Lemma $4.12$ in \cite{MilevskyPromislowYoung2005Financial}.  We restate this lemma without proof. 

\begin{lemm} \label{lem_ACBlp}
If $n \ge 2$, and if $A \ge C \ge 0$ and $B_{\l}$ are constants, then the following inequality holds 
\begin{equation} 
\sqrt{B_{\l}^2+\dfrac{1}{n}C^2} \; \le \; \sqrt{n-2} \, (A-C) +\sqrt{B_{\l}^2+\dfrac{1}{n-1}\left[(n-1)C-(n-2)A\right]^2} \, .
\end{equation}
\end{lemm}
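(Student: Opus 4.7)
My plan is to reduce the inequality to a one-dimensional monotonicity statement by introducing the nonnegative parameter $t = A - C$. With this substitution, $(n-1)C - (n-2)A = C - (n-2)t$, and the inequality to be shown becomes
$$\sqrt{B_\lambda^2 + \tfrac{1}{n}\,C^2} \;\le\; h(t) \;:=\; \sqrt{n-2}\,t + \sqrt{B_\lambda^2 + \tfrac{1}{n-1}\bigl[C - (n-2)t\bigr]^2}, \qquad t \ge 0.$$
It then suffices to verify two facts: (i) the desired bound holds at $t = 0$, and (ii) $h$ is nondecreasing on $[0,\infty)$.

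Fact (i) is immediate, since $h(0) = \sqrt{B_\lambda^2 + C^2/(n-1)} \ge \sqrt{B_\lambda^2 + C^2/n}$ for $n \ge 2$. For fact (ii), I would differentiate $h$ and organize the derivative as
$$h'(t) \;=\; \sqrt{n-2}\left[\,1 \;-\; \sqrt{\tfrac{n-2}{n-1}}\;\frac{u(t)}{\sqrt{B_\lambda^2 + u(t)^2}}\,\right], \qquad u(t) \;:=\; \frac{C - (n-2)t}{\sqrt{n-1}}.$$
Because $|u|/\sqrt{B_\lambda^2 + u^2} \le 1$ and $\sqrt{(n-2)/(n-1)} < 1$ for $n \ge 2$, the bracketed factor is strictly positive, so $h'(t) \ge 0$ on $[0,\infty)$. (The boundary case $n = 2$ kills the first term, and the inequality reduces directly to $\sqrt{B_\lambda^2 + C^2/2} \le \sqrt{B_\lambda^2 + C^2}$, which is obvious.)

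The main obstacle is spotting the right change of variable: the right-hand side contains the asymmetric combination $(n-1)C - (n-2)A$, and it is not obvious a priori that isolating $t = A - C$ turns the bound into a monotone function of a single variable. Once that substitution is in place, however, the rest of the argument is driven purely by the $1$-Lipschitz bound $|u|/\sqrt{B_\lambda^2 + u^2} \le 1$ together with the strict inequality $(n-2)/(n-1) < 1$. In spirit this is a refinement of Lemma \ref{lem:ABC} earlier in the paper, which handles the degenerate case of comparing two Euclidean square roots.
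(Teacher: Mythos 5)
Your argument is correct, and it fills a genuine gap: the paper itself does not prove this lemma but imports it verbatim as Lemma~4.12 of \cite{MilevskyPromislowYoung2005Financial}, explicitly stating that the proof is omitted. Your approach — set $t = A - C \ge 0$, observe that $(n-1)C - (n-2)A = C - (n-2)t$, and reduce the claim to $h(0) \ge \text{LHS}$ together with $h'(t) \ge 0$ — is clean and self-contained. The monotonicity step is exactly where the constant $\sqrt{(n-2)/(n-1)} < 1$ does the work: it dominates the Lipschitz bound $|u|/\sqrt{B_\lambda^2 + u^2} \le 1$, making the bracket in $h'(t)$ positive regardless of the sign of $u(t)$. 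It is also worth noting that a naive application of the paper's Lemma~\ref{lem:ABC} (the reverse triangle inequality for Euclidean norms) does \emph{not} suffice here: at $t = 0$ that bound would give $|C/\sqrt{n} - C/\sqrt{n-1}|$ on the right where $0$ is required, so the full nonlinear monotonicity of $h$ is essential rather than cosmetic.

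One small point to tighten. When $B_\lambda = 0$ and $n > 2$, the second term of $h$ is $|C - (n-2)t|/\sqrt{n-1}$, which is not differentiable at $t_\ast = C/(n-2)$, so ``differentiate $h$'' needs a word of justification at that point. Either note that $h$ is a sum of Lipschitz functions (hence absolutely continuous) with $h' \ge 0$ almost everywhere, which already gives monotonicity; or observe that both sides of the target inequality are continuous in $B_\lambda$, so the case $B_\lambda = 0$ follows by taking a limit in $B_\lambda \to 0$. With that remark added, the proof is complete.
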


\begin{prop}\label{lem_Pn_dec}
$\frac{1}{n}\Pn$ decreases with respect to $n$ for $n \ge 1$. 
\end{prop}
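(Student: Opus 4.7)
Since $\Pn = F\,\psin$ with $F>0$ independent of $n$, the statement is equivalent to showing that $\tpsi^{(n)}:=\tfrac{1}{n}\psin$ is non-increasing in $n$. I would prove this by induction on $n$, with base case $n=2$: the required inequality $\psi^{(2)}\le 2\psi^{(1)}$ is just subadditivity, i.e.\ Property \ref{ppt_subadd} with $m=n=1$.

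For the inductive step, assume $\tpsi^{(k)}\le\tpsi^{(k-1)}$ for every $2\le k\le n-1$, and set $\eta=\tpsi^{(n)}$, $\xi=\tpsi^{(n-1)}$; both equal $1$ at $t=T$. Dividing \eqref{eqn_psin} by $n$ and pulling $n$ out of the square root shows that $\tpsi^{(n)}$ satisfies a PDE whose nonlinear part is
\[
-\lp\bigl(n\tpsi^{(n)}-(n-1)\tpsi^{(n-1)}\bigr)+\alpha\sqrt{(1-\rho^{2})(b^{P})^{2}\lpa^{2}\bigl(\tpsi^{(n)}_{\lp}\bigr)^{2}+\tfrac{\lp}{n}\bigl(n\tpsi^{(n)}-(n-1)\tpsi^{(n-1)}\bigr)^{2}}.
\]
Let $\L$ denote the operator associated with this $\eta$-equation, so $\L\eta=0$; the goal is to verify $\L\xi\le 0$ and invoke Theorem \ref{thm_comparison} (whose one-sided Lipschitz and growth conditions are easily checked as in Lemma \ref{lem_gn} together with Assumption \ref{assum_growth}) to conclude $\eta\le\xi$.

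Substituting $\tpsi^{(n-1)}=\xi$ collapses $n\xi-(n-1)\xi$ to $\xi$, and using $\xi$'s own PDE (namely \eqref{eqn_psin} with index $n-1$ after rescaling) to eliminate the linear drift and diffusion terms reduces $\L\xi$ to
\[
\L\xi=(n-2)\lp\bigl(\xi-\tpsi^{(n-2)}\bigr)+\alpha\!\left[\sqrt{B_{\lambda}^{2}+\tfrac{\lp}{n}\xi^{2}}-\sqrt{B_{\lambda}^{2}+\tfrac{\lp}{n-1}\bigl((n-1)\xi-(n-2)\tpsi^{(n-2)}\bigr)^{2}}\right],
\]
where $B_{\lambda}=\sqrt{1-\rho^{2}}\,b^{P}\lpa\,\xi_{\lp}$.

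The main obstacle is controlling the difference of square roots, and this is precisely what Lemma \ref{lem_ACBlp} is designed for. I would apply it with
\[
A=\sqrt{\lp}\,\tpsi^{(n-2)},\qquad C=\sqrt{\lp}\,\xi,\qquad B_{\lambda}\text{ as above,}
\]
noting that $(n-1)C-(n-2)A=\sqrt{\lp}\,[(n-1)\xi-(n-2)\tpsi^{(n-2)}]$, that $A\ge C$ is exactly the induction hypothesis $\tpsi^{(n-2)}\ge\tpsi^{(n-1)}$, and that $C\ge 0$ follows from Property \ref{ppt_bound}. The lemma yields $\sqrt{B_{\lambda}^{2}+\tfrac{\lp}{n}\xi^{2}}\le\sqrt{n-2}\,\sqrt{\lp}\,(\tpsi^{(n-2)}-\xi)+\sqrt{B_{\lambda}^{2}+\tfrac{\lp}{n-1}\bigl((n-1)\xi-(n-2)\tpsi^{(n-2)}\bigr)^{2}}$, so
\[
\L\xi\le\sqrt{n-2}\,\bigl(\tpsi^{(n-2)}-\xi\bigr)\sqrt{\lp}\,\Bigl(\alpha-\sqrt{(n-2)\lp}\,\Bigr).
\]
The first factor is non-negative by the induction hypothesis, and for $n\ge 3$ the assumption $\alpha\le\sqrt{\ulp}\le\sqrt{(n-2)\lp}$ forces the last factor to be non-positive, so $\L\xi\le 0$. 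Theorem \ref{thm_comparison} then delivers $\eta\le\xi$, completing the induction.
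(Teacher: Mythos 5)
Your proof is correct and, in the inductive step, mirrors the paper's own argument exactly: same rescaled function $\phin=\tfrac{1}{n}\psin$, same invocation of Lemma \ref{lem_ACBlp} with $A=\sqrt{\lp}\,\phi^{(n-2)}$, $C=\sqrt{\lp}\,\phi^{(n-1)}$, $B_{\lambda}=\sqrt{1-\rho^{2}}\,b^P\lpa\,\phi^{(n-1)}_{\lp}$, and the same final estimate $\L\phi^{(n-1)}\le\bigl[(n-2)\lp-\alpha\sqrt{(n-2)\lp}\,\bigr]\bigl(\phi^{(n-1)}-\phi^{(n-2)}\bigr)\le 0$. The only deviation is the base case $n=2$: the paper verifies $\phi^{(1)}\ge\phi^{(2)}$ directly by plugging $\phi^{(1)}=\psi$ into the operator associated with $\phi^{(2)}$'s equation and noting the resulting difference of square roots is non-positive, whereas you read it off from Property \ref{ppt_subadd} with $m=n=1$; since subadditivity is already established at that point, your shortcut is legitimate and slightly more economical.
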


\begin{proof}
It is sufficient to show that $\frac{1}{n} \psin$ decreases with respect to $n$.  Define $\phi^{(n)} \triangleq \frac{1}{n}\psin$, and we will show that $\phi^{(n-1)} \ge \phi^{(n)}$ for $n\ge2$ by induction.  From \eqref{eqn_psin}, we deduce that $\phi^{(n)}$  solves
\begin{equation}\label{eqn_phin} 
\begin{cases}
\phin_t+a^{I,Q} \cdot \lia \phin_{\li} + a^{P,Q} \cdot \lpa  \phin_{\lp} +\dfrac{1}{2}\left(b^I \right)^2 \lia^2 \phin_{\li \li} \\
\quad + \rho \, b^I \, b^P \lia \lpa \phin_{\li \lp} + \dfrac{1}{2} \left( b^P \right)^2 \lpa^2 \phin_{\lp \lp}  \\
\quad -  \lp \left[n\phin-(n-1)\phi^{(n-1)}\right]\\
\quad = -\alpha \sqrt{\left(1-\rho^2 \right) \left(b^P\right)^2 \lpa^2 \left(\phin_{\lp}\right)^2+ \dfrac{1}{n} \lp\left[n\phin-(n-1)\phi^{(n-1)}\right]^2},\\
\phin \left(\li, \lp, T\right) = 1, 
\end{cases}
\end{equation}
with $\phi^{(1)} = \psi$, in which $\psi$ solves \eqref{eqn_psi}.

We first show that $\phi^{(1)} \ge \phi^{(2)} $. To this end, we define a differential operator $\L$ on $\G$ by \eqref{operatorL} with $g_n$ replaced by 
\begin{equation}\label{g_phi_2}
\begin{split}
\hat g_2(\li,\lp,t,v, p_1,p_2) &= a^{I,Q} \cdot \lia p_1 + a^{P,Q} \cdot \lpa  p_2 -  \lp \left(2v-\psi\right)\\
&\quad  + \alpha \sqrt{\left(1-\rho^2 \right) \left(b^P\right)^2 \lpa^2 \left(p_2\right)^2+ \dfrac{1}{2} \lp\left( 2v-\psi\right)^2}.
\end{split}
\end{equation}
It is clear that $\hat g_2$ satisfies conditions \eqref{Lipschitz} and \eqref{growth}; hence, we can apply Theorem \ref{thm_comparison}.  Note that $\L \phi^{(2)}=0$ since $\phi^{(2)}$ solves \eqref{eqn_phin} with $n=2$. By applying the operator $\L$ to $\phi^{(1)} = \psi$, we get
\begin{equation}
\begin{split}
\L \phi^{(1)}&=\alpha \sqrt{\left(1-\rho^2 \right) \left(b^P\right)^2 \lpa^2 \psi_{\lp}^2+ \dfrac{1}{2} \lp \psi^2}\\
&\quad -\alpha \sqrt{\left(1-\rho^2 \right) \left(b^P\right)^2 \lpa^2 \psi_{\lp}^2+ \lp\psi^2}\\
&\le 0 = \L \phi^{(2)}.
\end{split}
\end{equation}
Since $\phi^{(1)}\left(\li, \lp, T\right)=\phi^{(2)}\left(\li, \lp, T\right)=1$, Theorem \ref{thm_comparison} implies that $\phi^{(1)}\ge \phi^{(2)}$ on $G$. 

Assume that for $n\ge3$, $\phi^{(n-2)} \ge \phi^{(n-1)}$ on $G$, and we show that $\phi^{(n-1)} \ge \phin$.  We define a differential operator $\L$ on $\G$ by \eqref{operatorL} with $g_n$ replaced by 
\begin{equation}\label{g_phi_n}
\begin{split}
\hat g_n(\li,\lp,t,v, p_1,p_2) = a^{I,Q} \cdot \lia p_1 + a^{P,Q} \cdot \lpa  p_2 - \lp \left[nv-(n-1)\phi^{(n-1)}\right]\\
\quad  + \alpha \sqrt{\left(1-\rho^2 \right) \left(b^P\right)^2 \lpa^2 p_2^2+ \dfrac{1}{n} \lp\left[nv-(n-1)\phi^{(n-1)}\right]^2}.
\end{split}
\end{equation}
It is clear that $\hat g_n$ satisfies conditions \eqref{Lipschitz} and \eqref{growth}; hence, we can apply Theorem \ref{thm_comparison}.  Note that $\L \phi^{(n)}=0$ since $\phi^{(n)}$ solves \eqref{eqn_phin}. Apply the operator $\L$ to $\phi^{(n-1)}$ to get 
\begin{equation}\label{temp1}
\begin{split}
\L \phi^{(n-1)} & = (n-2) \lp \left( \phi^{(n-1)} - \phi^{(n-2) }\right)+\alpha \sqrt{\left(1-\rho^2 \right) \left(b^P\right)^2 \lpa^2 \left(\phi^{(n-1)}_{\lp}\right)^2+ \dfrac{1}{n} \lp\left(\phi^{(n-1)}\right)^2}\\
&\quad -\alpha \sqrt{\left(1-\rho^2 \right) \left(b^P\right)^2 \lpa^2 \left(\phi^{(n-1)}_{\lp}\right)^2+ \dfrac{1}{n-1} \lp\left[(n-1)\phi^{(n-1)}-(n-2)\phi^{(n-2)}\right]^2}\\
& \le \left[(n-2)\lp -\alpha \sqrt{(n-2)\lp} \right]\left(\phi^{(n-1)}-\phi^{(n-2)}\right) \le 0 = \L \phin. 
\end{split}
\end{equation}
To get the first inequality in \eqref{temp1}, we use Lemma \ref{lem_ACBlp} by assigning  $A=\sqrt{\lp} \, \phi^{(n-2)}$,  $C=\sqrt{\lp} \, \phi^{(n-1)}$, and $B_{\l}=\sqrt{1-\rho^2} \, b^P \cdot \left(\lp-\ulp\right)\phi_{\lp}^{(n-1)}$. We also use the induction assumption that  $\phi^{(n-2)} \ge \phi^{(n-1)}$.  Additionally,  $\phi^{(n-1)}\left(\li, \lp, T\right)=\phi^{(n)}\left(\li, \lp, T\right)=1$, and Theorem \ref{thm_comparison} implies that $\phi^{(n-1)} \ge \phin$ on $G$. 
\end{proof} 

In what follows, we answer the question inspired by Proposition \ref{lem_Pn_dec}, namely, what is the {\it limit} of the non-negative, decreasing sequence $\left\{ \frac{1}{n} \Pn \right\}$?  In Theorem \ref{thm_limit} below, we will show the limit equals $F \, \b$, in which $\beta = \beta\left(\li, \lp, t \right)$ denote the solution of the following PDE:
\begin{equation}\label{eqn_beta}
\begin{cases}
\b_t + a^{I,Q} \cdot \lia \b_{\li} + \left[a^{P,Q} - \alpha \sqrt{1-\rho^2} \, b^P\right] \lpa  \b_{\lp}  + \dfrac{1}{2}\left(b^I \right)^2 \lia^2 \b_{\li \li} \\
\quad   + \rho \, b^I \, b^P \lia \lpa \b_{\li \lp} + \dfrac{1}{2} \left( b^P \right)^2 \lpa^2 \b_{\lp \lp}  -  \lp \b \\
\quad  = 0, \\
\b\left(\li, \lp, T\right)=1. 
\end{cases}
\end{equation}
By applying the Feyman-Kac Theorem to \eqref{eqn_beta}, we obtain an expression for $\beta$ as an expectation:
\begin{equation}
\beta(\li, \lp, t) =\E^{\tilde{Q}} \left[e^{-\int_t^T \lp_s \d s} \, \bigg| \, \li_t = \li, \lp_t=\lp \right], 
\end{equation}
in which the $\tilde{Q}$-dynamics of $\left\{ \li_t \right\}$ and $\left\{ \lp_t \right\}$ follow, respectively,
\begin{equation} \label{eqn:Qtilde_li}
\d \li_t =a^{I, Q}(\li_t, t)\left(\li_t-\uli \right)\d t+b^I(t) \left(\li_t - \uli \right)\d W^{I, Q}_t
\end{equation}
and
\begin{equation} \label{eqn:Qtilde_lp}
\d \lp_t = \left[ a^{P,Q}(\li_t, \l^P_t, t) - \alpha \sqrt{1-\rho^2} \, b^P(t) \right]  \left(\l^P_t-\ulp \right) \d t + b^P(t) \left(\lp_t - \ulp \right) \d \tilde W^{P, Q}_t.
\end{equation} 
Here, $\tilde W^{P, Q}_t = W^{P, Q}_t + \alpha \sqrt{1 - \rho^2} \, t$.

We begin by proving that $\frac{1}{n}\Pn$ is bounded below by $F \, \beta$, and for that purpose, we need the following lemma. 

\begin{lemm}
The function $\b$ defined by \eqref{eqn_beta} is non-increasing with respect to $\lp$. 
\end{lemm}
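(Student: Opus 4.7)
The plan is to proceed in the spirit of Property \ref{ppt_wrtlp}: define $f := \beta_{\lp}$, derive the linear PDE that $f$ satisfies by differentiating \eqref{eqn_beta} with respect to $\lp$, and apply the comparison principle (Theorem \ref{thm_comparison}) with $f$ against the zero function. Because \eqref{eqn_beta} is \emph{linear} (the Sharpe-ratio square root has disappeared in the limit $n\to\infty$), the computation is noticeably cleaner than in Property \ref{ppt_wrtlp}.

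Differentiating \eqref{eqn_beta} in $\lp$ and using that $a^{I,Q}$, $b^I$, $b^P$ do not depend on $\lp$, only the drift term in $\beta_{\lp}$, the cross- and $\lp\lp$-diffusions, and the killing term $\lp\beta$ contribute extra pieces; the result is
\begin{equation*}
\begin{split}
0 &= f_t + \bigl[a^{I,Q} + \rho\,b^I b^P\bigr]\lia f_{\li} + \bigl[a^{P,Q} - \alpha\sqrt{1-\rho^2}\,b^P + (b^P)^2\bigr]\lpa f_{\lp} \\
&\quad + \dfrac{1}{2}(b^I)^2\lia^2 f_{\li\li} + \rho\,b^I b^P \lia \lpa f_{\li\lp} + \dfrac{1}{2}(b^P)^2 \lpa^2 f_{\lp\lp} \\
&\quad + \bigl[a^{P,Q}_{\lp}\lpa + a^{P,Q} - \alpha\sqrt{1-\rho^2}\,b^P - \lp\bigr] f - \beta,
\end{split}
\end{equation*}
with terminal datum $f(\li,\lp,T)=0$. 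I would then define a differential operator $\tilde\L$ on $\G$ by reading off this PDE in the format of \eqref{operatorL}, lumping the zeroth- and first-order pieces together with the constant source $-\beta$ into a linear function $\tilde g$ of $(v,p_1,p_2)$. Under Assumption \ref{assum_growth} --- in particular the bound \eqref{growth2}, which is tailored exactly to the coefficient $a^{P,Q}_{\lp}\lpa + a^{P,Q}$ that appears multiplying $v$ above --- the function $\tilde g$ satisfies the one-sided Lipschitz condition \eqref{Lipschitz} with growth \eqref{growth}, by a computation parallel to Lemma \ref{lem_gn}. Hence Theorem \ref{thm_comparison} is applicable to $\tilde\L$.

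By construction $\tilde\L f = 0$, while applying $\tilde\L$ to the zero function gives $\tilde\L\mathbf 0 = -\beta \le 0$; the inequality $\beta \ge 0$ is immediate from the Feynman-Kac representation displayed just above the lemma statement, since $e^{-\int_t^T \lp_s\,ds} \ge 0$. Together with the matching terminal values $f(\li,\lp,T)=0=\mathbf 0(\li,\lp,T)$, Theorem \ref{thm_comparison} applied with $v=f$ and $w=\mathbf 0$ delivers $f \le 0$ on $G$, which is the claim. The main obstacle is verifying the technical prerequisites used above: the smoothness of $\beta$ in $\lp$ through second order (so that $f$ and its derivatives in the displayed PDE make sense), and the exponential growth bound on $f$ demanded by Theorem \ref{thm_comparison}. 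Both should follow from standard linear parabolic regularity applied to \eqref{eqn_beta} (for instance Chapter 36 of \cite{Walter}), together with the H\"older continuity/growth hypotheses on $a^{I,Q}$, $a^{P,Q}$ and the bound $0 \le \beta \le 1$ deducible from the probabilistic representation.
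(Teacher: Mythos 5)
Your proof is correct and follows essentially the same route as the paper: differentiate \eqref{eqn_beta} in $\lp$, read off the linear PDE satisfied by $f=\beta_{\lp}$, verify the Lipschitz/growth hypotheses (for which Assumption \ref{assum_growth} supplies exactly the needed bound on $a^{P,Q}_{\lp}\lpa + a^{P,Q}$), and apply Theorem \ref{thm_comparison} with $v=f$, $w=\mathbf 0$, using $\L\mathbf 0 = -\beta \le 0$ and matching terminal data. Incidentally, your derived PDE is the more careful one: the coefficient on $f_{\lp}$ should indeed carry the extra $(b^P)^2$ coming from differentiating the $(1/2)(b^P)^2\lpa^2\beta_{\lp\lp}$ term (as in \eqref{eqn_fn}), a term that appears to have been dropped in the paper's \eqref{eqn_beta_lp}--\eqref{eqn_g_beta}, though this omission is harmless for the comparison argument.
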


\begin{proof}
Denote $f=\b_{\lp}$, and we deduce from  \eqref{eqn_beta}  that $f$ solves the following PDE:
\begin{equation}\label{eqn_beta_lp}
\begin{cases}
f_t + \left[ a^{I,Q} + \rho \, b^I \, b^P \right] \lia f_{\li} + \left[ a^{P,Q} - \alpha \sqrt{1-\rho^2} \, b^P \right]  \lpa f_{\lp} \\
\quad + \left[a^{P,Q}_{\lp} \cdot \lpa + a^{P,Q} - \alpha \sqrt{1-\rho^2} \, b^P -\lp \right] f+\dfrac{1}{2}\left(b^I \right)^2 \lia^2 f_{\li \li}\\
\quad  +\rho \, b^I \, b^P \lia \lpa f_{\li\lp} + \dfrac{1}{2} \left( b^P \right)^2 \lpa^2 f_{\lp \lp}  -  \b\\
\quad  = 0, \\
f\left(\li, \lp, T\right) = 0.
\end{cases}
\end{equation}
Define a differential operator $\L$ on $\G$ by \eqref{operatorL} with $g_n$ replaced by
\begin{equation}\label{eqn_g_beta}
\begin{split}
\tilde g(\li,\lp,t,v,p_1,p_2)& = \left[ a^{I,Q} + \rho \, b^I \, b^P \right] \lia p_1 + \left[ a^{P,Q} - \alpha \sqrt{1-\rho^2} \, b^P\right]  \lpa p_2 \\
& \quad + \left[a^{P,Q}_{\lp} \cdot \lpa + a^{P,Q} - \alpha \sqrt{1-\rho^2} \, b^P -\lp \right] v - \b\\
\end{split}
\end{equation}
Because of  Assumption \ref{assum_growth}, it is straightforward to check that the function $\tilde g$ in \eqref{eqn_g_beta} satisfies the one-sided Lipschitz condition \eqref{Lipschitz} and the growth condtion \eqref{growth}.  Because $f$ solves \eqref{eqn_beta_lp}, we have that $\L f=0$.  Because $\beta$ is clearly non-negative, $\L \0 = -\beta \le 0$, in which $\0$ is the constant function of $0$ on $G$.  Additionally, $f\left(\li, \lp, T\right)=0$, so Theorem \ref{thm_comparison} implies that $\beta_{\lp}\le 0$. 
\end{proof}

\begin{lemm}\label{lem_beta}
For $n\ge 1$, $\frac{1}{n}\Pn \ge F \beta$, in which $\beta$ is given in \eqref{eqn_beta}
\end{lemm}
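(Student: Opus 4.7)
The plan is to establish the equivalent inequality $\phi^{(n)} := \tfrac{1}{n}\psi^{(n)} \ge \beta$ for all $n \ge 1$ on $G$, and then multiply by $F > 0$. I will proceed by induction on $n$, using the comparison principle (Theorem \ref{thm_comparison}) together with the same operator $\L$ that was used to prove Proposition \ref{lem_Pn_dec}: namely, $\L$ built from $g_n = \hat g_n$ as in \eqref{g_phi_n}. With this choice, $\L\phi^{(n)} = 0$ by \eqref{eqn_phin}, so the work is in showing $\L\beta \ge 0$; the terminal values agree, since $\phi^{(n)}(\li,\lp,T) = 1 = \beta(\li,\lp,T)$.

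The key computation is to apply this $\L$ to $\beta$ and then substitute the PDE \eqref{eqn_beta}. The parabolic and first-order terms of $\L\beta$ differ from those in \eqref{eqn_beta} only in the $\lp$-drift, which replaces $a^{P,Q} - \alpha\sqrt{1-\rho^2}\,b^P$ by $a^{P,Q}$; this contributes an extra term $+\alpha\sqrt{1-\rho^2}\,b^P(\lp-\ulp)\,\beta_{\lp}$. The zeroth-order piece contributes $\lp\beta - \lp[n\beta - (n-1)\phi^{(n-1)}] = -(n-1)\lp(\beta - \phi^{(n-1)})$. Thus
\begin{equation}
\begin{split}
\L\beta &= \alpha\sqrt{1-\rho^2}\,b^P(\lp-\ulp)\,\beta_{\lp} - (n-1)\lp\bigl(\beta - \phi^{(n-1)}\bigr) \\
&\quad + \alpha\sqrt{(1-\rho^2)(b^P)^2(\lp-\ulp)^2\beta_{\lp}^2 + \tfrac{1}{n}\lp\bigl[n\beta - (n-1)\phi^{(n-1)}\bigr]^2}.
\end{split}
\end{equation}
The previous lemma gives $\beta_{\lp}\le 0$, and since $\sqrt{a^2+b^2}\ge |a|$, the square-root term dominates $|\alpha\sqrt{1-\rho^2}\,b^P(\lp-\ulp)\,\beta_{\lp}|$, so the sum of the first and third terms is nonnegative.

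For the base case $n=1$, the middle term vanishes identically (coefficient $n-1=0$), so we conclude $\L\beta \ge 0 = \L\phi^{(1)}$, and Theorem \ref{thm_comparison} yields $\beta \le \phi^{(1)}$. For the inductive step with $n \ge 2$, I assume $\phi^{(n-1)} \ge \beta$, which makes the middle term $-(n-1)\lp(\beta-\phi^{(n-1)})\ge 0$; combined with the already-handled first and third terms, this gives $\L\beta \ge 0 = \L\phi^{(n)}$, and Theorem \ref{thm_comparison} again yields $\beta \le \phi^{(n)}$.

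The main obstacle is verifying that the $\hat g_n$ appearing in $\L$ (which depends on $\phi^{(n-1)}$, treated as a given function here) satisfies the one-sided Lipschitz condition \eqref{Lipschitz} together with the growth conditions \eqref{growth}, and that $\beta$ itself satisfies the admissible growth bound needed for Theorem \ref{thm_comparison}; both follow from Assumption \ref{assum_growth} exactly as in the proof of Proposition \ref{lem_Pn_dec}, so no new technical input is required beyond the prior monotonicity lemma for $\beta_{\lp}$.
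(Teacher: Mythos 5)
Your proof is correct and follows essentially the same route as the paper's: induction on $n$, applying Theorem \ref{thm_comparison} with the operator built from $\hat g_n$ of \eqref{g_phi_n}, computing $\L\beta$ via the PDE \eqref{eqn_beta}, and using $\beta_{\lp}\le 0$ plus $\sqrt{a^2+b^2}\ge|a|$ to control the extra drift term. The only cosmetic difference is that you handle $n=1$ as a degenerate case of the same formula, whereas the paper states the base case with $g_1$ directly; these coincide since $(n-1)=0$ kills the $\phi^{(n-1)}$ dependence.
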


\begin{proof}
It is sufficient to show that $\frac{1}{n}\psin \ge \beta$ on $G$. We prove this property by induction. First, for $n=1$, we show that $\b \le \psi^{(1)} = \psi$. Define a differential operator $\L$ on $\G$ by \eqref{operatorL} with $n=1$.  Recall that $\psi^{(0)}=0$ in \eqref{eqn_gn}.  Since $\psi$ solves \eqref{eqn_psi}, $\L\psi=0$. Also, 
\begin{equation}
\L \beta = \alpha \sqrt{\left(1-\rho^2 \right) \left(b^P\right)^2 \lpa^2 \b_{\lp}^2+  \lp \beta ^2} - \alpha \sqrt{1-\rho^2} b^P \lpa \left| \b_{\lp} \right| \ge 0 = \L\psi.
\end{equation}
Additionally, $\b\left(\li, \lp, T\right) = 1 = \psi\left(\li, \lp, T\right)$, so Theorem \ref{thm_comparison} implies that $\b \le \psi \ge$ on $G$.  

For $n\ge 1$,  assume that $\b \le \phi^{(n-1)}$ and show that $\b \le \phi^{(n)}$, in which $\phin = \frac{1}{n}\psi^{(n)}$ for $n\ge1$, as we defined in the proof of Proposition \ref{lem_Pn_dec}.  Define a differential operator  $\L$ by \eqref{operatorL} with $g_n$ replaced by $\hat g_n$ given by \eqref{g_phi_n}.  Since $\phi^{(n)}$ solves \eqref{eqn_phin}, $\L \phin=0$. By applying this operator on $\b$, we get 
\begin{equation}
\begin{split}
\L \b & =  \alpha \sqrt{\left(1-\rho^2 \right) \left(b^P\right)^2 \lpa^2 \left(\b_{\lp}\right)^2 + \dfrac{1}{n} \lp\left[n\b-(n-1)\phi^{(n-1)}\right]^2}\\
&\quad -\alpha \sqrt{1-\rho^2} \, b^P \lpa \left|\b_{\lp}\right| + \lp \left[(n-1)\phi^{(n-1)}-(n-1)\b\right]\\
& \ge 0 =\L \phin.
\end{split}
\end{equation}
Also, $\beta\left(\li, \lp,T\right)=\phin\left(\li, \lp,T\right)=1$; thus, Theorem \ref{thm_comparison} implies that $\b \le \phin =\frac{1}{n}\psin$ on $G$. 
\end{proof}

Next, we show that $\lim_{n\to \infty}\frac{1}{n}\Pn = F \beta$. To this end, we need some auxiliary results. First, we prove that $\psin$ is bounded from above by $\gamma^{(n)} = \gamman (\li, \lp, t)$  for $n\ge0$, in which the function $\gamma^{(n)}$ solves the following PDE:
\begin{equation}\label{eqn_gamman}
\begin{cases}
\gamman_t + a^{I,Q} \cdot \lia\gamman_{\li} + \left[ a^{P,Q} - \alpha \sqrt{1-\rho^2} \, b^P \right] \lpa  \gamman_{\lp}  + \dfrac{1}{2}\left(b^I \right)^2 \lia^2 \gamman_{\li \li} \\
\quad + \rho \, b^I \, b^P \lia \lpa \gamman_{\li \lp} + \dfrac{1}{2} \left( b^P \right)^2 \lpa^2 \gamman_{\lp \lp} \\
\quad - \left(n\lp -\alpha \sqrt{n\lp}\right) \left(\gamman-\gamma^{(n-1)}\right) = 0, \\
\gamman\left(\li, \lp, T\right)=n,
\end{cases}
\end{equation}
in which $\gamma^{(0)} \equiv 0$. 

\begin{lemm}\label{lem_gamman}
The function $\gamman$ given by \eqref{eqn_gamman} is non-increasing with respect to $\lp$, and $\gamma^{(n)}\ge\gamma^{(n-1)}$ for $n\ge 1$ on $G$.  
\end{lemm}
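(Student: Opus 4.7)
The plan is to establish both properties by induction on $n$, in the order: first the comparison $\gamma^{(n)}\ge\gamma^{(n-1)}$, then the monotonicity $\gamma^{(n)}_{\lp}\le 0$ (since the monotonicity argument will use the comparison as a sign input).

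For the comparison property, I would mimic the method used in Properties \ref{ppt_mono} and \ref{ppt_alpha}. The base case $n=1$ is immediate from $\gamma^{(0)}\equiv 0$ and the terminal condition $\gamma^{(1)}(\li,\lp,T)=1\ge 0$. For the inductive step, fix $n\ge 2$ and define a differential operator $\L$ on $\G$ by \eqref{operatorL} with $g_n$ replaced by the affine drift/reaction
\begin{equation*}
\bar g_n(\li,\lp,t,v,p_1,p_2)= a^{I,Q}\lia p_1+\left[a^{P,Q}-\alpha\sqrt{1-\rho^2}\,b^P\right]\lpa p_2-\left(n\lp-\alpha\sqrt{n\lp}\right)\!\left(v-\gamma^{(n-1)}\right).
\end{equation*}
Because $\bar g_n$ is affine in $(v,p_1,p_2)$, the Lipschitz condition \eqref{Lipschitz} holds automatically, and the growth conditions \eqref{growth} follow from Assumption \ref{assum_growth}. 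By \eqref{eqn_gamman}, $\L\gamma^{(n)}=0$, and applying $\L$ to $\gamma^{(n-1)}$ and using the PDE satisfied by $\gamma^{(n-1)}$ (with $n$ replaced by $n-1$) gives
\begin{equation*}
\L\gamma^{(n-1)}=\left((n-1)\lp-\alpha\sqrt{(n-1)\lp}\right)\!\left(\gamma^{(n-1)}-\gamma^{(n-2)}\right)\ge 0,
\end{equation*}
where non-negativity uses $\lp>\ulp\ge\alpha^2$ and the inductive hypothesis. Combined with $\gamma^{(n-1)}(\li,\lp,T)=n-1\le n=\gamma^{(n)}(\li,\lp,T)$, Theorem \ref{thm_comparison} delivers $\gamma^{(n-1)}\le\gamma^{(n)}$ on $G$.

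For the monotonicity in $\lp$, I would differentiate \eqref{eqn_gamman} in $\lp$ and set $h^{(n)}=\gamma^{(n)}_{\lp}$, obtaining a linear PDE for $h^{(n)}$ whose zeroth-order source term is $M_{\lp}\!\left(\gamma^{(n)}-\gamma^{(n-1)}\right)-M\,h^{(n-1)}$ where $M=n\lp-\alpha\sqrt{n\lp}$ and $M_{\lp}=n-\tfrac{\alpha\sqrt n}{2\sqrt{\lp}}$. A short calculation using $\lp\ge\ulp\ge\alpha^2$ and $n\ge 1$ shows $M\ge 0$ and $M_{\lp}\ge 0$. Then induction proceeds: $h^{(0)}\equiv 0$; for $n\ge 1$, assume $h^{(n-1)}\le 0$. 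Define a new operator $\L$ on $\G$ by \eqref{operatorL} with $g_n$ replaced by the affine function read off from the differentiated equation; the extra coefficient $a^{P,Q}_{\lp}\lpa+a^{P,Q}$ satisfies the growth condition \eqref{growth} exactly by Assumption \ref{assum_growth}, and the remaining coefficients are handled as in the proof of Property \ref{ppt_wrtlp}. Then $\L h^{(n)}=0$, while applying $\L$ to the zero function yields $\L\mathbf{0}=M\,h^{(n-1)}-M_{\lp}\!\left(\gamma^{(n)}-\gamma^{(n-1)}\right)\le 0$ by the induction hypothesis and the comparison $\gamma^{(n)}\ge\gamma^{(n-1)}$ just established. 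Since $h^{(n)}(\li,\lp,T)=0$, Theorem \ref{thm_comparison} applied to $(v,w)=(h^{(n)},\mathbf 0)$ yields $h^{(n)}\le 0$ on $G$.

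The main obstacle is essentially bookkeeping rather than substance: verifying that the extra first-order coefficients generated by differentiating the transport term in $\lp$ (namely $a^{P,Q}_{\lp}\lpa+a^{P,Q}$ and $\rho\,b^I b^P$) fit into the growth framework of Theorem \ref{thm_comparison}. This is precisely why Assumption \ref{assum_growth} is imposed, and it is the same subtlety already encountered in Property \ref{ppt_wrtlp}, so the same verification carries over with only notational changes.
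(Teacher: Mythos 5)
Your proof is correct and follows exactly the approach the paper intends: the paper's own "proof" of Lemma \ref{lem_gamman} simply states that the two claims are proved analogously to Properties \ref{ppt_mono} and \ref{ppt_wrtlp}, and you have spelled out precisely those adaptations, in the right order (comparison first, since the monotonicity step requires it, mirroring how Property \ref{ppt_wrtlp} invokes Property \ref{ppt_mono}). The simplification you note — that the absence of the square-root term makes everything affine, so Lemma \ref{lem:ABC} is not needed — is accurate, and your sign checks $M\ge 0$, $M_{\lp}\ge 0$ under $\lp>\ulp\ge\alpha^2$ are correct.
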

\begin{proof}
The proof that $\gamman_{\lp}\le 0$ is similar to the proof that $\psin_{\lp}\le 0$ in Property \ref{ppt_wrtlp}. Also, the proof that $\gamma^{(n)}\ge\gamma^{(n-1)}$ is similar to the proof that $\psi^{(n)}\ge\psi^{(n-1)}$ in Property \ref{ppt_mono}. Therefore, we omit the details of the proof. 
\end{proof}

\begin{lemm}\label{lem_gamma}
For $n\ge0$, $\gamman\ge \psin$ on $G$. 
\end{lemm}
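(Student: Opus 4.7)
The plan is to prove Lemma 5.4 by induction on $n$, using the comparison principle of Theorem \ref{thm_comparison} together with the auxiliary monotonicity facts in Lemma \ref{lem_gamman}. The base case $n = 0$ is immediate since $\psi^{(0)} \equiv 0 \equiv \gamma^{(0)}$.

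For the inductive step, I assume $\gamma^{(n-1)} \ge \psi^{(n-1)}$ on $G$ and seek to conclude $\gamma^{(n)} \ge \psi^{(n)}$. I will apply the operator $\mathcal{L}$ defined in \eqref{operatorL} (the one with $g_n$ as in \eqref{eqn_gn}, so that by construction $\mathcal{L}\psi^{(n)} = 0$) to $\gamma^{(n)}$. Using the PDE \eqref{eqn_gamman} to replace the second-order/time-derivative block in $\mathcal{L}\gamma^{(n)}$, the surviving terms collect into
\begin{equation*}
\mathcal{L}\gamma^{(n)} = -n\lambda^P\bigl(\gamma^{(n-1)}-\psi^{(n-1)}\bigr) - \alpha\sqrt{n\lambda^P}\bigl(\gamma^{(n)}-\gamma^{(n-1)}\bigr) + \alpha\sqrt{1-\rho^2}\,b^P\bigl(\lambda^P-\underline{\lambda}^P\bigr)\gamma^{(n)}_{\lambda^P} + \alpha\sqrt{C^2 + A^2},
\end{equation*}
where I abbreviate $A = \sqrt{n\lambda^P}(\gamma^{(n)}-\psi^{(n-1)})$ and $C = \sqrt{1-\rho^2}\,b^P(\lambda^P-\underline{\lambda}^P)\gamma^{(n)}_{\lambda^P}$.

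The main obstacle is to bound the square-root term $\sqrt{C^2 + A^2}$ appropriately. I will do this in two steps. First, apply Lemma \ref{lem:ABC} with $B = \sqrt{n\lambda^P}(\gamma^{(n)}-\gamma^{(n-1)})$ to obtain
\begin{equation*}
\sqrt{C^2+A^2} \;\le\; |A-B| + \sqrt{C^2+B^2} \;=\; \sqrt{n\lambda^P}\,\bigl|\gamma^{(n-1)}-\psi^{(n-1)}\bigr| + \sqrt{C^2+B^2}.
\end{equation*}
Second, invoke the elementary bound $\sqrt{C^2+B^2}\le |C|+|B|$. Since Lemma \ref{lem_gamman} gives $\gamma^{(n)}_{\lambda^P}\le 0$ and $\gamma^{(n)}\ge \gamma^{(n-1)}$, I have $|C| = -C$ and $|B| = \sqrt{n\lambda^P}(\gamma^{(n)}-\gamma^{(n-1)})$, and the induction hypothesis makes $|\gamma^{(n-1)}-\psi^{(n-1)}| = \gamma^{(n-1)}-\psi^{(n-1)}$.

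Substituting these bounds back and observing that the $\alpha C$ terms and the $\alpha\sqrt{n\lambda^P}(\gamma^{(n)}-\gamma^{(n-1)})$ terms cancel, the estimate collapses to
\begin{equation*}
\mathcal{L}\gamma^{(n)} \;\le\; -\bigl(n\lambda^P - \alpha\sqrt{n\lambda^P}\bigr)\bigl(\gamma^{(n-1)}-\psi^{(n-1)}\bigr) \;\le\; 0 \;=\; \mathcal{L}\psi^{(n)},
\end{equation*}
where the last inequality uses $\lambda^P > \underline{\lambda}^P \ge \alpha^2$ (so $\sqrt{n\lambda^P} \ge \alpha$ for $n \ge 1$) together with the induction hypothesis. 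Since $\gamma^{(n)}(\lambda^I,\lambda^P,T) = n = \psi^{(n)}(\lambda^I,\lambda^P,T)$, Theorem \ref{thm_comparison} (applied with $v = \psi^{(n)}$ and $w = \gamma^{(n)}$) yields $\psi^{(n)}\le\gamma^{(n)}$ on $G$, completing the induction. Before invoking Theorem \ref{thm_comparison}, I will briefly note that $g_n$ satisfies the required one-sided Lipschitz and growth conditions (already verified in Lemma \ref{lem_gn}) and that $\gamma^{(n)}$ has the polynomial/exponential growth compatible with the comparison hypothesis.
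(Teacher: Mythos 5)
Your proof is correct and follows essentially the same strategy as the paper: induction, computing $\L\gamman$ via the PDE for $\gamman$, bounding the square-root term, concluding $\L\gamman\le 0 = \L\psin$, and invoking Theorem~\ref{thm_comparison}. The only minor difference is that you route through Lemma~\ref{lem:ABC} with $B=\sqrt{n\lp}(\gamman-\gamma^{(n-1)})$ before applying $\sqrt{C^2+B^2}\le|C|+|B|$, whereas the paper applies $\sqrt{C^2+A^2}\le|C|+|A|$ directly; since $A\ge B\ge 0$ the two bounds coincide, so your intermediate step is an unnecessary (but harmless) detour.
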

\begin{proof}
We prove this lemma by induction.  For $n=0$, we have $\gamma^{(0)}=\psi^{(0)}=0$. Assume that for $n\ge1$, we have $\gamma^{(n-1)}\ge \psi^{(n-1)}$, and show that $\gamman\ge \psin$.  For this purpose, define a differential operator $\L$ on $\G$ by \eqref{operatorL}.  Then,  $\L\psi=0$, and 
\begin{equation}
\begin{split}
\L \gamman &= \alpha \sqrt{1-\rho^2} \, b^P\lpa  \gamman_{\lp} + \left(n\lp -\alpha \sqrt{n\lp}\right) \left(\gamman-\gamma^{(n-1)}\right) -n \lp \left(\gamman-\psi^{(n-1)} \right) \\
& \quad+\alpha \sqrt{\left(1-\rho^2 \right) \left(b^P\right)^2 \lpa^2\left( \gamman_{\lp}\right)^2+ n \lp \left(\gamman-\psi^{(n-1)}\right)^2}\\
&\le - \left(n\lp -\alpha \sqrt{n\lp}\right) \left(\gamma^{(n-1)}-\psi^{(n-1)}\right) \le 0 =\L \psin.
\end{split}
\end{equation}
The first inequality above is due to the fact that $\gamman_{\lp}\le 0$,  that $\gamman\ge\gamma^{(n-1)} \ge \psi^{(n-1)}$, and that $\sqrt{A^2+B^2}\le |A|+|B|$. Additionally, we have that $\gamman\left(\li, \lp, T\right)=\psin\left(\li, \lp, T\right)=n$; then, Theorem \ref{thm_comparison} implies that $\gamman\ge \psin$ on $G$. 
\end{proof}
 
 Next, we prove the main result of this section. 
 
\begin{thm}\label{thm_limit}
$\lim_{n\to \infty}\dfrac{1}{n}\Pn(r, \li,\lp,t)=F(r, t) \, \beta\left(\li, \lp, t \right)$ on $G$. 
\end{thm}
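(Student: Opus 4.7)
The plan is to prove the equivalent statement $\lim_{n\to\infty}\tfrac{1}{n}\psi^{(n)} = \beta$ on $G$, which yields the theorem upon multiplication by $F(r,t)$ since $P^{(n)} = F\psi^{(n)}$. By Proposition \ref{lem_Pn_dec} the sequence $\{\tfrac{1}{n}\psi^{(n)}\}$ is non-increasing in $n$, and by Lemma \ref{lem_beta} it is bounded below by $\beta$; hence the pointwise limit $\phi \equiv \lim_n \tfrac{1}{n}\psi^{(n)}$ exists on $G$ and satisfies $\phi \ge \beta$. To obtain the matching inequality $\phi \le \beta$, I would use the auxiliary sequence $\{\gamma^{(n)}\}$ defined by \eqref{eqn_gamman}, since Lemma \ref{lem_gamma} gives $\psi^{(n)} \le \gamma^{(n)}$ and hence $\phi \le \liminf_n \tfrac{1}{n}\gamma^{(n)}$. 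The advantage of $\gamma^{(n)}$ over $\psi^{(n)}$ is that its PDE is linear in $\gamma^{(n)}$, with $\gamma^{(n-1)}$ appearing only as a forcing term.

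Now set $v^{(n)} := \gamma^{(n)} - n\beta$. Subtracting $n$ times \eqref{eqn_beta} from \eqref{eqn_gamman} and using $\gamma^{(n)} - \gamma^{(n-1)} = v^{(n)} - v^{(n-1)} + \beta$, I obtain the linear recursion
\begin{equation*}
\A v^{(n)} - M_n\, v^{(n)} = -M_n\, v^{(n-1)} - \alpha\sqrt{n\lp}\,\beta, \qquad M_n := n\lp - \alpha\sqrt{n\lp},
\end{equation*}
with $v^{(0)} \equiv 0$ and terminal condition $v^{(n)}(\li,\lp,T) = 0$, where $\A$ denotes the linear differential operator in \eqref{eqn_beta} with the zeroth-order term $-\lp\,\cdot$ removed. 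Lemmas \ref{lem_beta} and \ref{lem_gamma} jointly imply $v^{(n)} \ge 0$. Applying Feynman--Kac under the $\tilde Q$-dynamics of $(\li_s,\lp_s)$ from \eqref{eqn:Qtilde_li}--\eqref{eqn:Qtilde_lp} with discount rate $M_n$ gives
\begin{equation*}
v^{(n)}(\li,\lp,t) = \E^{\tilde Q}\!\left[\int_t^T e^{-\int_t^s M_n(\lp_u)\,\d u}\!\left(M_n(\lp_s)\,v^{(n-1)} + \alpha\sqrt{n\lp_s}\,\beta\right)\d s\right].
\end{equation*}
I would then show by induction on $n$ that $\sup_G v^{(n)} \le C\sqrt{n}$ for some constant $C$ independent of $n$; this gives $v^{(n)}/n \to 0$ uniformly, hence $\gamma^{(n)}/n \to \beta$, and therefore $\phi = \beta$, which closes the sandwich.

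The main obstacle is precisely this $O(\sqrt{n})$ a priori estimate. The coefficient $M_n$ of $v^{(n-1)}$ in the Feynman--Kac integrand is of order $n$, but the kernel $\int_t^T e^{-\int_t^s M_n}M_n\,\d s = 1 - e^{-\int_t^T M_n} \le 1$ is a sub-probability measure, so this term contributes no more than $\sup_G v^{(n-1)}$. The genuinely delicate term is the forcing $\alpha\sqrt{n\lp}\,\beta$, which is itself of size $\sqrt{n}$; here the assumption $\alpha \le \sqrt{\ulp}$ together with $\lp \ge \ulp$ gives $M_n \ge c\,n$ for large $n$, so the exponential discount $e^{-\int_t^s M_n}\le e^{-cn(s-t)}$ absorbs the growth and leaves a residual of order $\sqrt{n}\cdot(1/n) = 1/\sqrt{n}$ per step, using standard moment estimates on $\lp$ that follow from its SDE and Assumption \ref{assum_growth}. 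Telescoping over $k = 1,\ldots,n$ yields $\sup_G v^{(n)} \le C\sum_{k=1}^n 1/\sqrt{k} = O(\sqrt{n})$, which suffices to drive $v^{(n)}/n$ to zero and complete the proof.
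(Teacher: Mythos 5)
Your overall strategy is the paper's, up to a rescaling: you work with $v^{(n)} = \gamma^{(n)} - n\beta$ while the paper works with $\Gamma^{(n)} = \frac{1}{n}\gamma^{(n)} - \beta = \frac{1}{n}v^{(n)}$. Your linear recursion for $v^{(n)}$ is exactly $n$ times the paper's PDE \eqref{eqn_Gamman}, and your Feynman--Kac representation, the sub-probability bound $\int_t^T e^{-\int_t^s M_n}M_n\,\d s \le 1$ for the $v^{(n-1)}$ term, the $O(1/\sqrt{n})$ bound on the forcing contribution, and the telescoping sum $\sum_{k\le n}1/\sqrt{k} = O(\sqrt{n})$ all correspond one-for-one to the paper's Lemmas \ref{lem_An}, \ref{lem_Bn}, and the recursion $L_n = L_{n-1} + J/\sqrt{n}$. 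So this is essentially the published proof.

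The one place you cut a corner is in claiming the forcing term is controlled "using standard moment estimates on $\lp$." You do not actually need any moment estimates, and invoking them introduces an unnecessary integrability obligation (the drift is only controlled logarithmically by Assumption \ref{assum_growth}, so $\E^{\tilde Q}[\sqrt{\lp_s}]$ being finite is not free). The paper instead uses a purely pointwise algebraic bound: since $\alpha \le \sqrt{\ulp}$ and $\lp_s > \ulp$, for $n\ge 2$ one has
\[
\alpha\sqrt{n\lp_s} \;=\; \frac{\alpha}{\sqrt{n\lp_s}-\alpha}\,\big(n\lp_s - \alpha\sqrt{n\lp_s}\big)
\;\le\; \frac{\alpha}{\sqrt{2\ulp}-\alpha}\,M_n(\lp_s),
\]
so the forcing integrand is dominated pointwise by a constant multiple of $M_n$, and the same sub-probability identity that handled the $v^{(n-1)}$ term bounds the forcing contribution by $J/\sqrt{n}$ with $J=\alpha\sqrt{2}/(\sqrt{2\ulp}-\alpha)$. (Formally the paper packages this as the PDE comparison Lemmas \ref{lem_An}--\ref{lem_Bn}, but the content is this one-line estimate.) Replacing your appeal to moment bounds with this inequality closes the gap and makes the proof match the paper exactly.
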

\begin{proof}
By Lemmas \ref{lem_beta} and \ref{lem_gamma}, it is sufficient to show that $\lim_{n \to \infty} \left(\frac{1}{n}\gamman -\beta \right)= 0$ since $\frac{1}{n}\gamman - \beta \ge \frac{1}{n}\psin -\beta \ge 0$. 
For $n \ge 1$, define $\Gamman$ on $G$ by $\Gamma^{(n)} = \frac{1}{n}\gamman -\beta$, so we just need to prove that $\lim_{n \to \infty} \Gamman= 0$.  For $n\ge1$, the function $\Gamman$ solves the following PDE:
\begin{equation}\label{eqn_Gamman}
\begin{cases}
\Gamman_t + a^{I,Q} \cdot \lia\Gamman_{\li} + \left[a^{P,Q} - \alpha \sqrt{1-\rho^2} \, b^P \right] \lpa  \Gamman_{\lp}  \\
\quad   + \dfrac{1}{2}\left(b^I \right)^2 \lia^2 \Gamman_{\li \li} + \rho \, b^I \, b^P \lia \lpa \Gamman_{\li \lp} + \dfrac{1}{2} \left( b^P \right)^2 \lpa^2 \Gamman_{\lp \lp} \\
\quad   - \left(n\lp -\alpha \sqrt{n\lp}\right) \Gamman\\
\quad  = -\alpha\sqrt{\dfrac{\lp}{n}}\b-(n-1)\left(\lp -\alpha \sqrt{\dfrac{\lp}{n}}\right)\Gamma^{(n-1)}, \\
\Gamman\left(\li, \lp, T\right)=0,
\end{cases}
\end{equation}
with $0\le \Gamma^{(1)}=\gamma^{(1)}-\b \le 1$ on $G$.  By applying the Feyman-Kac Theorem to \eqref{eqn_Gamman}, we obtain the following expression for $\Gamman$ in terms of $\Gamma^{(n-1)}$:
\begin{equation}
\begin{split}
\Gamman&(\li,\lp,t) =\alpha \, \E^{\tilde{Q}}\left[\int_t^T\sqrt{\dfrac{\lp_s}{n}} \, \b\left(\li_s,\lp_s,s\right) e^{-\int_t^s\left(n\lp_u-\alpha\sqrt{n\lp_u}\right)\d u} \,  \d s \, \bigg| \, \li_t = \li, \lp_t = \lp \right]\\
& + (n-1) \, \E^{\tilde{Q}}\left[\int_t^T\left(\lp_s-\alpha\sqrt{\dfrac{\lp_s}{n}}\right)\Gamma^{(n-1)} e^{-\int_t^s\left(n\lp_u-\alpha\sqrt{n\lp_u}\right)\d u} \, \d s \, \bigg| \, \li_t = \li,  \lp_t = \lp \right], \\
\end{split}
\end{equation}
in which the $\tilde{Q}$-dynamics of $\left\{ \li_t \right\}$ and $\left\{ \lp_t \right\}$ follow, respectively, equations \eqref{eqn:Qtilde_li} and \eqref{eqn:Qtilde_lp}.

Suppose $\Gamma^{(n-1)} \le K_{n-1}$ on $G$ for some $n\ge2$ and for some constant $K_{n-2} \ge 0$.  Note that $\b\le 1$ on $G$, so we get the following inequality:
\begin{equation}\label{temp_ineq}
\begin{split}
\Gamman&(\li,\lp,t) \le \alpha \, \E^{\tilde{Q}}\left[\int_t^T\sqrt{\dfrac{\lp_s}{n}} \, e^{- \int_t^s \left(n\lp_u-\alpha\sqrt{n\lp_u}\right)\d u} \, \d s \, \bigg| \, \li_t = \li, \lp_t = \lp \right]\\
& + (n-1) K_{n-1} \, \E^{\tilde{Q}}\left[\int_t^T\left(\lp_s-\alpha\sqrt{\dfrac{\lp_s}{n}}\right) e^{- \int_t^s \left(n\lp_u-\alpha\sqrt{n\lp_u}\right)\d u} \, \d s \, \bigg| \, \li_t = \li, \lp_t = \lp \right]. \\
\end{split}
\end{equation}
Equivalently, we can write the inequality \eqref{temp_ineq} as
\begin{equation}
\Gamman(\li,\lp,t)\le \dfrac{1}{n^{3/2}}A^{(n)}\left(\li,\lp,t \right) + \dfrac{n-1}{n}K_{n-1}B^{(n)}\left(\li,\lp,t \right),
\end{equation}
in which the functions $A^{(n)}$ and $B^{(n)}$ are defined as
\begin{equation}\label{eqn_An}
A^{(n)}\left(\li,\lp,t \right) = \alpha \, \E^{\tilde{Q}}\left[\int_t^T n \sqrt{\lp_s} \, e^{- \int_t^s\left(n\lp_u-\alpha\sqrt{n\lp_u}\right) \d u} \, \d s \, \bigg| \, \li_t = \li, \lp_t=\lp \right],
\end{equation}
and 
\begin{equation}\label{eqn_Bn}
B^{(n)}\left(\li,\lp,t \right) = \E^{\tilde{Q}}\left[\int_t^T\left(n\lp_s-\alpha\sqrt{n\lp_s}\right) e^{- \int_t^s\left(n\lp_u-\alpha\sqrt{n\lp_u}\right)\d u} \, \d s \, \bigg| \, \li_t = \li, \lp_t=\lp \right].
\end{equation}
After the next two lemmas that give us bounds on $A^{(n)}$ and $B^{(n)}$, respectively, we finish the proof of Theorem \ref{thm_limit}.
\end{proof}

\begin{lemm}\label{lem_An}
For $n\ge2$, $A^{(n)}\le J = \dfrac{\alpha \sqrt{2}}{\sqrt{2 \ulp}-\alpha}$ on $G$, in which $A^{(n)}$ is defined in \eqref{eqn_An}.
\end{lemm}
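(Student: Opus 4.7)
The plan is to reduce $A^{(n)}$ to a computable expression by (i) bounding the integrand by one involving only $n\lp_s$ rather than $\sqrt{\lp_s}$, (ii) bounding the exponent from below by a positive multiple of $\int_t^s n\lp_u \, du$, and then (iii) recognizing the integrand as an exact derivative so the integral telescopes.

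The key inequality is the following lower bound on the exponent: for $n \ge 2$ and $\lp_u \ge \ulp$, we have $\sqrt{n\lp_u} \ge \sqrt{2\ulp} \ge \sqrt{2}\, \alpha$, and so
\begin{equation*}
\frac{\sqrt{n\lp_u} - \alpha}{\sqrt{n\lp_u}} = 1 - \frac{\alpha}{\sqrt{n\lp_u}} \ge 1 - \frac{\alpha}{\sqrt{2\ulp}} = \frac{\sqrt{2\ulp} - \alpha}{\sqrt{2\ulp}}.
\end{equation*}
Multiplying through by $n\lp_u = \sqrt{n\lp_u}\cdot\sqrt{n\lp_u}$, this yields
\begin{equation*}
n\lp_u - \alpha\sqrt{n\lp_u} \ge c \cdot n\lp_u, \qquad c := \frac{\sqrt{2\ulp}-\alpha}{\sqrt{2\ulp}} > 0.
\end{equation*}
Hence $e^{-\int_t^s(n\lp_u - \alpha\sqrt{n\lp_u})\, du} \le e^{-c\int_t^s n\lp_u \, du}$. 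Next I would use the obvious bound $\sqrt{\lp_s} = \lp_s/\sqrt{\lp_s} \le \lp_s/\sqrt{\ulp}$, which gives
\begin{equation*}
A^{(n)}(\li,\lp,t) \le \frac{\alpha}{\sqrt{\ulp}}\, \E^{\tilde Q}\!\left[\int_t^T n\lp_s\, e^{-c\int_t^s n\lp_u\, du}\, ds \,\bigg|\, \li_t=\li,\lp_t=\lp\right].
\end{equation*}

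Then I would observe that the integrand is, up to the factor $-1/c$, the exact derivative in $s$ of $e^{-c\int_t^s n\lp_u\, du}$, so
\begin{equation*}
\int_t^T n\lp_s\, e^{-c\int_t^s n\lp_u\, du}\, ds = \frac{1}{c}\left(1 - e^{-c\int_t^T n\lp_u\, du}\right) \le \frac{1}{c},
\end{equation*}
pathwise, and taking expectations preserves the bound. Assembling the pieces gives
\begin{equation*}
A^{(n)} \le \frac{\alpha}{\sqrt{\ulp}} \cdot \frac{1}{c} = \frac{\alpha}{\sqrt{\ulp}} \cdot \frac{\sqrt{2\ulp}}{\sqrt{2\ulp}-\alpha} = \frac{\alpha\sqrt{2}}{\sqrt{2\ulp}-\alpha} = J,
\end{equation*}
which is the desired bound. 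There is no real obstacle here, beyond spotting the right lower bound on $\sqrt{n\lp_u}-\alpha$ that allows us to factor out $n\lp_u$ and exploit the telescoping structure; the hypothesis $\alpha \le \sqrt{\ulp}$ combined with $n \ge 2$ is exactly what makes $c$ strictly positive and independent of $n$.
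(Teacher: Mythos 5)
Your proof is correct, but it takes a genuinely different route from the paper's. The paper applies the Feynman--Kac theorem to write down the PDE that $A^{(n)}$ solves, then invokes the comparison result of Theorem \ref{thm_comparison}, testing the constant function $\mathbf{J}$ against $A^{(n)}$: one checks that $\mathcal{L}\mathbf{J} = -(n\lp - \alpha\sqrt{n\lp})J + \alpha n\sqrt{\lp} \le 0$, which after dividing through by $\sqrt{n\lp}(\sqrt{n\lp}-\alpha)$ is the same inequality as yours in disguise. You instead estimate the probabilistic representation \eqref{eqn_An} directly: the two pathwise bounds $\sqrt{\lp_s} \le \lp_s/\sqrt{\ulp}$ and $n\lp_u - \alpha\sqrt{n\lp_u} \ge c\,n\lp_u$ with $c = (\sqrt{2\ulp}-\alpha)/\sqrt{2\ulp}$ reduce the integrand to an exact derivative, so the time integral telescopes to $1/c$ pathwise, and the constants assemble to exactly $J$. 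Your approach is more elementary: it avoids both the PDE machinery and the need to verify the Lipschitz and growth conditions \eqref{Lipschitz}--\eqref{growth} required for the comparison principle, and it makes transparent that the constant $J$ arises from the worst-case ratio at $\lp = \ulp$ with $n=2$. The paper's route, on the other hand, is uniform with the rest of Section \ref{sec:limit}, where the comparison principle is the workhorse for all the bounding arguments.
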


\begin{proof}
By the Feyman-Kac Theorem, $A^{(n)}$ in \eqref{eqn_An} solves the following PDE
\begin{equation}
\begin{cases}
A^{(n)}_t + a^{I,Q} \cdot \lia A^{(n)}_{\li} + \left[ a^{P,Q} - \alpha \sqrt{1-\rho^2} \, b^P \right] \lpa A^{(n)}_{\lp} + \dfrac{1}{2}\left(b^I \right)^2 \lia^2 A^{(n)}_{\li \li}   \\
\quad   + \rho \, b^I \, b^P \lia \lpa A^{(n)}_{\li \lp} + \dfrac{1}{2} \left( b^P \right)^2 \lpa^2 A^{(n)}_{\lp \lp}  - \left(n\lp -\alpha \sqrt{n\lp}\right) A^{(n)}\\
\quad  = -\alpha n \sqrt{\lp}, \\
A^{(n)}\left(\li, \lp, T\right)=0. 
\end{cases}
\end{equation}
For $n\ge2$, we define a differential operator  $\L$ by \eqref{operatorL} with $g_n$ replaced by 
\begin{equation}
\begin{split}
\tilde g_n(\li,\lp,t,v, p_1,p_2)& = a^{I,Q} \cdot \lia p_1 + \left[ a^{P,Q} - \alpha \sqrt{1-\rho^2} \, b^P \right] \lpa p_2\\
&\quad - \left(n\lp -\alpha \sqrt{n\lp}\right) v +  \alpha n \sqrt{\lp}.\\
\end{split}
\end{equation}
Since $\tilde g_n$ satisfies conditions \eqref{Lipschitz} and \eqref{growth},  we can apply Theorem \ref{thm_comparison}.  It is clear that $\L A^{(n)}=0$, and by applying the operator $\L$ to $\bf J$, the function that is identically equal to $J$, we get
\begin{equation}
\L {\bf J} = - \left(n\lp -\alpha \sqrt{n\lp}\right) J +  \alpha n \sqrt{\lp} \le 0= \L A^{(n)}. 
\end{equation}
Since $A^{(n)}\left(\li, \lp, T\right)=0\le J$, Theorem \ref{thm_comparison} implies that $A^{(n)}\le J$ on $G$. 
\end{proof}

\begin{lemm}\label{lem_Bn}
For $n\ge2$, $B^{(n)} \le 1$ on $G$, in which $B^{(n)}$ is defined in \eqref{eqn_Bn}.
\end{lemm}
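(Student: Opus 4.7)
The key observation is that the integrand in the definition of $B^{(n)}$ has the structure of an exact derivative in $s$. Specifically, if one sets
$$H(s) = \exp\!\left( - \int_t^s\left(n\lp_u-\alpha\sqrt{n\lp_u}\right)\d u \right),$$
then $H'(s) = -\left(n\lp_s-\alpha\sqrt{n\lp_s}\right) H(s)$ pathwise, so the integral inside the expectation telescopes:
$$\int_t^T \left(n\lp_s-\alpha\sqrt{n\lp_s}\right) e^{- \int_t^s\left(n\lp_u-\alpha\sqrt{n\lp_u}\right)\d u} \, \d s \;=\; -\int_t^T H'(s)\, \d s \;=\; 1 - H(T).$$

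The plan is therefore to check that $0 \le 1 - H(T) \le 1$ pathwise, and then conclude by taking $\tilde{Q}$-expectation. The upper bound $1 - H(T) \le 1$ is automatic provided the exponent is non-negative, i.e.\ provided $n\lp_u - \alpha\sqrt{n\lp_u} \ge 0$ along $\tilde{Q}$-paths. Factoring, $n\lp_u - \alpha\sqrt{n\lp_u} = \sqrt{n\lp_u}\bigl(\sqrt{n\lp_u} - \alpha\bigr)$, and since $\lp_u > \ulp$ on $G$ and $\alpha \le \sqrt{\ulp}$ by the standing assumption on the Sharpe ratio, we get $\sqrt{n\lp_u} \ge \sqrt{n}\sqrt{\ulp} \ge \alpha$ for every $n \ge 1$. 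So the exponent is non-negative pathwise, $H(T) \in (0,1]$, and hence $1 - H(T) \in [0,1)$.

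Taking the conditional $\tilde{Q}$-expectation preserves the inequality, yielding
$$B^{(n)}(\li,\lp,t) = \E^{\tilde{Q}}\!\left[ 1 - H(T) \,\bigg|\, \li_t=\li,\lp_t=\lp\right] \le 1$$
for all $(\li,\lp,t) \in G$ and all $n \ge 2$ (in fact $n \ge 1$). There is no real obstacle here: the entire argument is the observation that the integrand is a perfect differential together with the pathwise sign check enabled by the bound $\alpha \le \sqrt{\ulp}$, so no PDE comparison or induction is needed for this lemma, in contrast to Lemma~\ref{lem_An}.
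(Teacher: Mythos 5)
Your argument is correct, and it takes a genuinely different route than the paper. The paper proves Lemma~\ref{lem_Bn} by the same machinery it uses throughout: it writes down the Feynman--Kac PDE that $B^{(n)}$ satisfies, defines the operator $\L$ of \eqref{operatorL} with an appropriate $\hat{g}_n$, checks that $\L \1 = 0 = \L B^{(n)}$, and invokes the comparison principle of Theorem~\ref{thm_comparison} together with the terminal condition $B^{(n)}(\cdot,T) = 0 \le 1$. Your proof bypasses all of that: you notice that, with $H(s) = \exp\bigl(-\int_t^s (n\lp_u - \alpha\sqrt{n\lp_u})\,\d u\bigr)$, the integrand in \eqref{eqn_Bn} is exactly $-H'(s)$, so the inner integral evaluates pathwise to $1 - H(T)$. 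Since $\lp_u > \ulp$ along $\tilde Q$-paths (the SDE \eqref{eqn:Qtilde_lp} keeps $\lp_u - \ulp$ strictly positive) and $\alpha \le \sqrt{\ulp}$, the exponent is non-negative for every $n \ge 1$, giving $H(T) \in (0,1]$ and hence $1 - H(T) \in [0,1)$ pathwise; taking $\E^{\tilde Q}$ preserves the bound. This is the classical actuarial identity that the density of a killing time integrates to one minus the survival probability, and it delivers the claim more elementarily than the PDE comparison does. What the paper's approach buys is uniformity of method (the same comparison-theorem template is reused for $A^{(n)}$, $\gamma^{(n)}$, $\Gamma^{(n)}$, and so on); what yours buys is transparency and a slightly sharper conclusion ($B^{(n)} < 1$ pointwise, and the bound already holds from $n=1$). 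Both are valid.
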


\begin{proof}
By the Feyman-Kac Theorem, $B^{(n)}$ in \eqref{eqn_Bn} solves the following PDE
\begin{equation}
\begin{cases}
B^{(n)}_t + a^{I,Q} \cdot \lia B^{(n)}_{\li} + \left[ a^{P,Q} - \alpha \sqrt{1-\rho^2} \, b^P \right] \lpa B^{(n)}_{\lp} + \dfrac{1}{2}\left(b^I \right)^2 \lia^2 B^{(n)}_{\li \li}   \\
\quad   + \rho \, b^I \, b^P \lia \lpa B^{(n)}_{\li \lp} + \dfrac{1}{2} \left( b^P \right)^2 \lpa^2 B^{(n)}_{\lp \lp}  - \left(n\lp -\alpha \sqrt{n\lp}\right) B^{(n)}\\
\quad  = -\left(n \lp-\alpha \sqrt{n\lp} \right),\\
B^{(n)}\left(\li, \lp, T\right) = 0. 
\end{cases}
\end{equation}
For $n\ge2$, we define a differential operator $\L$ on $\G$ by \eqref{operatorL} with $g_n$ replaced by 
\begin{equation}
\begin{split}
\hat g_n(\li,\lp,t,v, p_1,p_2)& = a^{I,Q} \cdot \lia p_1 + \left[ a^{P,Q} - \alpha \sqrt{1-\rho^2} \, b^P \right] \lpa p_2\\
&\quad - \left(n\lp -\alpha \sqrt{n\lp}\right) v+ \left(n\lp-\alpha \sqrt{n\lp} \right).\\
\end{split}
\end{equation}
Since $\hat g_n$ satisfies conditions \eqref{Lipschitz} and \eqref{growth},  we can apply Theorem \ref{thm_comparison}.  It is clear that $\L B^{(n)}=0$, and by applying the operator $\L$ to $\1$, we get
$\L \1= - \left(n\lp -\alpha \sqrt{n\lp}\right) +  \left(n\lp-\alpha \sqrt{n\lp} \right) = 0= \L B^{(n)}$.  Since $B^{(n)}\left(\li, \lp, T\right) = 0 \le 1$, Theorem \ref{thm_comparison} implies that $B^{(n)} \le 1$ on $G$. 
\end{proof}

\noindent {\it End of Proof of Theorem \ref{thm_limit}.} By Lemmas \ref{lem_An} and \ref{lem_Bn}, we get the following result: for $n\ge2$, if $\Gamma^{(n-1)} \le K_{n-1}$, then 
\begin{equation}
\Gamman\le K_n \triangleq \dfrac{J}{n^{3/2}}+\dfrac{n-1}{n}K_{n-1}, 
\end{equation}
with $K_1=1$. 
Define $L_n = nK_n$ and note that $L_n=L_{n-1}+\dfrac{J}{\sqrt{n}}$ for $n\ge2$. It follows that 
\begin{equation}
L_n=1+\sum_{i=2}^n \dfrac{J}{\sqrt{i}}\le 1+ J \int_1^n \dfrac{\d x}{\sqrt{x}}\le 1+ 2J\sqrt{n}, \quad n\ge 2, 
\end{equation}
which implies  that on $G$,
\begin{equation}
\Gamman\le K_n \le \dfrac{1}{n}+ \dfrac{2J}{\sqrt{n}}, \quad n\ge 1. 
\end{equation}
$\lim_{n \to \infty} \frac{1}{n}+ \frac{2J}{\sqrt{n}} = 0$; therefore, $\Gamman$ converges to $0$ uniformly on $G$ as $n$ goes to infinity. In other words, $\lim_{n\to \infty}\frac{1}{n}\Pn = F \, \beta$ on $G$.

We end this section with some properties of $\beta$ with the goal of determining the effect of $\rho$ on $\beta$.

\begin{ppt}\label{ppt_beta_simp}
If $q^{\li}$ is independent of $\li$, then $\b = \b(\lp, t)$ is independent of $\li$ and solves the following PDE:
\begin{equation}\label{eqn_beta_simp}
\begin{cases}
\b_t+\left[a^{P,Q} - \alpha \sqrt{1-\rho^2} \, b^P\right] \lpa  \b_{\lp} + \dfrac{1}{2} \left( b^P \right)^2 \lpa^2 \b_{\lp \lp} -  \lp \b = 0, \\
\b\left(\lp, T\right)=1. 
\end{cases}
\end{equation}
\end{ppt}
\begin{proof}
The solution of \eqref{eqn_beta_simp} is independent of $\li$ and also solves \eqref{eqn_beta} when $q^{\li}$ is independent of $\li$. Uniqueness of the solutions of \eqref{eqn_beta} and \eqref{eqn_beta_simp} implies that solutions of the two PDEs are equal. 
\end{proof}

\begin{thm} \label{thm:limcon}
Suppose $q^{\li}$ is independent of $\li$, and define $\hat a \triangleq a^P - \left[ \rho \, q^{\li} + \alpha \sqrt{1-\rho^2} \right] b^P$. Let $\b^{\hat a_i}$ denote the solution of \eqref{eqn_beta_simp} with $\hat a = \hat a_i$, for $i=1,2$. Then, $\b^{\hat a_1} \ge \b^{\hat a_2}$ on $G$ if $\hat a_1 \le \hat a_2$.
\end{thm}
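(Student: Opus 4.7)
The plan is to mimic the comparison argument used in Property \ref{ppt_drift}, exploiting the already-established monotonicity of $\b$ in $\lp$. First I would rewrite \eqref{eqn_beta_simp} in terms of $\hat a$: since $a^{P,Q} = a^P - \rho q^{\li} b^P$ by \eqref{muq}, the drift coefficient in \eqref{eqn_beta_simp} equals exactly $\hat a$, so $\b^{\hat a}$ satisfies
\begin{equation}
\b_t + \hat a \, \lpa \, \b_{\lp} + \tfrac{1}{2}\left(b^P\right)^2 \lpa^2 \b_{\lp \lp} - \lp \, \b = 0,\qquad \b(\lp,T)=1.
\end{equation}

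Next I would define a differential operator $\L$ on $\G$ (now depending only on $\lp$ and $t$) by setting
\begin{equation}
\L v = v_t + \tfrac{1}{2}\left(b^P\right)^2 \lpa^2 v_{\lp\lp} + \hat g(\lp,t,v,v_{\lp}),\qquad \hat g(\lp,t,v,p_2) = \hat a_1 \, \lpa \, p_2 - \lp \, v,
\end{equation}
using $\hat a_1$ (the smaller coefficient) as the reference. The function $\hat g$ is linear in $(v,p_2)$ and hence trivially satisfies the one-sided Lipschitz condition \eqref{Lipschitz} together with the growth condition \eqref{growth} under Assumption \ref{assum_growth}, so an analogue of Theorem \ref{thm_comparison} applies in this reduced one-spatial-variable setting.

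By construction $\L \b^{\hat a_1} = 0$. Applying $\L$ to $\b^{\hat a_2}$ and subtracting the PDE satisfied by $\b^{\hat a_2}$ (i.e., the same equation with $\hat a_1$ replaced by $\hat a_2$) yields
\begin{equation}
\L \b^{\hat a_2} = (\hat a_1 - \hat a_2)\, \lpa \, \b^{\hat a_2}_{\lp}.
\end{equation}
Now I invoke the lemma established just before Lemma \ref{lem_beta} (the non-increasing property of $\b$ in $\lp$), whose proof carries over verbatim to $\b^{\hat a_2}$, to get $\b^{\hat a_2}_{\lp} \le 0$. Combined with the hypothesis $\hat a_1 \le \hat a_2$ and $\lpa > 0$, this gives $\L \b^{\hat a_2} \ge 0 = \L \b^{\hat a_1}$.

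Since the terminal values agree, $\b^{\hat a_1}(\lp,T) = 1 = \b^{\hat a_2}(\lp,T)$, Theorem \ref{thm_comparison} yields $\b^{\hat a_2} \le \b^{\hat a_1}$ on $G$, which is the claim. The only mild obstacle is making sure the one-dimensional version of Theorem \ref{thm_comparison} (with $\li$ absent) applies; this is immediate because dropping the $\li$-dependence only removes terms from the operator, and the growth/Lipschitz conditions on $\hat g$ are easier than those verified in Lemma \ref{lem_gn}.
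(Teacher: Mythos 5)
Your proof is correct and takes essentially the same route as the paper: define the comparison operator using the smaller drift $\hat a_1$, compute $\L\b^{\hat a_2} = (\hat a_1 - \hat a_2)\lpa\b^{\hat a_2}_{\lp}$, and conclude via Theorem \ref{thm_comparison}. You are a bit more careful than the paper in explicitly invoking the preceding lemma to justify $\b^{\hat a_2}_{\lp}\le 0$ (the paper uses this silently), but that is the same key fact.
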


\begin{proof}
Define a differential operator $\L$ on $\G$  by \eqref{operatorL} with $g_n$ replaced by
\begin{equation}\label{g_beta}
\hat g(\li, v,  p) = \hat{a}_1 \cdot \lpa p- \lp v.
\end{equation}
It is straightforward to check that the function $\hat g$ in \eqref{g_beta} satisfies the one-sided Lipschitz condition \eqref{Lipschitz} and the growth condition \eqref{growth}. Since $\b^{\hat{a}_1}$ solves \eqref{eqn_beta_simp} with $\hat{a}=\hat{a}_1$, we have that $\L \b^{\hat a_1}=0$. Apply this operator on $\b^{\hat{a}_2}$ to obtain
\begin{equation}
\L \b^{\hat a_2} = \left(\hat{a}_1 - \hat{a}_2 \right)\lpa\b^{\hat{a}_2}_{\lp} \ge 0 =\L \b^{\hat{a}_1}. 
\end{equation}
Since $\b^{\hat{a}_1}\left(\lp, T\right) = \b^{\hat{a}_2}\left( \lp, T\right)=1$, Theorem \ref{thm_comparison} implies that $\b^{\hat{a}_1} \ge \b^{\hat{a}_2}$ on $G$.
\end{proof}

\begin{remark}  \label{rem:4_1}
When $\rho=1$, namely the the insured individuals and the reference population  face the same uncertainty in their respective hazard rates, the limiting price per contract is reduced by hedging when $\qli$ is less than the pre-specified instantaneous Sharpe ratio $\alpha$.  Indeed, the drift $\hat a = a^P - \left[ \rho \, q^{\li} + \alpha \sqrt{1-\rho^2} \right] b^P$ from Theorem \ref{thm:limcon} equals $a^P - q^{\li} \, b^P$ when $\rho = 1$.  Also, the effect of not allowing hedging can be achieved by setting $\rho = 0$ throughout our work, as discussed in Remark \ref{rmk_pil0}; in that case, the drift $\hat a$ becomes $a^P - \alpha \, b^P$. Thus, according to Theorem \ref{thm:limcon}, the limiting price per contract is reduced when hedging is allowed if $\qli < \alpha$.

In other words, hedging with mortality derivative benefits the insured, through a reduced price, when the market price of mortality risk is lower than that required by the insurance company.  In this limiting case, the risks inherent in the contract can be fully hedged using the interest rate derivative and the mortality derivative. Indeed, the variance of the hedging portfolio goes to $0$ as $n$ goes to infinity when $\rho=1$. Refer to Remark \ref{rmk:21} in which we discuss the mortality risk in the single-life case. So, as $n\to\infty$, the risk coming from the timing of the deaths disappears; compare with \eqref{eq:227}.

 The price of the contract is reduced by transferring the mortality risk to a counterparty who requires a lower compensation for the risk than the insurance company does.  By contrast, for a single pure endowment contract, the volatility in the contract due to the uncertainty of the individual's time of death is not  hedgeable with mortality derivatives  even when $\rho=1$.  In the single-life case, even if $\qli$ is less than $\alpha$, hedging does not guarantee a reduction of the contract price. 
\end{remark}

\begin{cor}\label{cor:41} 
Suppose $q^{\li}$ is independent of $\li$, and let $\b^{a_i}$ denote the solution of \eqref{eqn_beta_simp} with $a^P=a_i$, for $i=1,2$. Then, $\b^{a_1} \ge \b^{a_2}$ on $G$ if $a_1 \le a_2$.
\end{cor}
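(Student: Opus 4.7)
The plan is to obtain this corollary as an immediate consequence of Theorem \ref{thm:limcon} by observing that the map $a^P \mapsto \hat a$ is affine with slope $+1$ in $a^P$. Specifically, from the definition $\hat a = a^P - \bigl[ \rho \, q^{\li} + \alpha \sqrt{1-\rho^2} \bigr] b^P$, the term in brackets does not depend on $a^P$, so if $a_1 \le a_2$ then
\begin{equation*}
\hat a_1 = a_1 - \bigl[ \rho \, q^{\li} + \alpha \sqrt{1-\rho^2} \bigr] b^P \;\le\; a_2 - \bigl[ \rho \, q^{\li} + \alpha \sqrt{1-\rho^2} \bigr] b^P = \hat a_2.
\end{equation*}

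Having reduced the hypothesis of the corollary to the hypothesis of Theorem \ref{thm:limcon}, I would then simply invoke that theorem: since $\b^{a_i}$ solves \eqref{eqn_beta_simp} with $\hat a = \hat a_i$, the ordering $\hat a_1 \le \hat a_2$ gives $\b^{a_1} = \b^{\hat a_1} \ge \b^{\hat a_2} = \b^{a_2}$ on $G$, which is the conclusion. There is no obstacle to speak of, because all of the analytic work (constructing the comparison operator with $\hat g$, verifying the Lipschitz/growth conditions \eqref{Lipschitz}--\eqref{growth}, and applying Theorem \ref{thm_comparison} to the terminal-value problem \eqref{eqn_beta_simp}) has already been carried out in the proof of Theorem \ref{thm:limcon}. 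The corollary is essentially a change of parametrization: monotonicity of the limit $\beta$ in the single parameter $\hat a$ automatically yields monotonicity in the physical drift $a^P$, because the correction $\rho \, q^{\li} b^P + \alpha \sqrt{1-\rho^2} \, b^P$ depends only on $\rho$, $q^{\li}$, $\alpha$, and $b^P$, all of which are held fixed when comparing $\b^{a_1}$ and $\b^{a_2}$.
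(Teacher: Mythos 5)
Your proof is correct and matches the paper's intent: the paper states this as a corollary of Theorem \ref{thm:limcon} without writing out a proof, and the argument is exactly the one you give — the map $a^P \mapsto \hat a$ is affine and increasing in $a^P$ with $\rho$, $q^{\li}$, $\alpha$, $b^P$ held fixed, so the ordering of the $a_i$ transfers to the ordering of the $\hat a_i$ and Theorem \ref{thm:limcon} applies directly.
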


The result above is consistent with our intuition.  Indeed, with a higher drift on the hazard rate, the individual is less likely to survive to time $T$, and, consequently, the (limiting) value of the pure endowment contract is lower. 

\begin{cor}\label{cor_comprho}
Suppose $q^{\li}$ is independent of $\li$, and let $\b^{\rho_i}$ denote the solution of \eqref{eqn_beta_simp} with $\rho =\rho_i $ for $i=1,2$.  Then, $\b^{\rho_1} \le \b^{\rho_2}$ on $G$ if $ \rho_1 \, q^{\li} + \alpha \sqrt{1-\rho_1^2} \le \rho_2 \, q^{\li} + \alpha \sqrt{1-\rho_2^2} $ for all $t \in [0, T]$.
\end{cor}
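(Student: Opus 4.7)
The plan is to observe that under the standing hypothesis that $q^{\li}$ is independent of $\li$, Corollary \ref{cor_comprho} is essentially a cosmetic relabeling of Theorem \ref{thm:limcon}. By Property \ref{ppt_beta_simp}, each $\b^{\rho_i}$ depends only on $(\lp,t)$ and satisfies the simplified PDE \eqref{eqn_beta_simp}. Inspecting that PDE, the only place where $\rho$ appears is in the first-order coefficient
\[
\hat a(\rho) \;=\; a^P - \bigl[\rho\,q^{\li} + \alpha\sqrt{1-\rho^2}\bigr] b^P,
\]
since the diffusion coefficient $\tfrac{1}{2}(b^P)^2\lpa^2$, the zero-order term $-\lp$, and the terminal datum $\b(\lp,T)=1$ are all $\rho$-free (and the cross/$\li$-derivative terms from \eqref{eqn_beta} vanish by Property~\ref{ppt_beta_simp}).

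Consequently, setting $\hat a_i := a^P - [\rho_i q^{\li} + \alpha\sqrt{1-\rho_i^2}] b^P$ for $i=1,2$, the two functions $\b^{\rho_i}$ and $\b^{\hat a_i}$ (in the notation of Theorem~\ref{thm:limcon}) solve identical PDEs with identical terminal data, so by uniqueness $\b^{\rho_i} \equiv \b^{\hat a_i}$ on $G$. Combining the assumed inequality $\rho_1 q^{\li} + \alpha\sqrt{1-\rho_1^2} \le \rho_2 q^{\li} + \alpha\sqrt{1-\rho_2^2}$ with the positivity $b^P(t) \ge \kappa > 0$ yields $\hat a_1 \ge \hat a_2$ pointwise on $[0,T]$. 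Applying Theorem~\ref{thm:limcon} with the roles of the two drift coefficients interchanged gives $\b^{\hat a_2} \ge \b^{\hat a_1}$, which is exactly $\b^{\rho_1} \le \b^{\rho_2}$ on $G$.

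There is no real obstacle here since the comparison principle has already been established in Theorem~\ref{thm_comparison} and harnessed in Theorem~\ref{thm:limcon}; the only point worth emphasizing in the writeup is that $\rho$ does not enter the second-order part of \eqref{eqn_beta_simp}, so a monotonicity statement in $\hat a$ translates immediately into a monotonicity statement in $\rho\,q^{\li} + \alpha\sqrt{1-\rho^2}$.
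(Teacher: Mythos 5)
Your proof is correct and follows the route the paper intends: Corollary \ref{cor_comprho} is stated without a separate proof precisely because it is an immediate specialization of Theorem \ref{thm:limcon}, obtained by observing that $\rho$ enters \eqref{eqn_beta_simp} only through $\hat a(\rho) = a^P - [\rho\,q^{\li} + \alpha\sqrt{1-\rho^2}]\,b^P$ and that $b^P > 0$ reverses the inequality between the drifts. Your explicit unpacking of the relabeling and the sign flip is a clean writeup of exactly this argument.
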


\begin{remark}
A natural question that follows from Corollary \ref{cor_comprho} is when is $f(\rho, t) \triangleq \rho \, q^{\li}(t) + \alpha \sqrt{1-\rho^2}$ decreasing with respect to $\rho$ for $t \in [0, T]$?  Suppose that $\rho > 0$, which is what one expects between the insured and reference populations.  If $f$ is decreasing with respect to $\rho > 0$, then greater positive correlation will lead to a lower per-contract price, an intuitively pleasing result.  It is straightforward to show that $f$ decreases with respect to $\rho$ if and only if
\begin{equation}
\rho > \frac{\qli}{\sqrt{\alpha^2 + \left( \qli \right)^2}}.
\end{equation}
This inequality holds automatically if $\qli < 0$, that is, if the mortality derivative is a so-called {\it natural hedge}, which we discuss more fully in Remark \ref{rem:4_3} below. When $\qli > 0$, it holds for $\rho$ in a neighborhood of 1.
\end{remark}

We have the following special case of Corollary \ref{cor_comprho}.

\begin{cor}\label{cor:Pn}
Suppose $q^{\li}$ is independent of $\li$. If  $\rho \, \qli + \alpha \sqrt{1-\rho^2} < \alpha$, then the limiting price per risk in which hedging is allowed is less than the limiting price with no hedging ($\rho=0$).
\end{cor}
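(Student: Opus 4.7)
The plan is to obtain Corollary \ref{cor:Pn} as a direct specialization of Corollary \ref{cor_comprho} together with Theorem \ref{thm_limit} and Property \ref{ppt_rho0}. The key observation is that the ``no hedging'' case is precisely the case $\rho = 0$: by Remark \ref{rmk_pil0} and Property \ref{ppt_rho0}, the optimal holding in the $q$-forward vanishes when $\rho = 0$, so the pricing PDE for $\Pn$ with $\rho = 0$ coincides with the one that would arise in a market without mortality derivatives. Consequently, the limiting per-contract price without hedging is $F \, \b^{0}$, where $\b^{0}$ is the solution of \eqref{eqn_beta_simp} with $\rho = 0$.

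The first step is to evaluate the quantity $f(\rho) = \rho \, q^{\li} + \alpha \sqrt{1 - \rho^2}$ at the two values of interest: at $\rho_2 = 0$ one gets $f(0) = \alpha$, and at $\rho_1 = \rho$ (the true correlation in the hedging case) one gets $f(\rho) = \rho \, q^{\li} + \alpha \sqrt{1-\rho^2}$. The hypothesis of Corollary \ref{cor:Pn} is exactly $f(\rho_1) < f(\rho_2)$, so the monotonicity criterion in Corollary \ref{cor_comprho} applies with $\rho_1 = \rho$ and $\rho_2 = 0$, yielding $\b^{\rho} \le \b^{0}$ on $G$.

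Multiplying the pointwise inequality $\b^{\rho} \le \b^{0}$ by the positive bond price $F(r,t)$ and invoking Theorem \ref{thm_limit} gives
\begin{equation}
\lim_{n \to \infty} \dfrac{1}{n} P^{(n), \rho} \;=\; F \, \b^{\rho} \;\le\; F \, \b^{0} \;=\; \lim_{n \to \infty} \dfrac{1}{n} P^{(n), 0},
\end{equation}
which is the conclusion. There is no real obstacle here; once one recognizes that $\rho = 0$ encodes the no-hedging regime and that $f(0) = \alpha$, the result is an immediate consequence of the general comparison already proved in Corollary \ref{cor_comprho}.
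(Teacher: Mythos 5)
Your proof is correct and coincides with what the paper intends: the paper presents Corollary \ref{cor:Pn} with no separate argument, calling it simply a ``special case of Corollary \ref{cor_comprho},'' and your write-up supplies exactly the missing specialization (noting $f(0)=\alpha$, applying Corollary \ref{cor_comprho} with $\rho_1=\rho$, $\rho_2=0$, and then multiplying by $F$ via Theorem \ref{thm_limit} and the $\rho=0$ equals no-hedging identification from Remark \ref{rmk_pil0} and Property \ref{ppt_rho0}).

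One small caveat worth being aware of: Corollary \ref{cor_comprho} only yields the non-strict inequality $\b^{\rho}\le\b^{0}$, since the comparison Theorem \ref{thm_comparison} gives $v\le w$. The strict hypothesis $f(\rho)<\alpha$ does not, through the chain of lemmas as written, force a strict inequality of prices; the paper's wording ``less than'' should therefore be read as ``at most,'' or one would need an additional strong-comparison argument. Your proof correctly produces the non-strict inequality, which is all the cited machinery supports.
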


\begin{remark} \label{rem:4_3}
In particular, when $\qli$ is negative (and $\rho$ is positive), the unit price of the contract is reduced by hedging, as demonstrated in Corollary \ref{cor:Pn}.  Since the correlation is usually positive, a mortality derivative with a negative market price of risk $\qli$, that is, a natural hedge, is preferred.  An example of a natural hedge is life insurance, as discussed in \cite{Young2008Pricing}, although strictly speaking this insurance product is not  a mortality derivative traded in the financial market.  Both \cite{BayraktarYoung2007Hedging} and \cite{CoxLin2007Natural} proposed hedging pure endowment or life annuity contracts with life insurance.
\end{remark}


\section{\label{sec:numerical}Numerical Example}

In this section, we demonstrate our result with numerical examples. We assume that the risk-free rate of return $r$ is constant and focus on the effect of the correlation $\rho$  and the market price of mortality risk $q^{\li}$. We also assume that the market price of mortality risk is constant, and, thereby, is automatically independent of $\li$.  In this case, $P^{(n)}$ and  $\lim_{n\to \infty}\frac{1}{n}P^{(n)}$ do not depend on $\li$, as we prove in Properties \ref{ppt_psin_simp} and \ref{ppt_beta_simp}. Moreover, we assume that the hazard rate $\lp$ follows the process in \eqref{lp} with  $a^P$ and $b^P$ constant.  We compute the price for a single contract, $P(r, \lp, t) = e^{-r(T-t)} \, \psi(\lp, t)$, and the limiting price per contract for arbitrarily many insureds, $\lim_{n\to \infty} \frac{1}{n}P^{(n)}(r, \lp, t) = e^{-r(T-t)} \, \b(\lp, t)$, and we use the following parameter values:
\begin{itemize}
\item  The pure endowment contract matures in $T = 10$ years. 
\item The constant riskless rate of return is $r = 0.04$.
\item The drift of the hazard rate is $a^P = 0.04$.
\item The volatility of the hazard rate is $b^P = 0.1$.
\item The minimum hazard rate of the insured individuals is $\ulp = 0.02$.
\item The risk parameter is $\alpha = 0.1$.
\end{itemize}
See Section \ref{sec:appendix} for the algorithm that we use to compute $\psi$ and $\beta$.

In Figure \ref{fig1}, for a variety of values of the market price of mortality risk $\qli$, we  present the price of a single-life contract $P$ and the limiting price per contract $\lim_{n\to \infty}\frac{1}{n}P^{(n)}$.  It follows from Theorem \ref{thm:limcon} that, given a positive correlation $\rho > 0$, the limiting unit price of a pure endowment is greater with a greater market price of mortality risk $\qli$, and the second set of graphs in Figure \ref{fig1} demonstrates this result.  Notice that  the price of the unhedged contract is the price with $\rho = 0$.  Since in the pricing mechanism, we hedge the volatility with the mortality derivative as much as possible to reduce the variability of our hedging portfolio, a large value of $\qli$ could lead to a higher contract price than that of an unhedged one.  Observe this numerically in graphs in Figure \ref{fig1}.

In Remark \ref{rem:4_1}, we concluded that if $q^{\li} < \alpha$ and if $\rho = 1$, then the limiting price per contract is less than the limiting price per contract of an unhedged portfolio of pure endowments.  This result is supported by our numerical work; indeed, the curve for $q^{\li} = 0.15$ in the second set of graphs lies above the unhedged price of approximately $0.343$, the price when $\rho = 0$.

In that same remark, we noted that for a single-life contract, we cannot conclude that the price with hedging will be smaller than the price without hedging, even when $\qli < \alpha$.  This conclusion is also supported by our numerical work; indeed, the curve for $\qli = 0.09$ in the first set of graphs lies above the unhedged price of approximately $0.435$, the price when $\rho = 0$.  

Figure \ref{fig1} also demonstrates the relation between the unit price of a contract and the correlation $\rho$.  Take the limiting price per contract $\lim_{n\to \infty} \frac{1}{n}P^{(n)}$ with $\qli=0.05$, for example.  When $\rho=1$, hedging is preferred to not hedging, in terms of reducing the price of the contract.   By contrast, when $\rho<1$, that is, the two mortality rates $\lp$ and $\li$ are not perfectly correlated, hedging may increase the unit price of the contract such as the case when $\rho=0.8$. This observation indicates that the population basis risk, which is the risk due to the mismatch of the insured population and the reference population, diminishes the effectiveness of hedging. This mismatch, or equivalently, a correlation $\rho<1$,  may lead to a higher unit price for the hedged contract.  See \cite{CoughlanEpsteinSinha2007} for discussion of population basis risk. 

\section{\label{sec:conclusion}Conclusion}

In this paper, we developed a pricing mechanism for pure endowments, assuming that the issuing company hedges its pure endowment risk with bonds and mortality derivatives, and requires compensation for the unhedged part of mortality risk  in the form of a pre-specified instantaneous Sharpe ratio. In our model, we took the hazard rates of the insured population and reference population, as well as the interest rates, to be stochastic.  We derived the pricing formulae for the hedged contracts on single life and on multiple conditionally independent lives.   Also, we obtained the pricing formula for the limiting price per pure endowment contract as the number of the insureds in the portfolio goes to infinity. In each case, the price solves a PDE, and we analyzed these  PDE and thereby determined  properties of the prices of the hedged pure endowments.  The limiting price per contract  solves a linear PDE and represent this value as an expectation with respect to an equivalent martingale measure.  We noted that, in the limiting case, the mortality risk inherent in the pure endowment is fully hedged by the mortality derivative when the correlation between the two hazard rates $\lp$ and $\li$ is $1$. 

To investigate the factors that affect the effectiveness of hedging, we devoted our attention to the market price of the reference mortality risk $\qli$ and the correlation $\rho$ between $\lp$ and $\li$.  Since the correlation $\rho$ is more likely to be positive in reality, we focused on the case for which $\rho\ge0$ during our discussion (and especially in our numerical work) and assumed that the market price of the mortality risk $\qli$ is  independent of $\li$.  We found that hedging with a mortality derivative requiring a negative market price of mortality risk always reduces the price of the contract. This result is consistent with the conclusions in  \cite{BayraktarYoung2007Hedging} and \cite{CoxLin2007Natural} that hedging pure endowments (or life annuities) with life insurance reduces the price of the former.


For the limiting case, we reached a more straightforward conclusion, as we discuss in Remark \ref{rem:4_1}.  Specifically, if $\rho=1$, the condition that $\qli<\alpha$ guarantees a reduction in the per-contract price through hedging.  However, if $\rho<1$, it is possible that hedging with the mortality derivatives increases the price of the contract even if this condition is satisfied.  This result reflects the significance of $\rho$ on the effectiveness of hedging.  We also found that, in our numerical work, hedging with the mortality derivatives is less effective in reducing the variance of the hedging portfolio for pure endowments of a finite number of individuals. 

Our results suggest that, to make it efficient for underwriters to hedge mortality risk and thereby benefit the insured, transparent design of mortality indices and mortality derivatives is essential.   Reducing the market price of the mortality risk $\qli$ is also critical. Therefore, it is important to build up a liquid mortality market and provide more flexible mortality-linked securities in order to reduce $\qli$. 

In our paper, we only investigated the prices of pure endowments and assumed that the mortality derivative is a $q$-forward.  However, we believe that the main qualitative insights will hold in general.

\section{Appendix \label{sec:appendix}}

In this section, we present an algorithm for numerically computing $\psi$.  Recall that in our numerical example, we assume that $q^{\li}$, the market price of mortality risk $\li$, is a constant, as well as $a^P$ and $b^P$.  Then, equation \eqref{eqn_psi} becomes 
\begin{equation}\label{eqn_psi_const}
\begin{cases}
\psi_t  + \left[ a^P - \rho \, \qli b^P \right] \lpa  \psi_{\lp}  + \dfrac{1}{2} \left( b^P \right)^2 \lpa^2 \psi_{\lp \lp}  - \lp \psi \\
\quad = -\alpha \sqrt{\left(1-\rho^2 \right) \left(b^P\right)^2 \lpa^2 \psi_{\lp}^2+\lp \psi^2},\\
\psi\left(\lp, T \right) = 1.
\end{cases}
\end{equation} 

Next, we describe our numerical scheme to compute ${\psi}$.
\begin{description} 
\item[Transformation] Define $\tau=T-t$,  $y=\ln\left( \lp-\ulp \right)$, and $\hat{\psi}\left(y, \tau \right)=\psi\left(\lp, t\right)$.  By \eqref{eqn_psi_const},  $\hat{\psi}$ solves 
\begin{equation} \label{eqn_psi_num}
\begin{cases}
\hat{\psi}_\tau = \hat{a} \, \hat{\psi}_{y} + \dfrac{1}{2} \left( b^P \right)^2 \hat{\psi}_{yy} - \left(e^y+\ulp\right) \hat{\psi} + \alpha \sqrt{\left(1-\rho^2 \right) \left(b^P\right)^2 \hat{\psi}_{y}^2+\left(e^y+\ulp\right) \hat{\psi}^2},\\
\hat{\psi}\left(y, 0\right)=1,
\end{cases}
\end{equation}
in which $\hat{a} = a^{P}-\rho \,  \qli b^P - \frac{1}{2} \left( b^P \right)^2$. 
\item[Boundary Condtions] While equation \eqref{eqn_psi_num} for $\hat{\psi}$ is defined in the domain $\R \times [0,T]$,  we solve it numerically in the domain $[-M, M] \times [0, T]$  such that $e^{-M}$ is approximately zero.  Therefore, we require boundary conditions at $y=\pm M$. 
\begin{enumerate}
\item If $\lp_t = \ulp$, then $\lp_s = \ulp$ for all $s \in [t, T]$. From equation \eqref{eqn_psi_const}, we have that $\psi\left(\ulp,t \right)=\exp\left\{-\left( \ulp-\alpha\ \sqrt{\ulp}\right)(T-t)\right\}$. Thus, it is reasonable to set the boundary condition at $y=-M$ to be $\hat{\psi}\left(-M,\tau \right)=\exp\left\{-\left( \ulp-\alpha\ \sqrt{\ulp}\right)\tau\right\}$.
\item If $\lp_t$ is very large, we expect the individual to die immediately, so the value of the pure endowment is approximately $0$. Thus, we set the boundary condition at $y=M$ to be $\hat{\psi}\left(M,\tau \right)= 0$.
\end{enumerate}
\item [Finite Difference Scheme] We discretize the differential equation \eqref{eqn_psi_num} and get a corresponding difference equation as follows:
\begin{enumerate}
\item Choose the step sizes of $y$ and $\tau$ as $h$ and $k$, respectively, so that $I=2M/h$ and $J=T/k$ are integers.
\item Define $y_i=-M+ih$, $\tau_j=jk$, and $\hat{\psi}_{i,j}=\hat{\psi}\left(y_i,\tau_j\right)$, for $i=0,1,\dots, I$ and $j=0, 1, \dots , J$.  
\item  We use a backward difference in time, central differences in space, and a forward difference for the square-root term. Therefore, we have the following expressions:
\begin{equation}
\begin{cases}
\hat{\psi}_{\tau}\left(y_i,\tau_j\right)=\dfrac{\hat{\psi}_{i,j+1}-\hat{\psi}_{i,j}}{k} + \mathcal{O}(k),\\
\hat{\psi}_{y}\left(y_i,\tau_j\right)=\dfrac{\hat{\psi}_{i+1,j+1}-\hat{\psi}_{i-1,j+1}}{2h} + \mathcal{O}(h^2),\\
\hat{\psi}_{yy}\left(y_i,\tau_j\right)=\dfrac{\hat{\psi}_{i+1,j+1}-2\hat{\psi}_{i,j+1}+\hat{\psi}_{i-1,j+1}}{h^2} + \mathcal{O}(h^2).\\
\end{cases}
\end{equation}
Also, for the non-linear term in \eqref{eqn_psi_num}, we have
\begin{equation}
\begin{split}
&\quad \sqrt{\left(1-\rho^2 \right) \left(b^P\right)^2 \hat{\psi}_{y}^2+\left(e^y+\ulp\right) \hat{\psi}^2}\\
&=  \sqrt{\left(1-\rho^2 \right) \left(b^P\right)^2 \left(\dfrac{\hat{\psi}_{i+1,j}-\hat{\psi}_{i-1,j}}{2h}\right)^2+\left(e^{y_i}+\ulp\right) \hat{\psi}_{i,j}^2}+\mathcal{O}(h).
\end{split}
\end{equation}
Therefore, we approximate \eqref{eqn_psi_num} to order $\mathcal{O}(k+h)$ with the following difference equation:
\begin{equation}\label{eqn_psihat}
\begin{split}
\dfrac{\hat{\psi}_{i,j+1} - \hat{\psi}_{i,j}}{k} &= \hat{a} \, \dfrac{\hat{\psi}_{i+1,j+1}-\hat{\psi}_{i-1,j+1}}{2h} +\dfrac{1}{2}\left(b^P\right)^2\dfrac{\hat{\psi}_{i+1,j+1} -2\hat{\psi}_{i,j+1}+\hat{\psi}_{i-1,j+1}}{h^2} \\
& \quad -\left(e^{y_i}+\ulp\right)\hat{\psi}_{i,j+1} + \alpha A_{i,j},\\
\end{split}
\end{equation}
in which
\begin{equation}\label{Aij}
A_{i,j}=\sqrt{\left(1-\rho^2 \right) \left(b^P\right)^2 \left(\dfrac{\hat{\psi}_{i+1,j}-\hat{\psi}_{i-1,j}}{2h}\right)^2+\left(e^{y_i}+\ulp\right) \hat{\psi}_{i,j}^2} \; .
\end{equation}
If we define $a = \hat{a} \, \frac{k}{2h} - \left(b^P\right)^2 \frac{k}{2h^2}$, $b=1+\left(b^P\right)^2 \frac{k}{h^2} + k \, \ulp$, and $c=-\hat{a} \, \frac{k}{2h} - \left(b^P\right)^2 \frac{k}{2h^2}$,  then \eqref{eqn_psihat} becomes 
\begin{equation}\label{num_temp1}
a \hat{\psi}_{i-1,j+1}+\left(b+k e^{y_i}\right)\hat{\psi}_{i,j+1}+c\hat{\psi}_{i+1,j+1}=\hat{\psi}_{i,j}+\alpha k A_{i,j},
\end{equation}
for $i=1, 2, \dots, I-1$ and $j=0, 1, \dots, J-1$, with the following boundary conditions:
\begin{equation}\label{num_temp2}
\begin{cases}
\left(b+k e^{y_1}\right)\hat{\psi}_{1,j+1}+c\hat{\psi}_{2,j+1}=\hat{\psi}_{1,j}+\alpha k A_{1,j}-a e^{-\left( \ulp-\alpha\ \sqrt{\ulp}\right)(j+1)k}\\
a\hat{\psi}_{I-2,j+1} + \left(b+k e^{y_{I-1}}\right)\hat{\psi}_{I-1,j+1}=\hat{\psi}_{I-1,j}+\alpha k A_{I-1,j},\\
\end{cases}
\end{equation}
for $j=0, 1, \dots, J-1$.  It is convenient to write equations  \eqref{num_temp1}-\eqref{num_temp2}  in matrix form as 
\begin{equation}\label{num_matrix}
{\bf M \hat{\Psi}}_{j+1}={\bf\hat{\Psi}}_{j}+\alpha k {\bf A}_{j}-\left[a e^{-\left( \ulp-\alpha\ \sqrt{\ulp}\right)(j+1)k},0,\dots,0\right]^t. 
\end{equation}
for $j=0, 1, \dots, J-1$, in which the superscript $t$ represents matrix transpose.  In the equation above, ${\bf \hat{\Psi}}_j=\left[\hat{\psi}_{1,j}, \hat{\psi}_{2,j}, \dots, \hat{\psi}_{I-1,j}\right]^t$ and ${\bf A}_j=\left[A_{1,j}, A_{2,j}, \dots, A_{I-1,j}\right]^t$ with $A_{i,j}$ defined in \eqref{Aij}. The matrix ${\bf M}$ is a tri-diagonal matrix with the sub-diagonal identically $a$, with the main diagonal $b+ke^{y_1}, b+ke^{y_2}, \dots, b+ke^{y_{I-1}}$, and with the super-diagonal identically $c$. 
\item Begin with the initial condition $\hat{\psi}_{i,0}=1$, for $i=1,2, \dots, I-1$, or equivalently, ${\bf \hat{\Psi}}_0={\bf 1}$, in which ${\bf 1}$ is an $(I-1)\times1$ column vector of $1$s. Then, solve \eqref{num_matrix} repeatedly for $j = 0, 1, \dots, J-1$ until we reach ${\bf \hat{\Psi}}_J$. 
\end{enumerate}
\end{description}
One can modify this algorithm to compute $\psi^{(n)}$ for any $n>1$ and the limiting result $\beta = \lim_{n\to \infty}\psi^{(n)}$. Computing the latter is particularly straightforward because $\b$ solves a linear PDE.

\newpage

\bibliography{MLS}
\bibliographystyle{abbrvnat}


\begin{figure}
	\centering
		\includegraphics[angle=90, width=0.8\textwidth]{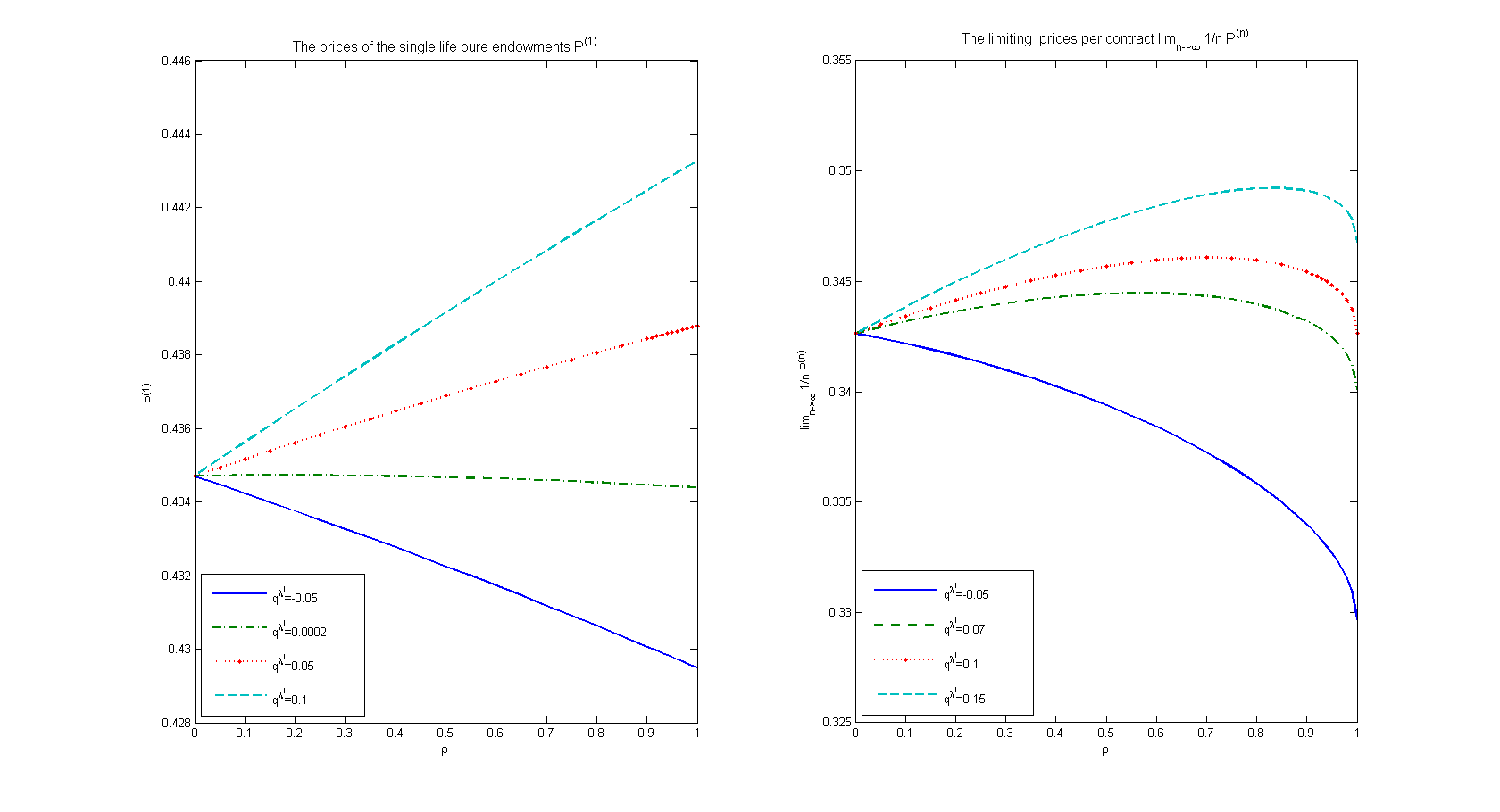}
	\label{fig1}
	\caption{}
\end{figure}

\end{document}